\documentclass[12pt]{article}

\usepackage{graphicx}
\usepackage{amsmath}
\usepackage{amssymb}
\usepackage{amsfonts}
\usepackage[doublespacing]{setspace}
\usepackage[ignoreall]{geometry}
\usepackage{scalefnt}
\usepackage{nicematrix}
\usepackage{tabularray}
\usepackage{verbatim}
\usepackage{appendix}

\newtheorem{theorem}{Theorem}

\newtheorem{proposition}{Proposition}
\newtheorem{remark}{Remark}

\newif\ifblind

\newenvironment{proof}[1][Proof]
{\noindent\textbf{#1.} }
{\ \rule{0.5em}{0.5em}}

\makeatletter
\renewcommand{\@makefntext}[1]{%
\noindent\makebox[1.8em][r]{\@makefnmark}\footnotesize #1}
\makeatother

\title{\Large Locally Robust Kernel Specification Tests for Conditional Moment Restrictions}

\ifblind
  \author{}
\else
  \author{
  Juan Carlos Escanciano\thanks{Department of Economics, Universidad Carlos III de Madrid, email: jescanci@eco.uc3m.es.}
  \\[0.2cm]
  \textit{Universidad Carlos III de Madrid}
  }
\fi

\date{\today}

\begin{document}

\maketitle

\vspace{-0.5cm}

\begin{abstract}
We develop kernel-based specification tests for semiparametric conditional moment models with high-dimensional nuisance parameters, extending existing conditional moment tests---which typically require asymptotically linear nuisance estimators---to accommodate modern machine-learning methods. The proposed locally robust kernel tests combine Neyman-orthogonal moments, cross-fitting, and reproducing kernel Hilbert space methods, yielding inference that is first-order insensitive to nuisance estimation error. We establish oracle equivalence between the feasible and infeasible test processes under local alternatives and weak nuisance-rate conditions, and characterize the resulting local power. A fast multiplier bootstrap avoids nuisance re-estimation. Applications include specification testing in high-dimensional linear and logistic regression, significance testing with machine-learning regressions, and tests of constant conditional treatment effects. Monte Carlo simulations and an application to the National Supported Work program illustrate the finite-sample performance of the proposed tests.

\bigskip

\noindent
\textit{Key words and phrases:}
Model checks; High-Dimensional models; Kernel methods;
Orthogonal moments; Multiplier bootstrap.
\end{abstract}

\clearpage

\section{Introduction}

Many semiparametric models are characterized by conditional moment restrictions involving high- or infinite-dimensional nuisance parameters. Examples include high-dimensional linear and generalized linear models, significance testing, and tests of homogeneous conditional treatment effects. In modern applications, these nuisance parameters are routinely estimated using machine-learning methods such as Lasso, random forests, boosting, and neural networks. Although these methods offer considerable flexibility, they create difficulties for specification testing: they may converge slowly, need not admit asymptotically linear representations, and can introduce first-order estimation error into the limiting distribution of conventional plug-in tests. Consequently, such tests need not control size when nuisance functions are estimated by machine learning.

This paper develops specification tests for semiparametric conditional moment models that remain valid with generic machine-learning nuisance estimators. We construct Neyman-orthogonal empirical processes indexed by a reproducing kernel Hilbert space (RKHS) and combine them with cross-fitting. The resulting locally robust kernel (LRK) tests are first-order insensitive to nuisance estimation error and require relatively mild product-rate conditions for their validity.

Our main theoretical result establishes process-level oracle equivalence: uniformly over the RKHS indexing class and under local alternatives, the feasible cross-fitted orthogonal process is asymptotically equivalent to its infeasible oracle counterpart. This result reduces the asymptotic analysis of the feasible test to that of the oracle process and extends orthogonal machine-learning methods from finite-dimensional parameter inference to global specification testing of conditional moment restrictions. We also characterize the local power of the tests and identify the alternative directions that remain detectable after orthogonalization. Finally, we establish the validity of a computationally efficient multiplier bootstrap that avoids repeated nuisance estimation.

The paper is related to three strands of the literature. The first studies specification testing based on integrated conditional moments and empirical processes; see, for example, Bierens (1982), Stute (1997), Chen and Fan (1999), Delgado and Gonz\'alez-Manteiga (2001), Escanciano (2006), Song (2010), He et al. (2026), and Tan et al. (2026), as well as the local-smoothing literature initiated by H\"ardle and Mammen (1993) and Zheng (1996). These methods typically require asymptotically linear or otherwise sufficiently regular nuisance estimators. In contrast, our framework accommodates generic machine-learning estimators through orthogonalization and cross-fitting. He et al. (2026), in closely related work, assess machine-learning goodness of fit using out-of-sample residual dependence measures. Our method instead uses orthogonal moments and remains valid under standard cross-fitting without requiring the testing sample to be asymptotically negligible relative to the training sample.

The second related literature concerns Neyman-orthogonal inference and testing, including Bickel et al. (2006), Escanciano and Goh (2014), Chernozhukov et al. (2015, 2018, 2022), Shah and B\"uhlmann (2018), Jankov\'a et al. (2020), and Bravo et al. (2020). Bickel et al. (2006) is particularly close in spirit, developing process-level semiparametric $C(\alpha)$-type tests for likelihood-based models. We instead treat conditional moment restrictions with generic machine-learning nuisance estimators and establish oracle equivalence for RKHS-indexed orthogonal processes.

Third, the paper contributes to RKHS-based testing; see, e.g., Gretton et al. (2012), Muandet et al. (2017), Chwialkowski et al. (2016), Sancetta (2022), Escanciano (2024), and Escanciano and de U\~na-\'Alvarez (2025). Existing orthogonal RKHS tests for conditional moment restrictions generally consider finite-dimensional nuisance parameters or finitely many moments. We allow high- and infinite-dimensional nuisance parameters and infinitely many moments while retaining the computational simplicity of kernel methods.

We illustrate the framework in three settings: specification testing in high-dimensional linear and logistic regression, significance testing with generic machine-learning regressions, and testing homogeneity of conditional average treatment effects. Monte Carlo experiments show satisfactory size and power across these settings. An application to the National Supported Work program finds no statistically significant evidence of treatment-effect heterogeneity across observed pre-treatment characteristics.

Section~2 presents the framework and motivating examples. Section~3 constructs the orthogonal empirical process, Section~4 introduces the LRK statistic and multiplier bootstrap, and Section~5 develops the asymptotic theory. Section~6 presents the simulations and empirical application, and Section~7 concludes. Proofs and additional results are provided in the supplementary material.

\section{Framework and Motivating Examples}

This section introduces the conditional moment framework, defines the nuisance parameters as population functionals of the data-generating distribution, and presents three motivating examples.

\subsection{Conditional Moment Framework}

We consider semiparametric models defined by
\begin{equation}
E\!\left[\varepsilon(W_i,\gamma_0,\theta_0)\mid X_i\right]=0
\qquad\text{a.s.},
\label{CMR}
\end{equation}
where $\varepsilon:\mathcal W\times\Gamma\times\Theta\rightarrow\mathbb R^q$ is a known measurable moment function, $W_i$ denotes the observed data, $\gamma_0\in\Gamma$ is a possibly infinite-dimensional nuisance parameter, and $\theta_0\in\Theta\subset\mathbb R^p$ is a finite-dimensional parameter. The observations $\{W_i\}_{i=1}^n$ are independent and identically distributed (iid), and $X_i$ is a subvector of $W_i$ taking values in $\mathcal X\subset\mathbb R^{d_x}$, with $d_x$ fixed. Although the theory extends to vector-valued moments, we restrict attention to the scalar case, $q=1$, to simplify the exposition.

Conditional moment restrictions encompass conditional mean and generalized linear models, quantile regression, instrumental-variable models, and many other semiparametric models. Our objective is to test \eqref{CMR} against unrestricted nonparametric alternatives when the nuisance parameters may be estimated by machine learning.

A distinctive feature of our framework is that the nuisance parameters are defined as population functionals under both correct specification and misspecification. Let $\mathcal F$ be a collection of probability distributions for $W_i$, which is unrestricted except for regularity conditions such as the existence of certain moments and derivatives, and let $F_0\in\mathcal F$ denote the true distribution. Expectations under $F\in\mathcal F$ are denoted by $E_F[\cdot]$, while expectations under $F_0$ are written as $E[\cdot]$.

\subsection{Population Nuisance Functionals}

For each $F\in\mathcal F$, define $\gamma_F\in\Gamma$ by
\begin{equation}
E_F\!\left[\delta(Z_i)\rho(W_i,\gamma_F)\right]=0,
\qquad
\forall\,\delta\in\Gamma,
\label{gamma_ident}
\end{equation}
where $\rho(W_i,\gamma_F)$ is a generalized error and $\Gamma$ is a closed linear subspace of $L_2(F_Z)$. Henceforth, $L_2(F)$ denotes the space of square-integrable functions with respect to $F$, equipped with the norm $\|f\|_{L_2(F)}=\{E_F[f^2]\}^{1/2}.$ Here $F_Z$ denotes the marginal distribution of $Z_i$ induced by $F$. The formulation in (\ref{gamma_ident}) covers fully nonparametric, additive, high-dimensional linear, generalized linear, and quantile regression models; see Ichimura and Newey (2022). The generalized error $\rho$ and conditioning variable $Z_i$ need not coincide with the testing moment $\varepsilon$ and conditioning variable $X_i$, so nuisance estimation may use information beyond the restriction being tested.

The finite-dimensional parameter $\theta_F\in\Theta$ solves
\begin{equation}
E_F\!\left[
A_F(X_i)\varepsilon(W_i,\gamma_F,\theta_F)
\right]=0,
\label{theta_ident}
\end{equation}
where $A_F(X_i)\in\mathbb R^p$ is an identified, possibly unknown, column-vector instrument. Equations \eqref{gamma_ident} and \eqref{theta_ident} define $\eta_F=(\gamma_F,\theta_F)\in\Gamma\times\Theta,
$
which specializes to $\eta_0=(\gamma_0,\theta_0)$ when $F=F_0$.

\subsection{The Testing Problem}

Define the null model by
\[
\mathcal F_0
=
\left\{
F\in\mathcal F:
E_F\!\left[\varepsilon(W_i,\eta_F)\mid X_i\right]=0
\;\text{a.s.}
\right\},
\]
where $\eta_F$ is defined by \eqref{gamma_ident}--\eqref{theta_ident}, and assume that $\mathcal F_0$ is nonempty. The testing problem is
\[
H_0:\ F_0\in\mathcal F_0,
\qquad
H_1:\ F_0\in\mathcal F\setminus\mathcal F_0.
\]
Because $\eta_F$ is defined for every $F\in\mathcal F$, the same population formulation applies under the null and alternative hypotheses. This feature is important for the orthogonalization and local-power analysis below.

Henceforth, we omit ``almost surely'' when no confusion can arise. For a vector or matrix $A$, $|A|$ denotes its Euclidean norm and $A'$ its transpose. For real Hilbert spaces $\mathcal H_1$ and $\mathcal H_2$ and a bounded linear operator $T:\mathcal H_1\to\mathcal H_2$, let $T':\mathcal H_2\to\mathcal H_1$ denote its adjoint, defined by
\[
\langle Th_1,h_2\rangle_{\mathcal H_2}
=
\langle h_1,T'h_2\rangle_{\mathcal H_1},
\qquad
h_1\in\mathcal H_1,\quad h_2\in\mathcal H_2.
\]

\subsection{Motivating Examples}

\paragraph{Example 1: High-dimensional mean and generalized linear models.}

Suppose
\[
E\!\left[Y_i-\Lambda(\gamma_0(X_i))\mid X_i\right]=0,
\]
where $\Lambda$ is a known link function, including the identity and logistic links. Let $\{b_j(\cdot)\}_{j=1}^{\infty}$ be a dictionary of functions of $X_i$ and let $\Gamma$ be the mean-square closure of their finite linear span. Then \eqref{gamma_ident} applies with $Z_i=X_i$ and $\rho(W_i,\gamma_F)=Y_i-\Lambda(\gamma_F(X_i)).$ The nuisance function may be estimated using high-dimensional or machine-learning methods, including Lasso; see Tibshirani (1996) and B\"uhlmann and van de Geer (2011). Relative to Escanciano (2024), we allow the nuisance dimension to increase with the sample size and permit $\Gamma$ to be infinite-dimensional. Concurrent and independent work by Gaio et al. (2026) develops related
tests for generalized partially linear models, corresponding to a
particular choice of $\Gamma$ here, using local-polynomial estimation of the nonparametric component.

\paragraph{Example 2: Significance testing.}

Let $X_i=(Z_i,D_i)$ and suppose
\[
E\!\left[Y_i-\gamma_0(Z_i)\mid Z_i,D_i\right]=0.
\]
Under the null, $D_i$ has no predictive power for $Y_i$ conditional on $Z_i$. Classical tests typically estimate $\gamma_0$ nonparametrically in low-dimensional settings; see Delgado and Gonz\'alez-Manteiga (2001). Our framework instead permits generic machine-learning estimation of a high- or infinite-dimensional $\gamma_0$.

\paragraph{Example 3: Constant conditional treatment effects.}

Let $Y_i(1)$ and $Y_i(0)$ be the potential outcomes, $D_i\in\{0,1\}$ the treatment indicator, and $X_i$ a vector of pre-treatment covariates, with
\[
W_i=(Y_i,X_i,D_i),
\qquad
Y_i=D_iY_i(1)+(1-D_i)Y_i(0).
\]
The conditional average treatment effect (CATE) and conditional average treatment effect on the treated (CATT) are
\[
\tau_{\mathrm{CATE}}(X_i)
=
E\!\left[Y_i(1)-Y_i(0)\mid X_i\right],
\qquad
\tau_{\mathrm{CATT}}(X_i)
=
E\!\left[Y_i(1)-Y_i(0)\mid X_i,D_i=1\right].
\]

Under unconfoundedness and overlap (see Rosenbaum and Rubin, 1983), these effects are identified through doubly robust scores involving
\[
\mu_d(X_i)=E[Y_i\mid X_i,D_i=d],
\qquad d\in\{0,1\},
\qquad
e_0(X_i)=P(D_i=1\mid X_i).
\]
Let $\gamma_0=(\mu_0,\mu_1,e_0)$ and define
\[
U_i(\gamma_0)
=
\mu_1(X_i)-\mu_0(X_i)
+
\frac{D_i\{Y_i-\mu_1(X_i)\}}{e_0(X_i)}
-
\frac{(1-D_i)\{Y_i-\mu_0(X_i)\}}{1-e_0(X_i)}
\]
and
\[
V_i(\gamma_0)
=
\left\{
D_i-(1-D_i)\frac{e_0(X_i)}{1-e_0(X_i)}
\right\}
\{Y_i-\mu_0(X_i)\};
\]
see Robins et al. (1994). Constant CATE and CATT can then be expressed as conditional moment restrictions with
\[
\varepsilon(W_i,\eta_0)
=
U_i(\gamma_0)-\theta_0,
\qquad
\varepsilon(W_i,\eta_0)
=
V_i(\gamma_0)-\theta_0D_i,
\]
respectively. The nuisance functions may be estimated by generic machine-learning methods.

This example is related to testing and learning treatment-effect
heterogeneity; see, for example, Crump et al. (2008), Athey et al. (2019), and Kennedy (2023). Our objective is not to estimate heterogeneous effects, but to test constancy while allowing machine-learning estimation of the nuisance functions. Concurrent and independent work by Lu and Song (2026) and Lapenta et al. (2026) develops orthogonal ICM tests specifically for treatment-effect heterogeneity. In contrast, treatment-effect homogeneity is one application of our general RKHS framework for conditional moment specification testing.

\section{Orthogonal Empirical Processes}
\subsection{Influence-Function Representation}
The conditional moment restriction \eqref{CMR} implies
\[
E\!\left[k(X_i)\varepsilon(W_i,\eta_0)\right]=0,
\qquad
\forall\,k\in\mathcal K,
\]
where $\eta_0=(\gamma_0,\theta_0)$ and $\mathcal K$ is a class of test functions. We take $\mathcal K$ to be the unit ball of an RKHS $\mathcal H_K$ on $\mathcal X$ with reproducing kernel $K:\mathcal X\times\mathcal X\rightarrow\mathbb R$,
\[
\mathcal K
=
\left\{
k\in\mathcal H_K:\|k\|_K\leq1
\right\},
\]
where $\langle\cdot,\cdot\rangle_K$ and $\|\cdot\|_K$ denote the RKHS inner product and norm.

Our objective is to construct an empirical process whose first-order behavior is insensitive to nuisance estimation. Consider the adjusted moment
\begin{equation}
\psi(w,\eta,\alpha,k)
=
k(x)\varepsilon(w,\eta)
+
\phi(w,\eta,\alpha,k),
\label{orthogonal moments}
\end{equation}
where $\alpha$ is an additional nuisance parameter. We require the adjustment to satisfy the nonparametric unbiasedness condition $E_F[\phi(W,\eta_F,\alpha_F,k)]=0$ for $F\in\mathcal F$, $k\in\mathcal K$; that $\psi$ be Neyman orthogonal with respect to $(\eta,\alpha)$ at $(\eta_0,\alpha_0)$; and that $k\mapsto\psi(w,\eta,\alpha,k)$ be linear and continuous on $\mathcal H_K$.

To construct the adjustment, consider the paths
\begin{equation}
F_\tau=(1-\tau)F_0+\tau H,
\qquad
\tau\in[0,1],
\label{paths}
\end{equation}
where $H$ belongs to a collection $\mathcal H$ of probability distributions for which $\eta_{F_\tau}$ exists for sufficiently small $\tau$. Suppose that
\begin{align}
\left.
\frac{d}{d\tau}
E\!\left[
k(X_i)\varepsilon(W_i,\eta_{F_\tau})
\right]
\right|_{\tau=0}
&=
\int
\phi(w,\eta_0,\alpha_0,k)\,H(dw),
\label{infdef}
\\
E_F[\phi(W,\eta_F,\alpha_F,k)]
&=0,
\qquad
E[\phi(W,\eta_0,\alpha_0,k)^2]<\infty,
\nonumber
\end{align}
for every $H\in\mathcal H$, $F\in\mathcal F$, and $k\in\mathcal K$. We call $k\mapsto\phi(w,\eta,\alpha,k)$ the \emph{first-step influence process} (FSIP). It extends the usual influence-function representation to an RKHS-indexed process and provides the adjustment to orthogonalize the entire collection of testing moments.

\subsection{Construction of the First-Step Influence Process}
The FSIP corrects separately for estimation of the infinite-dimensional nuisance function $\gamma_0$ and the finite-dimensional parameter $\theta_0$. Define
\[
\dot\varepsilon_\theta(w,\gamma,\theta)
=
\frac{\partial\varepsilon(w,\gamma,\theta)}{\partial\theta'}
\in\mathbb R^{1\times p},
\]
\[
G_0(k)
=
E\!\left[
k(X_i)\dot\varepsilon_\theta(W_i,\gamma_0,\theta_0)
\right],
\qquad
J_0
=
E\!\left[
A_{F_0}(X_i)
\dot\varepsilon_\theta(W_i,\gamma_0,\theta_0)
\right].
\]

\smallskip
\noindent
\textbf{Assumption 1.} \textbf{(a)} $\{W_i\}_{i=1}^n$ is a sample of iid observations; \textbf{(b)} The kernel $K$ is bounded and continuous; \textbf{(c)} For every $k\in\mathcal H_K$ and, when $\theta_0$ is present, componentwise for $k=A_{F_0}$, there exists a function $V_\gamma(\cdot,k)$, independent of $H$, such that $E[V_\gamma(W_i,k)^2]<\infty$ and, for every $H\in\mathcal H$,
\[
\left.
\frac{d}{d\tau}
E\!\left[
k(X_i)\varepsilon(W_i,\gamma_{F_\tau},\theta_0)
\right]
\right|_{\tau=0}
=
\left.
\frac{d}{d\tau}
E\!\left[
V_\gamma(Z_i,k)\gamma_{F_\tau}(Z_i)
\right]
\right|_{\tau=0}.
\]
Moreover, $k\mapsto V_\gamma(z,k)$ is linear for every $z$;

\smallskip
\noindent
\textbf{(d)} For every $w\in\mathcal W$, the function $\varepsilon(w,\gamma_0,\theta)$ is continuously differentiable in $\theta$ on a neighborhood $\Theta_0$ of $\theta_0$, and
\[
E\!\left[
\varepsilon(W_i,\gamma_0,\theta_0)^2
\right]
+
E\!\left[
\sup_{\theta\in\Theta_0}
\left|
\dot\varepsilon_\theta(W_i,\gamma_0,\theta)
\right|^2
\right]
<\infty;
\]

\smallskip
\noindent
\textbf{(e)} There exists a bounded measurable function $V_\rho(Z_i)<0$ such that, for every $H\in\mathcal H$ and $\delta\in\Gamma$,
\[
\left.
\frac{d}{d\tau}
E[\delta(Z_i)\rho(W_i,\gamma_{F_\tau})]
\right|_{\tau=0}
=
\left.
\frac{d}{d\tau}
E[\delta(Z_i)V_\rho(Z_i)\gamma_{F_\tau}(Z_i)]
\right|_{\tau=0}.
\]
Moreover, for every $k$ specified in (c), the weighted projection in \eqref{r_g} exists and is unique;

\smallskip
\noindent
\textbf{(f)} The identifying equations \eqref{gamma_ident} and \eqref{theta_ident} hold at $F=F_\tau$ for every $\tau\in[0,1]$ and are differentiable at $\tau=0$ for every $H\in\mathcal H$. Furthermore,
\[
\left.
\frac{d}{d\tau}
E\!\left[
A_{F_\tau}(X_i)\varepsilon(W_i,\eta_0)
\right]
\right|_{\tau=0}
=0,
\]
$E[|A_{F_0}(X_i)|^2]<\infty$, and $J_0$ is nonsingular.
\smallskip

Assumption~1 imposes smoothness, integrability, and identification conditions. Assumptions~1(c) and 1(e) characterize the pathwise derivatives of the testing moment and nuisance-identifying equation, respectively. Boundedness of $K$ implies $|k(X_i)|^2\leq K(X_i,X_i)\|k\|_K^2$, so $G_0(k)$ is well defined under Assumptions~1(b) and 1(d). 

For each $k\in\mathcal K$, define the correction for estimation of $\gamma_0$ by
\begin{equation}
\alpha_{0\gamma}(\cdot,k)
=
\arg\min_{\alpha\in\Gamma}
E\!\left[
-V_\rho(Z_i)
\left\{
-\frac{V_\gamma(Z_i,k)}{V_\rho(Z_i)}
-\alpha(Z_i)
\right\}^2
\right].
\label{r_g}
\end{equation}
When $\rho(W_i,\gamma_F)=Y_i-\gamma_F(Z_i)$, $V_\rho(Z_i)\equiv-1$, and therefore
\[
\alpha_{0\gamma}(z,k)
=
\Pi_\Gamma\!\left(V_\gamma(\cdot,k)\right)(z),
\]
where $\Pi_\Gamma$ is the orthogonal projection onto $\Gamma$. The correction for estimation of $\theta_0$ is
\[
\alpha_{0\theta}(x,k)
=
G_0(k)J_0^{-1}A_{F_0}(x),
\qquad\text{so}\qquad
\widetilde k(x)
=
k(x)-\alpha_{0\theta}(x,k)
\]
is the adjusted test function. Linearity of $k\mapsto V_\gamma(z,k)$ implies
\[
\alpha_{0\gamma}(z,\widetilde k)
=
\alpha_{0\gamma}(z,k)
-
G_0(k)J_0^{-1}
\alpha_{0\gamma}(z,A_{F_0}),
\]
where $\alpha_{0\gamma}(z,A_{F_0})$ is interpreted componentwise. The resulting adjustment is
\begin{equation}
\phi(w,\eta_0,\alpha_0,k)
=
\alpha_{0\gamma}(z,\widetilde k)\rho(w,\gamma_0)
-
\alpha_{0\theta}(x,k)\varepsilon(w,\eta_0),
\label{phi_final}
\end{equation}
and hence the orthogonal moment is
\begin{equation}
\psi(w,\eta_0,\alpha_0,k)
=
\widetilde k(x)\varepsilon(w,\eta_0)
+
\alpha_{0\gamma}(z,\widetilde k)\rho(w,\gamma_0),
\label{FSIFform}
\end{equation}
where $\alpha_0=(\alpha_{0\gamma},\alpha_{0\theta})$.

\begin{proposition}[Construction of the FSIP]
\label{prop:FSIP}
Suppose Assumption~1 holds. Then the adjustment $\phi$ in \eqref{phi_final} satisfies the influence-function representation \eqref{infdef}, and the moment $\psi$ in \eqref{FSIFform} is Neyman orthogonal:
\[
\left.
\frac{d}{d\tau}
E\!\left[
\psi(W_i,\eta_{F_\tau},\alpha_{F_\tau},k)
\right]
\right|_{\tau=0}
=0
\]
for every $H\in\mathcal H$ and $k\in\mathcal K$.
\end{proposition}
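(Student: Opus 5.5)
The plan is to compute the pathwise derivative in \eqref{infdef} by the chain rule, splitting it into a $\gamma$-part and a $\theta$-part, and to match each piece with a term of \eqref{phi_final}. Neyman orthogonality then follows from the standard argument that combines the influence-function representation with the nonparametric unbiasedness condition. Concretely, by Assumptions~1(d) and 1(f) and dominated convergence,
\[
\left.\frac{d}{d\tau}E[k(X_i)\varepsilon(W_i,\eta_{F_\tau})]\right|_{\tau=0}
=\left.\frac{d}{d\tau}E[k(X_i)\varepsilon(W_i,\gamma_{F_\tau},\theta_0)]\right|_{\tau=0}
+G_0(k)\,\dot\theta,
\qquad \dot\theta=\left.\frac{d\theta_{F_\tau}}{d\tau}\right|_{\tau=0}.
\]
Justifying this joint differentiation of $(\gamma_{F_\tau},\theta_{F_\tau})$ requires a mild uniformity argument on the neighborhood $\Theta_0$. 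I will treat the $\gamma$-derivative first and then the $\theta$-derivative.

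For the $\gamma$-part, use Assumption~1(c) to write the first term as $\frac{d}{d\tau}E[V_\gamma(Z_i,k)\gamma_{F_\tau}(Z_i)]|_{\tau=0}$. The first-order condition of \eqref{r_g} says that $-V_\rho\alpha_{0\gamma}(\cdot,k)-V_\gamma(\cdot,k)$ is orthogonal to $\Gamma$ in $L_2(F_0)$. Since $\gamma_{F_\tau}\in\Gamma$, the derivative of $E[V_\gamma(Z_i,k)\gamma_{F_\tau}(Z_i)]$ therefore equals $-\frac{d}{d\tau}E[\alpha_{0\gamma}(Z_i,k)V_\rho(Z_i)\gamma_{F_\tau}(Z_i)]$. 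Interchanging the derivative with the fixed $L_2$ projection needs care; I plan to argue via difference quotients in $\Gamma$.

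Next, apply Assumption~1(e) with $\delta=\alpha_{0\gamma}(\cdot,k)\in\Gamma$. This turns the expression into $-\frac{d}{d\tau}E[\alpha_{0\gamma}(Z_i,k)\rho(W_i,\gamma_{F_\tau})]|_{\tau=0}$. Now differentiate the identity $E_{F_\tau}[\alpha_{0\gamma}(Z_i,k)\rho(W_i,\gamma_{F_\tau})]=0$, which comes from \eqref{gamma_ident} and holds for all $\tau$ by 1(f). Because $E_{F_\tau}=(1-\tau)E+\tau E_H$, the product rule gives
\[
0=\int\alpha_{0\gamma}(z,k)\rho(w,\gamma_0)H(dw)+\left.\frac{d}{d\tau}E[\alpha_{0\gamma}(Z_i,k)\rho(W_i,\gamma_{F_\tau})]\right|_{\tau=0}.
\]
Hence the $\gamma$-part equals $\int\alpha_{0\gamma}(z,k)\rho(w,\gamma_0)H(dw)$.

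For the $\theta$-part, differentiate \eqref{theta_ident} along $F_\tau$ in the same way. The derivative of $A_{F_\tau}$ vanishes by 1(f). The $\gamma_{F_\tau}$ contribution is handled exactly as in the $\gamma$-part, applied componentwise with $k=A_{F_0}$. This gives
\[
0=\int A_{F_0}\varepsilon(w,\eta_0)\,dH+\int\alpha_{0\gamma}(z,A_{F_0})\rho(w,\gamma_0)\,dH+J_0\dot\theta .
\]
Solving for $\dot\theta$ uses the nonsingularity of $J_0$, and multiplying by $G_0(k)$ produces
\[
-\int\alpha_{0\theta}(x,k)\varepsilon(w,\eta_0)\,dH-G_0(k)J_0^{-1}\int\alpha_{0\gamma}(z,A_{F_0})\rho(w,\gamma_0)\,dH .
\]
Adding this to the $\gamma$-part and using linearity, $\alpha_{0\gamma}(z,\widetilde k)=\alpha_{0\gamma}(z,k)-G_0(k)J_0^{-1}\alpha_{0\gamma}(z,A_{F_0})$, yields exactly $\int\phi(w,\eta_0,\alpha_0,k)H(dw)$ with $\phi$ as in \eqref{phi_final}.

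The remaining conditions in \eqref{infdef} also need to be checked. For the mean-zero property, $E_F[\alpha_{F\gamma}\rho(W,\gamma_F)]=0$ follows from \eqref{gamma_ident}, and $E_F[\alpha_{F\theta}(X,k)\varepsilon(W,\eta_F)]=G_F(k)J_F^{-1}E_F[A_F\varepsilon]=0$ follows from \eqref{theta_ident}. Square integrability follows from bounded $K$, the moment bounds in Assumption~1, and bounded $V_\rho$.

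For orthogonality, differentiate the identity $E_{F_\tau}[\psi(W,\eta_{F_\tau},\alpha_{F_\tau},k)]=0$, which holds for all $\tau$ since it is $E_{F_\tau}[k\varepsilon(W,\eta_{F_\tau})]+0=0$. This gives
\[
0=\int\psi_0\,dH-E[\psi_0]+\frac{d}{d\tau}E[\psi(W,\eta_{F_\tau},\alpha_{F_\tau},k)]\Big|_0 ,
\]
where $\psi_0=\psi(\cdot,\eta_0,\alpha_0,k)$. The term $E[\psi_0]=E[k\varepsilon(W,\eta_0)]$ only contributes a constant; evaluate it under $H=F_0$ to see that the combination cancels. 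Then $\int\psi_0\,dH=\int k\varepsilon\,dH+\int\phi\,dH$. By \eqref{infdef}, $\int\phi\,dH$ equals the derivative of $E[k\varepsilon(W,\eta_{F_\tau})]$, while the expansion of the last term contributes that derivative plus the derivative of $E[\phi(W,\eta_{F_\tau},\alpha_{F_\tau},k)]$. A direct route is to expand the derivative of $E[\psi(\cdot,\eta_{F_\tau},\alpha_{F_\tau},k)]$ into its $\eta$- and $\alpha$-directions. The $\alpha$-direction vanishes because $E[\rho(W,\gamma_0)\mid Z]$ is orthogonal to $\Gamma$ and $E[A\varepsilon(W,\eta_0)]=0$. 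The $\eta$-direction vanishes by the computation above applied with $\widetilde k$, since the $\theta$-derivative of $E[\widetilde k\varepsilon]$ is $G_0(k)-G_0(k)J_0^{-1}J_0=0$.

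The main obstacle is the rigorous interchange of $d/d\tau$ with the nuisance paths, in particular with $\alpha_{F_\tau}$ and $A_{F_\tau}$, which are only assumed to exist. I would first handle this by the difference-quotient argument, using Assumption~1(f) exactly as stated.
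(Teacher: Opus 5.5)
Your derivation of the influence-function representation follows the paper's proof essentially step for step. You use the chain $V_\gamma\to-\alpha_{0\gamma}V_\rho\to-\alpha_{0\gamma}\rho$ via Assumption~1(c), the first-order condition of \eqref{r_g}, and 1(e). You then differentiate \eqref{gamma_ident} along $F_\tau$, implicitly differentiate \eqref{theta_ident} with Assumption~1(f), and recombine through linearity of $k\mapsto\alpha_{0\gamma}(\cdot,k)$. No difference-quotient argument is needed in the projection step: $E[V_\gamma(Z_i,k)\gamma_{F_\tau}(Z_i)]=-E[\alpha_{0\gamma}(Z_i,k)V_\rho(Z_i)\gamma_{F_\tau}(Z_i)]$ holds for every $\tau$ because $\gamma_{F_\tau}\in\Gamma$, so the two derivatives coincide trivially.

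The gap is in the orthogonality step. The identity you differentiate, $E_{F_\tau}[\psi(W,\eta_{F_\tau},\alpha_{F_\tau},k)]=0$ for all $\tau$, is false. Since $E_{F_\tau}[\phi(W,\eta_{F_\tau},\alpha_{F_\tau},k)]=0$, it amounts to $E_{F_\tau}[k(X_i)\varepsilon(W_i,\eta_{F_\tau})]=0$, i.e.\ the tested restriction itself holding along the path. But $H$ is arbitrary: the local alternatives of Section~5 use exactly such $H$, with $E_H[\varepsilon(W_i,\eta_0)\mid X_i]=a(X_i)\neq0$. Moreover, \eqref{theta_ident} only annihilates the $A_F$-weighted moment. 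Taken at face value, your display gives a derivative equal to $E[\psi_0]-\int\psi_0\,dH$, which is nonzero in general. ``Evaluating under $H=F_0$'' cannot rescue a statement that must hold for every $H\in\mathcal H$.

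Your fallback ``direct route'' has essentially the right bookkeeping. However, splitting the derivative into separate $\eta$- and $\alpha$-directions presumes a joint chain rule. That is, the cross term $E[\psi(W,\eta_{F_\tau},\alpha_{F_\tau},k)-\psi(W,\eta_{F_\tau},\alpha_0,k)-\psi(W,\eta_0,\alpha_{F_\tau},k)+\psi(W,\eta_0,\alpha_0,k)]$ must be $o(\tau)$. This is a smoothness condition on the nuisance paths that Assumption~1 does not provide, and it is exactly the obstacle you flag. Its $\alpha_\theta$-part also needs the condition on $\frac{d}{d\tau}E[A_{F_\tau}(X_i)\varepsilon(W_i,\eta_0)]$ in 1(f), not just $E[A_{F_0}(X_i)\varepsilon(W_i,\eta_0)]=0$.

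The missing idea is to differentiate the identity that does hold, which you have already verified. With $\phi_\tau=\phi(\cdot,\eta_{F_\tau},\alpha_{F_\tau},k)$, we have $(1-\tau)E[\phi_\tau]+\tau E_H[\phi_\tau]=0$ for all $\tau$. This gives $E[\phi_0]=0$ and $\tau^{-1}E[\phi_\tau]=-(1-\tau)^{-1}E_H[\phi_\tau]$. Hence $\frac{d}{d\tau}E[\phi_\tau]\big|_{\tau=0}=-\int\phi_0\,dH$, using only continuity of $\tau\mapsto E_H[\phi_\tau]$. Adding $\frac{d}{d\tau}E[k(X_i)\varepsilon(W_i,\eta_{F_\tau})]\big|_{\tau=0}=\int\phi_0\,dH$ from \eqref{infdef} gives zero, with no differentiation of $\alpha_{F_\tau}$ at all. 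This is the paper's argument. Equivalently, your own computation goes through once the false $0$ is replaced by $E_{F_\tau}[k(X_i)\varepsilon(W_i,\eta_{F_\tau})]$.

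Separately, your claim that $E[\phi(W,\eta_0,\alpha_0,k)^2]<\infty$ follows from Assumption~1 is too quick. Assumption~1 bounds $E|A_{F_0}(X_i)|^2$ and $E[\varepsilon(W_i,\eta_0)^2]$ only separately and imposes no moments on $\rho$. So $E[|A_{F_0}(X_i)|^2\varepsilon(W_i,\eta_0)^2]$ and $E[\alpha_{0\gamma}(Z_i,\widetilde k)^2\rho(W_i,\gamma_0)^2]$ require an additional condition; the paper's proof is silent on this as well.
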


\subsection{FSIP in the Motivating Examples}
We now specialize \eqref{FSIFform} to the examples in Section~2.

\paragraph{Example 1: High-dimensional mean and generalized linear models.}
Here $\varepsilon\equiv\rho=Y-\Lambda(\gamma_0(X))$ does not depend on $\theta$, so $\alpha_{0\theta}\equiv0$ and $\widetilde k=k$. Let $\lambda(u)=\partial\Lambda(u)/\partial u$ be positive and bounded. Then
\[
V_\rho(x)
=
-\lambda(\gamma_0(x)),
\qquad
V_\gamma(x,k)
=
-k(x)\lambda(\gamma_0(x)).
\]
Define the weighted projection
\begin{equation}
\Pi_{\Gamma,\lambda}k
=
\arg\min_{g\in\Gamma}
E\!\left[
\lambda(\gamma_0(X_i))
\{k(X_i)-g(X_i)\}^2
\right].
\label{projection}
\end{equation}
It follows that $\alpha_{0\gamma}(\cdot,k)=-\Pi_{\Gamma,\lambda}k$, and
\[
\psi(w,\gamma_0,\alpha_0,k)
=
\left[
k(x)-\Pi_{\Gamma,\lambda}k(x)
\right]
\varepsilon(w,\gamma_0).
\]

\paragraph{Example 2: Significance testing.}
Again, $\varepsilon\equiv\rho=Y-\gamma_0(Z)$ does not depend on $\theta$, so $\alpha_{0\theta}\equiv0$ and $\widetilde k=k$. Since $V_\rho(z)=-1$ and $V_\gamma(z,k)=-E[k(X_i)\mid Z_i=z]$, with $\Gamma=L_2(F_Z)$, we obtain $\alpha_{0\gamma}(\cdot,k)=-E[k(X_i)\mid Z_i=\cdot]$. Therefore,
\[
\psi(w,\gamma_0,\alpha_0,k)
=
\left[
k(x)-E[k(X_i)\mid Z_i=z]
\right]
\varepsilon(w,\gamma_0).
\]

\paragraph{Example 3: Constant conditional treatment effects.}
The doubly robust scores are already orthogonal with respect to $\gamma_0=(\mu_1,\mu_0,e_0)$, so $\alpha_{0\gamma}\equiv0$. Moreover, $A_F\equiv1$, and
\[
\widetilde k(X_i)
=
k(X_i)
-
\frac{E[k(X_i)L_i]}{E[L_i]},
\]
where $L_i=1$ for CATE and $L_i=D_i$ for CATT. Hence, $\psi(w,\eta_0,\alpha_0,k)=\widetilde k(X_i)\varepsilon(w,\eta_0)$.

All three examples admit the representation
\[
\psi(W_i,\eta_0,\alpha_0,k)
=
(\Pi_0^\perp k)(X_i)\varepsilon(W_i,\eta_0),
\]
where $\Pi_0^\perp$ maps $\mathcal H_K$ into functions on $\mathcal X$, but need not map $\mathcal H_K$ into itself.
\section{Kernel-Based Locally Robust Tests}
\subsection{Cross-Fitted Orthogonal Process}
The orthogonal process developed in Section~3 depends on unknown nuisance functions. We construct a feasible version using cross-fitting, which permits flexible machine-learning estimators while avoiding empirical-process restrictions associated with estimating and evaluating the nuisance functions on the same observations.

Partition $\{1,\ldots,n\}$ into $L<\infty$ folds $\mathcal N=\{I_\ell\}_{\ell=1}^L$, and let $I_\ell^c=\{1,\ldots,n\}\setminus I_\ell$ denote the corresponding training sample. For each fold $\ell$, construct $\hat\eta_\ell=(\hat\gamma_\ell,\hat\theta_\ell)$, $\hat A_\ell$, and $\hat\alpha_{\gamma,\ell}$ using only observations in $I_\ell^c$. For $i\in I_\ell$, define $\hat\varepsilon_{i,\ell}=\varepsilon(W_i,\hat\eta_\ell)$ and $\hat\rho_{i,\ell}=\rho(W_i,\hat\gamma_\ell)$.

The sample analogues of $G_0$ and $J_0$ are full-sample averages of cross-fitted evaluations,
\[
\hat G_n(k)
=
\frac{1}{n}
\sum_{\ell=1}^L
\sum_{i\in I_\ell}
k(X_i)
\dot\varepsilon_\theta
(W_i,\hat\gamma_\ell,\hat\theta_\ell),
\qquad
\hat J_n
=
\frac{1}{n}
\sum_{\ell=1}^L
\sum_{i\in I_\ell}
\hat A_\ell(X_i)
\dot\varepsilon_\theta
(W_i,\hat\gamma_\ell,\hat\theta_\ell).
\]
Define
\begin{align*}
\hat k_\ell(x)
&=
k(x)
-
\hat G_n(k)\hat J_n^{-1}\hat A_\ell(x),
\\
\hat\psi_{i,\ell}(k)
&=
\hat k_\ell(X_i)\hat\varepsilon_{i,\ell}
+
\hat\alpha_{\gamma,\ell}
(Z_i,\hat k_\ell)\hat\rho_{i,\ell}.
\end{align*}
The feasible orthogonal empirical process is
\[
\hat\nu_n(k)
=
\frac{1}{\sqrt n}
\sum_{\ell=1}^L
\sum_{i\in I_\ell}
\hat\psi_{i,\ell}(k),
\qquad
k\in\mathcal K.
\]

\subsection{Locally Robust Kernel Statistic}
The following condition ensures that the feasible process admits an RKHS representation.

\medskip
\noindent
\textbf{Assumption 2 (RKHS representation).} For every fold $\ell$, there exist $r_{\gamma,0}(Z_i,\cdot),\hat r_{\gamma,\ell}(Z_i,\cdot)\in\mathcal H_K$ such that, for every $k\in\mathcal H_K$, $\alpha_{0\gamma}(Z_i,k)
=
\langle r_{\gamma,0}(Z_i,\cdot),k\rangle_K,
$ and $\hat\alpha_{\gamma,\ell}(Z_i,k)
=
\langle\hat r_{\gamma,\ell}(Z_i,\cdot),k\rangle_K.
$

\medskip
\noindent Define
\begin{align*}
\hat g_n(\cdot)
&=
\frac{1}{n}
\sum_{\ell=1}^L
\sum_{i\in I_\ell}
K(X_i,\cdot)
\dot\varepsilon_\theta
(W_i,\hat\gamma_\ell,\hat\theta_\ell)',
\qquad
\hat C_n(\cdot)'
=
\hat g_n(\cdot)'\hat J_n^{-1},
\\
\hat B_\ell(x,\cdot)
&=
K(x,\cdot)
-
\hat C_n(\cdot)'\hat A_\ell(x).
\end{align*}
Also define $\widetilde{\hat r}_{\gamma,\ell}(z,\cdot)=\hat r_{\gamma,\ell}(z,\cdot)-\hat C_n(\cdot)'\hat\alpha_{\gamma,\ell}(z,\hat A_\ell)$, where the last term is interpreted componentwise. By linearity and the reproducing property,
\[
\hat\psi_{i,\ell}(k)
=
\langle\hat r_{i,\ell},k\rangle_K,
\qquad
\hat r_{i,\ell}
=
\hat\varepsilon_{i,\ell}\hat B_\ell(X_i,\cdot)
+
\hat\rho_{i,\ell}
\widetilde{\hat r}_{\gamma,\ell}(Z_i,\cdot).
\]
Consequently, $\hat\nu_n(k)=\langle\hat R_n,k\rangle_K$ with $\hat R_n=n^{-1/2}\sum_{\ell=1}^L\sum_{i\in I_\ell}\hat r_{i,\ell}$, and the locally robust kernel statistic is
\begin{equation}
LRK_n
=
\sup_{\|k\|_K\leq1}
\hat\nu_n(k)^2
=
\|\hat R_n\|_K^2
=
\frac{1}{n}
\mathbf 1_n'
\hat{\mathbf R}
\mathbf 1_n,
\label{LRKstat}
\end{equation}
where $\mathbf 1_n=(1,\ldots,1)'$ and $\hat{\mathbf R}=(\hat R_{ij})_{i,j=1}^n$, $\hat R_{ij}=\langle\hat r_{i,\ell},\hat r_{j,m}\rangle_K$ for $i\in I_\ell$, $j\in I_m$. Thus, the infinite-dimensional supremum reduces to a quadratic-form matrix calculation.

\subsection{Multiplier Bootstrap}
The null distribution of $LRK_n$ depends on the unknown data-generating distribution. Let $\mathbf V=(V_1,\ldots,V_n)'$ contain iid multipliers, independent of the data, with zero mean and unit variance. In the simulations and application, we use the two-point distribution of Mammen (1993):
\[
V_i
=
\begin{cases}
(1+\sqrt 5)/2,
&
\text{with probability }
(\sqrt 5-1)/(2\sqrt 5),
\\[0.3em]
(1-\sqrt 5)/2,
&
\text{with probability }
(\sqrt 5+1)/(2\sqrt 5).
\end{cases}
\]
Conditional on the data, define $LRK_n^*=n^{-1}\mathbf V'\hat{\mathbf R}\mathbf V$. For $0<\varsigma<1$, let $c_{1-\varsigma}^*$ be the conditional $(1-\varsigma)$ quantile of $LRK_n^*$. The test rejects $H_0$ when $LRK_n>c_{1-\varsigma}^*$. The bootstrap does not require nuisance re-estimation; its asymptotic validity is established below in Theorem \ref{thm:bootstrap}.

\subsection{Implementation in the Motivating Examples}

Let $\ell(i)$ denote the fold containing observation $i$ and, for
brevity, set
$\hat\varepsilon_i=\hat\varepsilon_{i,\ell(i)}$, where
$\hat\varepsilon_{i,\ell}=\varepsilon(W_i,\hat\eta_\ell)$ was defined
above. In the three examples,
\[
\psi(W_i,\eta_0,\alpha_0,k)
=
\varepsilon(W_i,\eta_0)(\Pi_0^\perp k)(X_i),
\qquad
\hat\psi_{i,\ell}(k)
=
\hat\varepsilon_{i,\ell}(\hat\Pi_\ell^\perp k)(X_i),
\]
where $\Pi_0^\perp$ and $\hat\Pi_\ell^\perp$ remove the relevant
nuisance component of $k$.

Write $\mathbf v=(k(X_1),\ldots,k(X_n))'$ and let
$\hat{\mathbf P}$ denote the example-specific nuisance-adjustment
matrix. With $\hat{\mathbf M}=\mathbf I_n-\hat{\mathbf P}$,
\[
(\hat\Pi_{\ell(i)}^\perp k)(X_i)
=
(\hat{\mathbf M}\mathbf v)_i.
\]
The corresponding representer is
\[
\hat\Phi_i
=
K(X_i,\cdot)
-
\sum_{j=1}^n\hat P_{ij}K(X_j,\cdot),
\]
so that
$\hat r_i=\hat\varepsilon_i\hat\Phi_i$ and
\[
\hat{\mathbf K}^{\perp}
=
\hat{\mathbf M}\mathbf K\hat{\mathbf M}',
\qquad
\mathbf K=(K(X_i,X_j))_{i,j=1}^n.
\]
Hence,
\begin{equation}
LRK_n
=
\frac{1}{n}
\hat{\boldsymbol\varepsilon}'
\hat{\mathbf K}^{\perp}
\hat{\boldsymbol\varepsilon},
\qquad
\hat{\boldsymbol\varepsilon}
=
(\hat\varepsilon_1,\ldots,\hat\varepsilon_n)'.
\label{LRKmatrix}
\end{equation}

The three examples differ only in the construction of
$\hat{\mathbf P}$. In Example~1, for $i\in I_\ell$,
\[
(\hat{\mathbf P}\mathbf v)_i
=
\hat b_i'
\bigl(
\hat{\mathbf B}_{-\ell}'
\hat{\boldsymbol\Lambda}_{-\ell}
\hat{\mathbf B}_{-\ell}
\bigr)^{-}
\hat{\mathbf B}_{-\ell}'
\hat{\boldsymbol\Lambda}_{-\ell}
\mathbf v_{-\ell},
\]
where $\mathbf v_{-\ell}=(k(X_j):j\in I_\ell^c)'$,
$\hat b_i$ is the sample-standardized dictionary evaluated at $X_i$,
$\hat{\mathbf B}_{-\ell}$ is its training-sample design matrix,
\[
\hat{\boldsymbol\Lambda}_{-\ell}
=
\operatorname{diag}
\{\lambda(\hat\gamma_\ell(X_j)):j\in I_\ell^c\},
\]
and $(\cdot)^{-}$ denotes a generalized inverse.

In Example~2, $\hat\gamma_\ell$ estimates
$\gamma_0(z)=E[Y_i\mid Z_i=z]$, whereas the weights
$\hat\omega_{\ell j}(z)$ estimate the separate conditional-mean
operator
\[
m_0(z;k)=E[k(X_i)\mid Z_i=z],
\qquad
\hat m_\ell(z;k)
=
\sum_{j\in I_\ell^c}
\hat\omega_{\ell j}(z)k(X_j).
\]
Thus, for $i\in I_\ell$,
\[
(\hat{\mathbf P}\mathbf v)_i
=
\hat m_\ell(Z_i;k)
=
\sum_{j\in I_\ell^c}
\hat\omega_{\ell j}(Z_i)v_j.
\]
The weights are constructed using $I_\ell^c$ and do not depend on
$\mathbf v$, preserving linearity in $k$.

In Example~3,
\[
\hat{\mathbf P}
=
\frac{\mathbf 1_n\mathbf L'}{\mathbf L'\mathbf 1_n},
\qquad
\mathbf L=(L_1,\ldots,L_n)',
\]
where $L_i=1$ for CATE and $L_i=D_i$ for CATT. Therefore,
\[
(\hat{\mathbf P}\mathbf v)_i
=
\frac{\mathbf L'\mathbf v}{\mathbf L'\mathbf 1_n}.
\]

Further computational details and the corresponding non-cross-fitted
implementations are provided in the supplementary material.

\section{Asymptotic Theory}

We establish uniform equivalence between the feasible process and its oracle counterpart and then derive the asymptotic distribution, local power, bootstrap validity, and consistency of the LRK test. Here $\rightsquigarrow$ denotes convergence in distribution for real-valued variables and Hoffmann--J{\o}rgensen weak convergence for random elements of $\mathcal H_K$; see van der Vaart and Wellner (1996).

\subsection{Oracle Equivalence}

Define
\[
\nu_n^0(k)=\frac{1}{\sqrt n}\sum_{i=1}^n\psi(W_i,\eta_0,\alpha_0,k),
\qquad
\hat\nu_n(k)=\frac{1}{\sqrt n}\sum_{\ell=1}^L\sum_{i\in I_\ell}
\psi(W_i,\hat\eta_\ell,\hat\alpha_\ell,k).
\]
We consider 
\[
F_n=(1-n^{-1/2})F_0+n^{-1/2}H,  
\]
where $H\in\mathcal H$ has the same $X$-marginal $F_X$ as $F_0$ and
\[
a(\cdot)=E_H[\varepsilon(W_i,\eta_0)\mid X_i=\cdot]\in L_2(F_X).
\]
Since $F_0\in\mathcal F_0$,
\begin{equation}
E_{F_n}[\varepsilon(W_i,\eta_0)\mid X_i]
=n^{-1/2}a(X_i).
\label{local_alter}
\end{equation}

\noindent Write $\varepsilon_i=\varepsilon(W_i,\eta_0)$, $\rho_i=\rho(W_i,\gamma_0)$ and 
\[
C_0(\cdot)'=g_0(\cdot)'J_0^{-1},\quad
B_0(x,\cdot)=K(x,\cdot)-C_0(\cdot)'A_{F_0}(x).
\]
When $\theta_0$ is present, set $\zeta_0(z)=\alpha_{0\gamma}(z,A_{F_0}),$ $\hat\zeta_\ell(z)=\hat\alpha_{\gamma,\ell}(z,\hat A_\ell),
$ componentwise, and
\[
\Delta_{\gamma,\ell}(z,\cdot)
=
\hat r_{\gamma,\ell}(z,\cdot)-r_{\gamma,0}(z,\cdot)
-C_0(\cdot)'\{\hat\zeta_\ell(z)-\zeta_0(z)\}.
\]
When $\theta_0$ is absent, let
$\Delta_{\gamma,\ell}=\hat r_{\gamma,\ell}-r_{\gamma,0}$.
Also define
\[
\widetilde r_{\gamma,0}(z,\cdot)
=
r_{\gamma,0}(z,\cdot)-C_0(\cdot)'\alpha_{0\gamma}(z,A_{F_0}).
\]

For $h=(h_1,\ldots,h_p)'\in\mathcal H_K^p$, let $\|h\|_{K,p}^2=\sum_{j=1}^p\|h_j\|_K^2.$ For a scalar, vector, or RKHS-valued random element $U$, write
\[
\|U\|_{Q,2,\ell}
=
\{E_Q[\|U\|^2\mid I_\ell^c]\}^{1/2},
\qquad Q\in\{F_n,H\},
\]
using the appropriate Euclidean or RKHS norm.

\medskip
\noindent
\textbf{Assumption 3 (Cross-fitting).}
The number of folds $L$ is fixed, and
$\hat\eta_\ell$, $\hat A_\ell$, $\hat\alpha_{\gamma,\ell}$, and
$\hat r_{\gamma,\ell}$ are constructed using only $I_\ell^c$. With probability approaching one, $\alpha_{0\gamma}(\cdot,k),\,
\hat\alpha_{\gamma,\ell}(\cdot,k)\in\Gamma.
$ When $\theta_0$ is present, the same holds componentwise for
$k=A_{F_0}$ and $k=\hat A_\ell$.

\newpage
\noindent
\textbf{Assumption 4 (Mean-square consistency).}
For each $\ell$:

\noindent
(i) $\|\hat\varepsilon_{i,\ell}-\varepsilon_i\|_{F_n,2,\ell}
+\|\hat\rho_{i,\ell}-\rho_i\|_{F_n,2,\ell}
+\|\rho_i\Delta_{\gamma,\ell}(Z_i,\cdot)\|_{F_n,2,\ell}
+\|\rho_i\Delta_{\gamma,\ell}(Z_i,\cdot)\|_{H,2,\ell}
=o_p(1).
$

\noindent
(ii) When $\theta_0$ is present, $\|(1+\varepsilon_i^2)^{1/2}
\{\hat A_\ell(X_i)-A_{F_0}(X_i)\}\|_{F_n,2,\ell}=o_p(1),
$
and
\[
\|(1+\varepsilon_i^2)^{1/2}\hat A_\ell(X_i)\|_{F_n,2,\ell}
+\|\rho_i\hat\zeta_\ell(Z_i)\|_{F_n,2,\ell}
+\|\rho_i\hat\zeta_\ell(Z_i)\|_{H,2,\ell}=O_p(1).
\]

\noindent
(iii) When $\theta_0$ is present, $\|\hat g_n-g_0\|_{K,p}+|\hat J_n-J_0|=o_p(1).
$

\noindent
(iv) $\sup_x\|B_0(x,\cdot)\|_K+
\sup_z\|\widetilde r_{\gamma,0}(z,\cdot)\|_K<\infty.
$

\medskip
\noindent
\textbf{Assumption 5 (Product rates).}
For each $\ell$:

\noindent
(i) When $\theta_0$ is present, $\|\hat\varepsilon_{i,\ell}-\varepsilon_i\|_{F_n,2,\ell}
\|\hat A_\ell(X_i)-A_{F_0}(X_i)\|_{F_n,2,\ell}
=o_p(n^{-1/2}).
$

\noindent
(ii) When $\theta_0$ is present,
\[
\|\hat C_n-C_0\|_{K,p}
\left\{
\|(\hat\varepsilon_{i,\ell}-\varepsilon_i)\hat A_\ell(X_i)\|_{F_n,2,\ell}^2
+
\|(\hat\rho_{i,\ell}-\rho_i)\hat\zeta_\ell(Z_i)\|_{F_n,2,\ell}^2
\right\}^{1/2}
=o_p(n^{-1/2}).
\]

\noindent
(iii) $\|\hat\rho_{i,\ell}-\rho_i\|_{F_n,2,\ell}
\|\Delta_{\gamma,\ell}(Z_i,\cdot)\|_{F_n,2,\ell}
=o_p(n^{-1/2}).
$

\medskip
If $\|\hat C_n-C_0\|_{K,p}=O_p(n^{-1/2})$, Assumption~5(ii) requires only that its braced term be $o_p(1)$. This parametric rate holds in the three motivating examples because
$\dot\varepsilon_\theta(w,\gamma,\theta_0)$ does not depend on $\gamma$.

Define
\[
b_F(\eta)
=
E_F[\psi(W,\eta,\alpha_0,K(X,\cdot))]\in\mathcal H_K.
\]
The reproducing property gives
\[
\|b_F(\eta)-b_F(\eta_0)\|_K
=
\sup_{\|k\|_K\leq1}
|E_F[\psi(W,\eta,\alpha_0,k)]
-E_F[\psi(W,\eta_0,\alpha_0,k)]|.
\]

\medskip
\noindent
\textbf{Assumption 6 (Orthogonality bias).}
For each $\ell$, $\sqrt n\,
\|b_{F_n}(\hat\eta_\ell)-b_{F_n}(\eta_0)\|_K=o_p(1).
$

\medskip

Assumptions~4--6 impose consistency, product-rate, and second-order bias conditions. A sufficient condition for Assumption~6 is
\[
\|b_{F_n}(\eta)-b_{F_n}(\eta_0)\|_K
\lesssim
n^{-1/2}\|\eta-\eta_0\|_\eta+\|\eta-\eta_0\|_\eta^2,
\qquad
\|\hat\eta_\ell-\eta_0\|_\eta=o_p(n^{-1/4}).
\]
Primitive sufficient conditions and their verification in the motivating examples are given in Appendix B of the supplementary material.

\begin{theorem}[Oracle equivalence]
\label{thm:oracleequiv}
Under Assumptions~1--6, under $\{F_n\}$,
\[
\sup_{\|k\|_K\leq1}
|\hat\nu_n(k)-\nu_n^0(k)|=o_p(1).
\]
\end{theorem}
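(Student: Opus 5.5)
Because both processes are linear and continuous in $k$ and admit RKHS representers, the supremum of $|\hat\nu_n(k)-\nu_n^0(k)|$ over the unit ball equals $\|\hat R_n-R_n^0\|_K$, where
\[
R_n^0=n^{-1/2}\sum_i r_i^0,
\qquad
r_i^0=\varepsilon_iB_0(X_i,\cdot)+\rho_i\widetilde r_{\gamma,0}(Z_i,\cdot).
\]
Since $L$ is fixed, it suffices to show, fold by fold, that $\|n^{-1/2}\sum_{i\in I_\ell}(\hat r_{i,\ell}-r_i^0)\|_K=o_p(1)$. I will first replace $\hat C_n$ by $C_0$ inside $\hat r_{i,\ell}$ and define the fold-specific representer
\[
\bar r_{i,\ell}=\hat\varepsilon_{i,\ell}\{K(X_i,\cdot)-C_0(\cdot)'\hat A_\ell(X_i)\}+\hat\rho_{i,\ell}\{\hat r_{\gamma,\ell}(Z_i,\cdot)-C_0(\cdot)'\hat\zeta_\ell(Z_i)\},
\]
which depends on the data only through $I_\ell^c$-measurable nuisances. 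The error from this replacement is $(\hat C_n-C_0)'$ times $n^{-1/2}\sum_{i\in I_\ell}\{\hat\varepsilon_{i,\ell}\hat A_\ell(X_i)+\hat\rho_{i,\ell}\hat\zeta_\ell(Z_i)\}$. I split this sum at the truth. The first part is $n^{-1/2}\sum(\varepsilon_iA_{F_0}(X_i)+\rho_i\zeta_0(Z_i))$; its $F_n$-mean is $O(1)$ by the local alternative and identification, so it is $O_p(1)$ and multiplies $\|\hat C_n-C_0\|_{K,p}=o_p(1)$ (Assumption 4(iii)). The remaining parts are differences, which I center conditionally on $I_\ell^c$. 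Their conditional variances are exactly the braced term in Assumption 5(ii), so together with the $A$-difference and Assumption 4(ii) they give $o_p(1)$. Their conditional means are absorbed into the bias term treated below.

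Next I decompose $n^{-1/2}\sum_{i\in I_\ell}(\bar r_{i,\ell}-r_i^0)$ into a centered part and a mean part:
\[
n^{-1/2}\sum_{i\in I_\ell}\bigl(\bar r_{i,\ell}-r_i^0-E_{F_n}[\bar r_{i,\ell}-r_i^0\mid I_\ell^c]\bigr)+\sqrt n\,(|I_\ell|/n)\,E_{F_n}[\bar r_{i,\ell}-r_i^0\mid I_\ell^c].
\]
For the centered part I use that, conditionally on $I_\ell^c$, the summands are iid mean-zero $\mathcal H_K$-valued. Hence its conditional second moment is bounded by $E_{F_n}[\|\bar r_{i,\ell}-r_i^0\|_K^2\mid I_\ell^c]$, and the conditional Chebyshev inequality (in Hilbert space) reduces the task to showing this is $o_p(1)$. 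Expanding $\bar r-r^0$ gives several pieces:
\begin{itemize}
\item $(\hat\varepsilon-\varepsilon)B_0$, controlled by Assumption 4(i) and the bounded $\sup_x\|B_0\|$ in 4(iv);
\item $\varepsilon\,C_0'(\hat A-A)$, controlled by 4(ii);
\item $\rho\,\Delta_{\gamma,\ell}$, controlled by 4(i);
\item $(\hat\rho-\rho)\widetilde r_{\gamma,0}$, controlled by 4(i) and 4(iv);
\item $(\hat\rho-\rho)\Delta_{\gamma,\ell}$, which is handled by the product bound 5(iii) and is in fact $o_p(n^{-1/2})$ in $L_1$.
\end{itemize}
Because $F_n$ is a mixture, moments under $F_n$ are controlled by those under $F_0$ and $H$, which is why Assumption 4 includes the $H$-norms.

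The main obstacle is the mean term, which must be shown to be $o_p(n^{-1/2})$ in $\|\cdot\|_K$ uniformly. I will write $E_{F_n}[\bar r-r^0\mid I_\ell^c]$ as the sum of $b_{F_n}(\hat\eta_\ell)-b_{F_n}(\eta_0)$, which is $o_p(n^{-1/2})$ by Assumption 6, and the effect of replacing $\alpha_0$ by $\hat\alpha_\ell$ (including $\hat A_\ell$ versus $A_{F_0}$). The key identity is that, by Assumption 3, $\hat\alpha_{\gamma,\ell}(\cdot,k)-\alpha_{0\gamma}(\cdot,k)\in\Gamma$, so (\ref{gamma_ident}) gives $E_{F_n}[\delta(Z)\rho(W,\gamma_{F_n})]=0$ for $\delta\in\Gamma$. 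Likewise, $E_{F_n}[(\hat A-A)\varepsilon(\eta_{F_n})]$ is handled by (\ref{theta_ident}) and the derivative condition in Assumption 1(f).

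Two consequences follow from this identity. First, the $\alpha$-error evaluated at the true $\eta$ has mean equal to that error times $(\rho(\gamma_0)-\rho(\gamma_{F_n}))$. By the local expansion from Proposition~\ref{prop:FSIP}, $\gamma_{F_n}-\gamma_0=O(n^{-1/2})$, so this contributes $O(n^{-1/2})\cdot o_p(1)$. Second, the cross terms $(\hat\varepsilon-\varepsilon)(\hat A-A)$ and $(\hat\rho-\rho)\Delta_{\gamma,\ell}$ are bounded by Cauchy--Schwarz using the product rates 5(i) and 5(iii). Combining the $\hat C_n$ step, the centered part and the mean part across the $L$ folds yields the claim. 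I expect the delicate bookkeeping to be getting the local-alternative mean of $\varepsilon_iA_{F_0}$ and the $\alpha$-error terms correct under $F_n$ rather than $F_0$.
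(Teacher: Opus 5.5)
Your overall architecture matches the paper's proof: the Riesz reduction to $\|\hat R_n-R_n^0\|_K$, fold-wise conditional centering with conditional Chebyshev/Markov, Cauchy--Schwarz with Assumption~5 for the product terms, and Assumption~6 for the mean of $\Delta\varepsilon_{i,\ell}B_0+\Delta\rho_{i,\ell}\widetilde r_{\gamma,0}$. Your $\hat C_n\to C_0$ replacement is just the paper's $q_{i,\ell}+t_{i,\ell}$ handled first.

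\textbf{The gap.} It lies in the step you flag as the main obstacle: the conditional means of $\rho_i\Delta_{\gamma,\ell}(Z_i,\cdot)$ and $\varepsilon_iC_0'\{\hat A_\ell-A_{F_0}\}(X_i)$. You route these through the perturbed nuisance, invoking \eqref{gamma_ident} at $F_n$ and claiming $\gamma_{F_n}-\gamma_0=O(n^{-1/2})$ ``by Proposition~\ref{prop:FSIP}.'' Nothing in Assumptions~1--6 delivers this.
\begin{itemize}
\item Proposition~\ref{prop:FSIP} and Assumption~1(c),(e),(f) give only pathwise derivatives at $\tau=0$ for a \emph{fixed} test function. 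They give no norm rate for $\gamma_{F_n}-\gamma_0$. They also give no bound on $E_{F_n}[\delta(Z_i)\{\rho(W_i,\gamma_0)-\rho(W_i,\gamma_{F_n})\}]$ that is uniform over the data-dependent $\delta=\langle\Delta_{\gamma,\ell}(\cdot,\cdot),k\rangle_K$, $\|k\|_K\le1$.
\item Your ``$o_p(1)$'' factor is not supplied by Assumption~4(i). That assumption controls $\rho_i\Delta_{\gamma,\ell}$, not $\Delta_{\gamma,\ell}$ weighted by $\rho(W_i,\gamma_0)-\rho(W_i,\gamma_{F_n})$.
\item Equation \eqref{theta_ident} and Assumption~1(f) say nothing about $E_{F_n}[\{\hat A_\ell-A_{F_0}\}(X_i)\varepsilon_i]$. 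Equation \eqref{theta_ident} annihilates only the particular instrument $A_F$, not an arbitrary function of $X_i$.
\end{itemize}

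\textbf{The missing idea: the mixture identity.} It requires no smoothness of $F\mapsto\eta_F$. Because $\rho_i$ and $\varepsilon_i$ are evaluated at $\eta_0=\eta_{F_0}$ and $F_n=(1-n^{-1/2})F_0+n^{-1/2}H$, take any $I_\ell^c$-measurable $\delta\in\Gamma$ (Assumptions~2--3). The $F_0$-component vanishes by \eqref{gamma_ident} at $F_0$, so $E_{F_n}[\delta(Z_i)\rho_i\mid I_\ell^c]=n^{-1/2}E_H[\delta(Z_i)\rho_i\mid I_\ell^c]$. Hence
$\|E_{F_n}[\rho_i\Delta_{\gamma,\ell}(Z_i,\cdot)\mid I_\ell^c]\|_K\le n^{-1/2}\|\rho_i\Delta_{\gamma,\ell}(Z_i,\cdot)\|_{H,2,\ell}=o_p(n^{-1/2})$.
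This is what the $H$-norm term in Assumption~4(i) is for; it is not there for variance control, as you suggest.

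For the $A$-term, \eqref{local_alter}, the common $X$-marginal, and Cauchy--Schwarz give
$\|E_{F_n}[\varepsilon_iC_0'\{\hat A_\ell-A_{F_0}\}(X_i)\mid I_\ell^c]\|_K\le n^{-1/2}\|C_0\|_{K,p}\|a\|_{L_2(F_X)}\|\hat A_\ell-A_{F_0}\|_{L_2(F_X)}=o_p(n^{-1/2})$.

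\textbf{Two smaller bookkeeping issues in your $\hat C_n$ step.}
\begin{itemize}
\item The conditional means of the difference terms cannot be ``absorbed into the bias term below,'' because that term carries no factor $\hat C_n-C_0$. Instead, bound them directly by $\sqrt n\,\|\hat C_n-C_0\|_{K,p}\{\cdot\}^{1/2}=o_p(1)$, where $\{\cdot\}$ is the braced term; this is exactly Assumption~5(ii).
\item Splitting at $\varepsilon_iA_{F_0}+\rho_i\zeta_0$ creates a term $\rho_i\{\hat\zeta_\ell-\zeta_0\}(Z_i)$ whose moments no assumption controls. The paper splits instead at $d_{i,\ell}=\varepsilon_i\hat A_\ell(X_i)+\rho_i\hat\zeta_\ell(Z_i)$. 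Its $F_0$-conditional mean vanishes by the null restriction and $\hat\zeta_\ell\in\Gamma$, and its second moment is exactly what Assumption~4(ii) bounds.
\end{itemize}
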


\subsection{Asymptotic Distribution and Local Power}

The oracle moment has Riesz representer
\[
r_0(w,\cdot)
=
\varepsilon(w,\eta_0)B_0(x,\cdot)
+\rho(w,\gamma_0)\widetilde r_{\gamma,0}(z,\cdot).
\]
Thus,
\[
\psi(W_i,\eta_0,\alpha_0,k)
=\langle r_0(W_i,\cdot),k\rangle_K,\qquad
R_n^0=\frac{1}{\sqrt n}\sum_{i=1}^n r_0(W_i,\cdot).
\]

\medskip
\noindent
\textbf{Assumption 7 (Moment condition).}
For some $\delta>0$, $\sup_nE_{F_n}[\|r_0(W_i,\cdot)\|_K^{2+\delta}]<\infty.$
\medskip
\noindent Under Assumption~4(iv), this follows from $\sup_nE_{F_n}[|\varepsilon_i|^{2+\delta}+|\rho_i|^{2+\delta}]<\infty.$

Define
\[
\mu_H=E_H[r_0(W_i,\cdot)]\in\mathcal H_K,
\qquad
\Sigma_0=E[r_0(W_i,\cdot)\otimes r_0(W_i,\cdot)],
\]
where $(h_1\otimes h_2)k=\langle h_2,k\rangle_Kh_1.$

\begin{theorem}[Local asymptotic distribution]
\label{thm:local}
Under Assumptions~1--7 and the local alternatives \eqref{local_alter},
\[
R_n^0\rightsquigarrow\mathbb Z+\mu_H
\qquad\text{in }\mathcal H_K,
\]
where $\mathbb Z$ is centered Gaussian with covariance operator $\Sigma_0$. Consequently,
\[
LRK_n\rightsquigarrow\|\mathbb Z+\mu_H\|_K^2.
\]
Under the null, the limit is $\|\mathbb Z\|_K^2$.
\end{theorem}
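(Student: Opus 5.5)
We prove Theorem~\ref{thm:local} in three steps: a triangular-array central limit theorem for $R_n^0$ in the separable Hilbert space $\mathcal H_K$, identification of the mean, and transfer to $LRK_n$ via Theorem~\ref{thm:oracleequiv} and the continuous mapping theorem.

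\textbf{Step 1: centering.} Under $F_n$, the mean of $R_n^0$ is $\sqrt n\,E_{F_n}[r_0(W_i,\cdot)]$. Since $F_n=(1-n^{-1/2})F_0+n^{-1/2}H$ is a mixture and the map $F\mapsto E_F[r_0(W,\cdot)]$ is linear in $F$,
\[
\sqrt n\,E_{F_n}[r_0]=(\sqrt n-1)E[r_0]+\mu_H .
\]
It remains to show $E[r_0(W_i,\cdot)]=0$ under $F_0\in\mathcal F_0$. By the reproducing property, for each $k$ we have $\langle E[r_0],k\rangle_K=E[\psi(W,\eta_0,\alpha_0,k)]$. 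The first part, $E[\widetilde k(X)\varepsilon]$, vanishes by the conditional moment restriction \eqref{CMR}. The second part, $E[\alpha_{0\gamma}(Z,\widetilde k)\rho]$, vanishes by \eqref{gamma_ident}, because $\alpha_{0\gamma}(\cdot,\widetilde k)\in\Gamma$ by Assumption~3. Bochner integrability of $r_0$ is guaranteed by Assumption~7. Hence $\sqrt n\,E_{F_n}[r_0]=\mu_H$ exactly.

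\textbf{Step 2: CLT for the centred part.} Let $\xi_{ni}=r_0(W_i,\cdot)-E_{F_n}[r_0]$, which are iid rowwise. The aim is $n^{-1/2}\sum_i\xi_{ni}\rightsquigarrow\mathbb Z$. We use the Hilbert-space Lindeberg CLT for triangular arrays, e.g.\ via finite-dimensional convergence plus flat concentration (asymptotic tightness). This requires three conditions.

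(a) \emph{Lindeberg condition.} This follows from the uniform $2+\delta$ moment in Assumption~7 via $E\|\xi\|^2\mathbf 1\{\|\xi\|>\epsilon\sqrt n\}\le E\|\xi\|^{2+\delta}/(\epsilon\sqrt n)^{\delta}$.

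(b) \emph{Convergence of covariance operators.} We need $E_{F_n}[\xi\otimes\xi]\to\Sigma_0$ in trace norm. Write $E_{F_n}[r_0\otimes r_0]=(1-n^{-1/2})\Sigma_0+n^{-1/2}E_H[r_0\otimes r_0]$, and note that the mean term is $\mu_H\otimes\mu_H/n$. Both corrections vanish provided $E_H\|r_0\|_K^2<\infty$. That bound follows from Assumption~7, since $E_{F_n}\ge n^{-1/2}E_H$ gives $E_H\|r_0\|_K^2\le \sqrt n\,C$; this is only $O(\sqrt n)$, so the correction $n^{-1/2}E_H[r_0\otimes r_0]$ is merely $O(1)$. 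We therefore use the sharper route of Assumption~4(iv): $\|r_0\|_K\le C(|\varepsilon|+|\rho|)$, and $E_H[\varepsilon^2+\rho^2]<\infty$ is implicitly required for $a\in L_2(F_X)$ and for the setup $H\in\mathcal H$. I would state this explicitly as part of the regularity imposed on $\mathcal H$.

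(c) \emph{Tightness.} Fix an orthonormal basis $\{e_j\}$ of $\mathcal H_K$. We need $\sup_n\sum_{j>m}E_{F_n}\langle\xi,e_j\rangle^2\to0$ as $m\to\infty$. This follows from trace-norm convergence of the covariances in (b), since the tail sums converge to those of the trace-class operator $\Sigma_0$.

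Finite-dimensional Lindeberg--Feller then gives the Gaussian limit with covariance operator $\Sigma_0$. Combining with Step 1 yields $R_n^0\rightsquigarrow\mathbb Z+\mu_H$.

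\textbf{Step 3: transfer to $LRK_n$.} Theorem~\ref{thm:oracleequiv} gives $\sup_{\|k\|_K\le1}|\hat\nu_n(k)-\nu_n^0(k)|=o_p(1)$. Since $\hat\nu_n(k)=\langle\hat R_n,k\rangle_K$ and $\nu_n^0(k)=\langle R_n^0,k\rangle_K$, this supremum equals $\|\hat R_n-R_n^0\|_K$. Hence $\hat R_n\rightsquigarrow\mathbb Z+\mu_H$ by Slutsky's lemma in $\mathcal H_K$. Because $LRK_n=\|\hat R_n\|_K^2$ by \eqref{LRKstat} and the squared norm is continuous, the continuous mapping theorem gives $LRK_n\rightsquigarrow\|\mathbb Z+\mu_H\|_K^2$. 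Under $H_0$ we take $H=F_0$, so $\mu_H=E[r_0]=0$ by Step 1 and the limit is $\|\mathbb Z\|_K^2$.

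\textbf{Main obstacle.} The delicate point is Step 2(b): controlling the perturbation $n^{-1/2}(E_H-E)[r_0\otimes r_0]$ in trace norm, which needs second moments of $r_0$ under $H$. Assumption~7 alone yields these only at rate $O(\sqrt n)$. I would first invoke the $\sup_z$ and $\sup_x$ bounds in Assumption~4(iv) to reduce the requirement to $E_H[\varepsilon^2+\rho^2]<\infty$. If that finiteness cannot be extracted from the stated assumptions, I would add it as an explicit condition on $\mathcal H$.
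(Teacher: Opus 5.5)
Your proof is correct and follows the paper's argument essentially step for step: exact centering $\sqrt n\,E_{F_n}[r_0]=\mu_H$ via $E[r_0]=0$, then Lindeberg from Assumption~7 and trace-norm convergence of the covariance operators. It then uses a triangular-array CLT in the separable space $\mathcal H_K$ (you sketch it via finite-dimensional convergence plus flat concentration, where the paper cites Kundu et al., 2000), and finishes with Theorem~\ref{thm:oracleequiv}, Slutsky and continuous mapping.

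The one thing to fix is your ``main obstacle,'' which is not real. Neither $H$ nor $r_0$ depends on $n$, so $E_H\|r_0(W_i,\cdot)\|_K^{2+\delta}$ is a fixed number. The mixture bound $n^{-1/2}E_H\|r_0\|_K^{2+\delta}\le E_{F_n}\|r_0\|_K^{2+\delta}$, taken at any single $n$ (e.g.\ $n=1$, where $F_1=H$), shows this number is finite under Assumption~7. Hence $Q_H$ is trace class with an $n$-free norm and $n^{-1/2}\|Q_H-Q_0\|_1\to0$. No extra moment condition on $\mathcal H$ and no appeal to Assumption~4(iv) is needed.
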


We next characterize the detectable local directions.

\medskip
\noindent
\textbf{Assumption 8 (Oracle feature representation).}
There exists a strongly measurable
$\Phi_0:\mathcal X\to\mathcal H_K$ such that $r_0(W_i,\cdot)=\varepsilon_i\Phi_0(X_i,\cdot),
$ and $E[\|\Phi_0(X_i,\cdot)\|_K^2]<\infty,
$ and, for some $C<\infty$,
\[
\|\Pi_0^\perp k\|_{L_2(F_X)}
\leq C\|k\|_{L_2(F_X)},
\qquad
(\Pi_0^\perp k)(x)=\langle\Phi_0(x,\cdot),k\rangle_K.
\]

The adjoint of $\Pi_0^\perp:\mathcal H_K\to L_2(F_X)$ satisfies
\[
(\Pi_0^\perp)'f
=
E[f(X_i)\Phi_0(X_i,\cdot)]
\in\mathcal H_K,
\]
and hence
\[
\mu_a
=
(\Pi_0^\perp)'a
=
E[a(X_i)\Phi_0(X_i,\cdot)].
\]
Thus, $a$ is locally undetectable exactly when $\mu_a=0$.

A bounded kernel $K$ is \emph{integrally strictly positive with respect
to $F_X$} if
\[
\iint
f(x)K(x,x')f(x')\,dF_X(x)\,dF_X(x')
>
0
\]
for every nonzero $f\in L_2(F_X)$; equivalently, $\mathcal H_K$ is
dense in $L_2(F_X)$. Under Assumption~8, $\Pi_0^\perp$ then extends
uniquely to a bounded operator $\Pi_{0,2}^\perp:L_2(F_X)\to L_2(F_X).$

\begin{proposition}[Locally testable alternatives]
\label{prop:testability}
Suppose Assumption~8 holds and $K$ is integrally strictly positive with
respect to $F_X$. Then
\[
\mu_a=0
\quad\Longleftrightarrow\quad
(\Pi_{0,2}^\perp)'a=0.
\]
Consequently, $a$ is locally detectable if and only if
$(\Pi_{0,2}^\perp)'a\neq0$.
\end{proposition}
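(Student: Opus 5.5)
The plan is to show that $\mu_a$ vanishes in $\mathcal H_K$ exactly when it annihilates every $k\in\mathcal H_K$, and then use density to pass from $\mathcal H_K$ to $L_2(F_X)$.

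\emph{Step 1: pair $\mu_a$ with test functions.} For any $k\in\mathcal H_K$, the reproducing property and Bochner integrability of $a(X_i)\Phi_0(X_i,\cdot)$ give
\[
\langle\mu_a,k\rangle_K
=E[a(X_i)\langle\Phi_0(X_i,\cdot),k\rangle_K]
=E[a(X_i)(\Pi_0^\perp k)(X_i)]
=\langle a,\Pi_0^\perp k\rangle_{L_2(F_X)}.
\]
Integrability holds by Cauchy--Schwarz, since $a\in L_2(F_X)$ and $E\|\Phi_0(X_i,\cdot)\|_K^2<\infty$ by Assumption~8. Hence $\mu_a=0$ if and only if $\langle a,\Pi_0^\perp k\rangle_{L_2}=0$ for all $k\in\mathcal H_K$.

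\emph{Step 2: identify the extension.} Because $K$ is bounded, $\|k\|_{L_2(F_X)}\le \sup_x K(x,x)^{1/2}\|k\|_K$, so $\mathcal H_K$ embeds continuously in $L_2(F_X)$. Integral strict positivity means this embedding has dense range. The bound $\|\Pi_0^\perp k\|_{L_2}\le C\|k\|_{L_2}$ shows that $\Pi_0^\perp$ is uniformly continuous in the $L_2$ norm on a dense subspace. It therefore extends uniquely to the bounded operator $\Pi_{0,2}^\perp$ on $L_2(F_X)$, with $\Pi_{0,2}^\perp k=\Pi_0^\perp k$ for $k\in\mathcal H_K$. One point needs care: two elements of $\mathcal H_K$ that agree $F_X$-a.e. must have the same image. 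This follows from the displayed bound itself, which is why the statement phrases the extension as existing under Assumption~8.

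\emph{Step 3: density argument.}
\begin{itemize}
\item[($\Leftarrow$)] If $(\Pi_{0,2}^\perp)'a=0$, then for every $k\in\mathcal H_K$ we have $\langle a,\Pi_0^\perp k\rangle_{L_2}=\langle(\Pi_{0,2}^\perp)'a,k\rangle_{L_2}=0$, so $\mu_a=0$ by Step 1.
\item[($\Rightarrow$)] If $\mu_a=0$, then the bounded linear functional $f\mapsto\langle a,\Pi_{0,2}^\perp f\rangle_{L_2}$ vanishes on the dense set $\mathcal H_K$, hence on all of $L_2(F_X)$ by continuity. Therefore $\langle(\Pi_{0,2}^\perp)'a,f\rangle_{L_2}=0$ for all $f\in L_2(F_X)$. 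Taking $f=(\Pi_{0,2}^\perp)'a$ gives $(\Pi_{0,2}^\perp)'a=0$.
\end{itemize}

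The ``consequently'' part then follows from the preceding discussion: by Theorem~\ref{thm:local}, a local alternative enters the limit only through $\mu_H$, and in the feature representation this reduces to $\mu_a$. So $a$ is undetectable exactly when $\mu_a=0$.

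The main obstacle is the routine but necessary justification in Steps 1--2. First, the interchange $\langle E[\cdot],k\rangle_K=E[\langle\cdot,k\rangle_K]$ requires Bochner integrability, which follows from strong measurability together with the square-integrability above. Second, the well-definedness of the $L_2$ extension must be checked on $F_X$-equivalence classes.
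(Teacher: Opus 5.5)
Your proof is correct and follows essentially the same route as the paper. Both arguments identify $\langle\mu_a,k\rangle_K=\langle a,\Pi_0^\perp k\rangle_{L_2(F_X)}=\langle(\Pi_{0,2}^\perp)'a,k\rangle_{L_2(F_X)}$ for $k\in\mathcal H_K$, and then use density of $\mathcal H_K$ in $L_2(F_X)$. The paper also makes the ``consequently'' step explicit by computing $E_H[r_0(W_i,\cdot)]=E[a(X_i)\Phi_0(X_i,\cdot)]=\mu_a$ via the tower property and the common $X$-marginal, which you only assert.
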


The extended adjoint has a simple expression in each example. In
Example~1, let $\lambda_0(x)=\lambda(\gamma_0(x)).$ Since $\Pi_{\Gamma,\lambda_0}$ is self-adjoint under the
$\lambda_0$-weighted inner product,
\[
E\!\left[
a(X_i)\Pi_{\Gamma,\lambda_0}k(X_i)
\right]
=
E\!\left[
\lambda_0(X_i)
\Pi_{\Gamma,\lambda_0}
\left(\frac{a}{\lambda_0}\right)(X_i)
k(X_i)
\right].
\]
Therefore,
\[
(\Pi_{0,2}^\perp)'a
=
a
-
\lambda_0
\Pi_{\Gamma,\lambda_0}
\left(\frac{a}{\lambda_0}\right).
\]
In Example~2, conditional expectation is an orthogonal projection in
$L_2(F_X)$, so
\[
(\Pi_{0,2}^\perp)'a
=
a-E[a(X_i)\mid Z_i].
\]
Finally, in Example~3, writing
$\ell_0(X_i)=E[L_i\mid X_i]$, direct calculation gives
\[
(\Pi_{0,2}^\perp)'a
=
a
-
\frac{E[a(X_i)]}{E[L_i]}\ell_0.
\]
For CATE, $\ell_0\equiv1$, whereas for CATT,
$\ell_0=e_0$.

\begin{table}[h]
\centering
\small
\caption{Extended adjoints and locally undetectable directions.}
\label{tab:local_power}
\begin{tblr}{
  colspec={l X[4] X[3]},
  row{1}={font=\bfseries},
  hlines,
  vlines
}
Example
&
$(\Pi_{0,2}^\perp)'a$
&
Undetectable directions
\\
High-dimensional regression
&
$\displaystyle
a-\lambda_0\Pi_{\Gamma,\lambda_0}
\left(a/\lambda_0\right)$
&
$\displaystyle
a=\lambda_0 g,\quad g\in\Gamma$
\\
Significance testing
&
$\displaystyle
a-E[a\mid Z]$
&
$\displaystyle
a=g(Z)$
\\
Constant CATE/CATT
&
CATE: $a-E[a]$

CATT:
$\displaystyle a-e_0\frac{E[a]}{E[e_0]}$
&
CATE: $a=c$

CATT: $a=ce_0$
\\
\end{tblr}
\end{table}

Thus, the LRK test is locally insensitive exactly to directions absorbed
by the model's nuisance components: $\lambda_0g$, $g\in\Gamma$, in
Example~1; functions of $Z$ in Example~2; constants for CATE; and
multiples of $e_0$ for CATT. Every other $L_2(F_X)$ direction is locally
detectable.
\begin{theorem}[Multiplier-bootstrap validity]
\label{thm:bootstrap}
Suppose Assumptions~1--8 hold, $\Sigma_0\neq0$, and the iid multipliers are independent of the data and satisfy
\[
E[V_i]=0,\qquad E[V_i^2]=1,\qquad
E[|V_i|^{2+\delta}]<\infty,
\]
where $\delta$ is as in Assumption~7. Then, under $\{F_n\}$,
\[
\sup_{t\in\mathbb R}
\left|
P^*(LRK_n^*\leq t)
-
P(\|\mathbb Z\|_K^2\leq t)
\right|
\overset{p}{\longrightarrow}0.
\]
If $c_{1-\varsigma}^*$ is the conditional $(1-\varsigma)$ bootstrap quantile and $c_{1-\varsigma}$ is the corresponding quantile of $\|\mathbb Z\|_K^2$, then
\[
c_{1-\varsigma}^*\overset{p}{\longrightarrow}c_{1-\varsigma}.
\]
Moreover,
\[
P(LRK_n>c_{1-\varsigma}^*)\longrightarrow\varsigma
\]
under the null, whereas under the local alternatives,
\[
P(LRK_n>c_{1-\varsigma}^*)
\longrightarrow
P(\|\mathbb Z+\mu_a\|_K^2>c_{1-\varsigma}).
\]
\end{theorem}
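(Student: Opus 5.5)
The bootstrap statistic is $LRK_n^*=\|\hat R_n^*\|_K^2$ with $\hat R_n^*=n^{-1/2}\sum_{\ell}\sum_{i\in I_\ell}V_i\hat r_{i,\ell}$, since $\mathbf V'\hat{\mathbf R}\mathbf V/n=\|n^{-1/2}\sum_i V_i\hat r_{i,\ell(i)}\|_K^2$. The plan is to compare this with its oracle analogue $R_n^{0*}=n^{-1/2}\sum_i V_i r_0(W_i,\cdot)$ and prove three things. First, $\|\hat R_n^*-R_n^{0*}\|_K=o_{p^*}(1)$ in probability. Second, $R_n^{0*}\rightsquigarrow\mathbb Z$ conditionally in probability. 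Third, the quantile and power conclusions follow from Theorem~\ref{thm:local} and continuity of the law of $\|\mathbb Z\|_K^2$.

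\textbf{Step 1 (bootstrap oracle equivalence).} Conditional on the data, $\hat R_n^*-R_n^{0*}$ is a centered sum of independent $\mathcal H_K$-valued terms. Hence
\[
E^*\|\hat R_n^*-R_n^{0*}\|_K^2=\frac1n\sum_{\ell}\sum_{i\in I_\ell}\|\hat r_{i,\ell}-r_0(W_i,\cdot)\|_K^2 .
\]
It therefore suffices to show that this average is $o_p(1)$ under $F_n$. I would expand $\hat r_{i,\ell}-r_0(W_i,\cdot)$ using the representers. The pieces are:
\begin{itemize}
\item $(\hat\varepsilon_{i,\ell}-\varepsilon_i)\hat B_\ell(X_i,\cdot)$;
\item $\varepsilon_i\{\hat B_\ell-B_0\}(X_i,\cdot)$, which involves $\hat C_n-C_0$ and $\hat A_\ell-A_{F_0}$;
\item $(\hat\rho_{i,\ell}-\rho_i)\widetilde{\hat r}_{\gamma,\ell}(Z_i,\cdot)$;
\item $\rho_i\{\Delta_{\gamma,\ell}(Z_i,\cdot)-(\hat C_n-C_0)'\hat\zeta_\ell(Z_i)\}$.
\end{itemize}
By cross-fitting (Assumption~3), each fold average has conditional expectation given $I_\ell^c$ equal to the $\|\cdot\|_{F_n,2,\ell}^2$ norms. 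Those norms are $o_p(1)$ by Assumption~4(i)--(iv), using boundedness of $B_0$ and $\widetilde r_{\gamma,0}$ and the $O_p(1)$ bounds in 4(ii). Markov's inequality conditional on $I_\ell^c$ then yields $o_p(1)$. Note that only consistency is needed here, not the product rates, because the multipliers kill the bias terms.

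\textbf{Step 2 (conditional CLT for the oracle).} I would use a Hilbert-space multiplier CLT. Finite-dimensional convergence follows from Lindeberg conditional on the data. The Lindeberg condition is verified with $E|V_i|^{2+\delta}<\infty$ and the uniform $2+\delta$ moment of Assumption~7, which handles the triangular array under $F_n$. The conditional covariance $n^{-1}\sum_i r_0(W_i)\otimes r_0(W_i)$ converges in trace norm to $\Sigma_0$. This uses a weak LLN for trace-class operators under $F_n$. The contamination $n^{-1/2}H$ changes expectations by $O(n^{-1/2})E_H\|r_0\|^2$, and that is finite by Assumption~7 applied with the mixture. Tightness comes from a flat-concentration criterion: for an orthonormal basis $\{e_j\}$ of $\mathcal H_K$,
\[
\sum_{j>J}E^*\langle R_n^{0*},e_j\rangle_K^2=\sum_{j>J}\langle \hat\Sigma_n e_j,e_j\rangle_K ,
\]
which is uniformly small in probability because $\hat\Sigma_n\to\Sigma_0$ in trace norm. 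This gives $d_{BL}(\mathcal L^*(R_n^{0*}),\mathcal L(\mathbb Z))\to_p0$. Combined with Step 1 and the continuous mapping theorem, $LRK_n^*\rightsquigarrow\|\mathbb Z\|_K^2$ conditionally in probability.

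\textbf{Step 3 (Kolmogorov distance, quantiles, power).} When $\Sigma_0\neq0$, $\|\mathbb Z\|_K^2=\sum_j\lambda_j\chi^2_{1,j}$ with some $\lambda_j>0$, so its cdf is continuous and strictly increasing on $(0,\infty)$. Pólya's theorem upgrades weak convergence to uniform cdf convergence, and strict monotonicity gives $c^*_{1-\varsigma}\to_p c_{1-\varsigma}$. By Theorem~\ref{thm:local}, $LRK_n\rightsquigarrow\|\mathbb Z+\mu_H\|_K^2$. Under Assumption~8, $\mu_H=E_H[\varepsilon_i\Phi_0(X_i,\cdot)]=E[a(X_i)\Phi_0(X_i,\cdot)]=\mu_a$, since $H$ has marginal $F_X$. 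Slutsky's lemma together with continuity of the limit cdf at $c_{1-\varsigma}$ then gives the rejection probabilities: $\varsigma$ under the null ($\mu_a=0$) and $P(\|\mathbb Z+\mu_a\|_K^2>c_{1-\varsigma})$ under local alternatives. Continuity of the noncentral limit at $c_{1-\varsigma}$ also holds because $\Sigma_0\neq0$.

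\textbf{Main obstacle.} I expect Step 1 to be the hardest, specifically the terms involving $\hat C_n-C_0$ multiplied by $\varepsilon_i\hat A_\ell(X_i)$ and by $\rho_i\hat\zeta_\ell(Z_i)$. These need Assumption~4(ii)--(iii) in the weighted form, and I would handle them first by Cauchy--Schwarz in $\mathcal H_K^p$. A second difficulty is that the norms are under $F_n$ rather than $F_0$. The $H$-norm pieces of Assumption~4(i) are stated exactly to control the $n^{-1/2}H$ contamination component, so I would split each expectation into its $F_0$ and $H$ parts.
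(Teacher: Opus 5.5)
Your architecture matches the paper's: a conditional CLT for the oracle multiplier process via trace-norm convergence of $n^{-1}\sum_i r_0\otimes r_0$ and conditional Lindeberg, then replacing the oracle process by the feasible one, then P\'olya, quantiles and Slutsky. Steps~2--3 are fine. They are in fact slightly more careful than the paper about three points: strict monotonicity at $c_{1-\varsigma}$, continuity of the noncentral limit, and $\mu_H=\mu_a$ under Assumption~8.

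The gap is in Step~1. Reducing to $n^{-1}\sum_{\ell}\sum_{i\in I_\ell}\|\hat r_{i,\ell}-r_0(W_i,\cdot)\|_K^2=o_p(1)$ is legitimate. But your claim that Assumption~4 alone delivers this (``only consistency is needed, not the product rates'') is false. Your pieces $(\hat\varepsilon_{i,\ell}-\varepsilon_i)\hat B_\ell(X_i,\cdot)$ and $(\hat\rho_{i,\ell}-\rho_i)\widetilde{\hat r}_{\gamma,\ell}(Z_i,\cdot)$ multiply a residual error by an \emph{estimated} representer. Assumption~4(iv), however, bounds only $B_0$ and $\widetilde r_{\gamma,0}$. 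Expand $\hat B_\ell=B_0-(\hat C_n-C_0)'\hat A_\ell-C_0'(\hat A_\ell-A_{F_0})$ and $\widetilde{\hat r}_{\gamma,\ell}=\widetilde r_{\gamma,0}+\Delta_{\gamma,\ell}-(\hat C_n-C_0)'\hat\zeta_\ell$. This produces products of two first-step errors:
$(\hat\varepsilon_{i,\ell}-\varepsilon_i)C_0'\{\hat A_\ell-A_{F_0}\}(X_i)$,
$(\hat\rho_{i,\ell}-\rho_i)\Delta_{\gamma,\ell}(Z_i,\cdot)$, and
$(\hat C_n-C_0)'\{(\hat\varepsilon_{i,\ell}-\varepsilon_i)\hat A_\ell(X_i)+(\hat\rho_{i,\ell}-\rho_i)\hat\zeta_\ell(Z_i)\}$.
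These are the paper's $s_{2i,\ell}$, $p_{i,\ell}$ and $t_{i,\ell}$. Their squared averages are fourth-moment-type quantities such as $E_{n,\ell}[(\Delta\varepsilon_{i,\ell})^2|\Delta A_\ell(X_i)|^2]$. Mean-square consistency of each factor does not control these. Assumption~4 does not even bound $\|\Delta_{\gamma,\ell}(Z_i,\cdot)\|_{F_n,2,\ell}$ without the weight $\rho_i$. The $\hat C_n-C_0$ terms you flag as the main obstacle, multiplying $\varepsilon_i\hat A_\ell$ and $\rho_i\hat\zeta_\ell$, are the paper's $q_{i,\ell}$, and they are the easy ones.

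The repair uses Assumption~5, which is among the hypotheses. The paper splits the error into two groups.
\begin{itemize}
\item The terms $u,v,q,s_1$ are controlled in mean square by Assumption~4, exactly as you intend.
\item For $p,t,s_2$, within-fold Cauchy--Schwarz and Assumptions 5(i)--(iii) give $n^{-1/2}\sum_{\ell}\sum_{i\in I_\ell}\|\cdot\|_K=o_p(1)$. The paper then uses $E^*\|n^{-1/2}\sum_iV_ix_i\|_K\le E|V_1|\,n^{-1/2}\sum_i\|x_i\|_K$.
\end{itemize}
The same argument works inside your $L_2$ reduction, because $\sum_i\|x_i\|_K^2\le(\sum_i\|x_i\|_K)^2$. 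Concretely, $|I_\ell|^{-1}\sum_{i\in I_\ell}a_i^2b_i^2\le |I_\ell|\,E_{n,\ell}[a_i^2]\,E_{n,\ell}[b_i^2]=o_p(1)$ under a product rate $o_p(n^{-1/2})$. For $t$, squaring Assumption 5(ii) suffices directly.

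Your observation that the multipliers make the bias condition (Assumption~6) unnecessary is correct. It is the product rates that cannot be dropped.
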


Thus, the bootstrap consistently estimates the null distribution without re-estimating the nuisance functions and reproduces the infeasible test's local-power function.

\subsection{Fixed Alternatives}

Suppose $F_0\in\mathcal F\setminus\mathcal F_0$ and $a(\cdot)=E[\varepsilon(W_i,\eta_0)\mid X_i=\cdot]\in L_2(F_X).$ Next result establishes conditions for the consistency of the tests.

\begin{theorem}[Consistency]
\label{thm:consistency}
Suppose Assumptions~2 and 8 and Proposition~\ref{prop:testability} hold, with
\[
(\Pi_{0,2}^\perp)'a\neq0,
\qquad
E[\|r_0(W_i,\cdot)\|_K]<\infty.
\]
Suppose Assumptions~3--5 also hold with $E_{F_n}$ replaced by $E$, the $E_H$ conditions omitted, and the $o_p(n^{-1/2})$ rates in Assumption~5 replaced by $o_p(1)$. If
\[
\frac{1}{n}\sum_{\ell=1}^L\sum_{i\in I_\ell}
\|\hat r_{i,\ell}\|_K^2=O_p(1),
\]
then
\[
\frac{LRK_n}{n}
\overset{p}{\longrightarrow}
\sup_{\|k\|_K\leq1}\Delta(k)^2>0.
\]
Consequently, $LRK_n\overset{p}{\longrightarrow}\infty$, and the LRK test is consistent.
\end{theorem}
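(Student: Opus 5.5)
The plan is to show that $n^{-1/2}\hat R_n$ converges in $\mathcal H_K$ norm, in probability, to the population mean $\mu\equiv E[r_0(W_i,\cdot)]$. Then I will identify $\|\mu\|_K^2$ with $\sup_{\|k\|_K\le1}\Delta(k)^2$, where $\Delta(k)=E[\psi(W_i,\eta_0,\alpha_0,k)]=\langle\mu,k\rangle_K$, and show that it is positive. Since $LRK_n/n=\|n^{-1/2}\hat R_n\|_K^2$ by \eqref{LRKstat}, continuity of the norm then gives the claimed limit, and $LRK_n\to\infty$ follows at once. The bootstrap critical value is $O_p(1)$, because $LRK_n^*$ has conditional mean $n^{-1}\sum_i\|\hat r_{i,\ell}\|_K^2=O_p(1)$, so the test is consistent.

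The first step decomposes
\[
n^{-1/2}\hat R_n-\mu=\Big\{\frac1n\sum_{i=1}^n r_0(W_i,\cdot)-\mu\Big\}+\frac1n\sum_{\ell=1}^L\sum_{i\in I_\ell}\{\hat r_{i,\ell}-r_0(W_i,\cdot)\}.
\]
The first term is $o_p(1)$ by the strong law of large numbers in the separable Hilbert space $\mathcal H_K$. This uses only the Bochner integrability $E\|r_0(W_i,\cdot)\|_K<\infty$, which is why that moment condition, rather than Assumption 7, is assumed. For the second term, I would reuse the algebraic decomposition from the proof of Theorem~\ref{thm:oracleequiv} at the scale $n^{-1}$ instead of $n^{-1/2}$. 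The difference $\hat r_{i,\ell}-r_0$ splits into the following pieces: $(\hat\varepsilon_{i,\ell}-\varepsilon_i)B_0$, $\varepsilon_i(\hat B_\ell-B_0)$, $(\hat\rho_{i,\ell}-\rho_i)\widetilde r_{\gamma,0}$, $\rho_i\Delta_{\gamma,\ell}$, and cross/product terms involving $\hat C_n-C_0$, $\hat A_\ell-A_{F_0}$ and $\hat\zeta_\ell$. Conditionally on $I_\ell^c$, each piece's fold-average has $\mathcal H_K$ norm bounded by its conditional $L_2$ norm, by Jensen and Markov. Assumption~4(i)--(iv) makes the linear pieces $o_p(1)$. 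The product pieces are $o_p(1)$ by the relaxed Assumption~5 (rates $o_p(1)$) together with the $O_p(1)$ bounds in Assumption~4(ii). Terms where $\hat C_n-C_0$ multiplies an average of $\varepsilon_i\hat A_\ell(X_i)$ are handled with Cauchy--Schwarz. Unlike the oracle theorem, no centering, orthogonality (Assumption 6), or empirical-process step is needed, because we only need $o(1)$ rather than $o(n^{-1/2})$ accuracy. The hypothesis $n^{-1}\sum\|\hat r_{i,\ell}\|_K^2=O_p(1)$ serves to guarantee uniform integrability, so that conditional Markov bounds pass to unconditional $o_p(1)$ statements.

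The second step establishes positivity. By Assumption~8, $\mu=E[\varepsilon_i\Phi_0(X_i,\cdot)]=E[a(X_i)\Phi_0(X_i,\cdot)]=(\Pi_0^\perp)'a=\mu_a$, using iterated expectations, strong measurability, and $E\|\Phi_0\|_K^2<\infty$ with $a\in L_2(F_X)$. Proposition~\ref{prop:testability} then gives $\mu_a\neq0$ exactly when $(\Pi_{0,2}^\perp)'a\neq0$, which is assumed. Hence $\sup_{\|k\|_K\le1}\Delta(k)^2=\|\mu\|_K^2>0$.

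I expect the main obstacle to be the term $\varepsilon_i(\hat B_\ell(X_i,\cdot)-B_0(X_i,\cdot))$ and its $\rho$-analogue. Here $\hat C_n$ is a full-sample quantity, so it is not independent of fold $I_\ell$, and conditioning on $I_\ell^c$ alone does not suffice. I would first factor it as $(\hat C_n-C_0)'\,n^{-1}\sum_{i\in I_\ell}\varepsilon_i\hat A_\ell(X_i)$ plus $C_0'\,n^{-1}\sum_{i\in I_\ell}\varepsilon_i\{\hat A_\ell-A_{F_0}\}(X_i)$. I would then bound the first factor via Assumption~4(iii) and the second via Assumption~4(ii), both purely through norms, which sidesteps the dependence.
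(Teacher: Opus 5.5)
Your proposal is correct and follows the paper's proof essentially step for step. It uses the same six-term decomposition from Theorem~\ref{thm:oracleequiv}, bounded fold by fold through conditional moments with $\hat C_n-C_0$ pulled out by its norm; the Hilbert-space law of large numbers under $E\|r_0(W_i,\cdot)\|_K<\infty$; positivity of $\|\mu\|_K^2=\sup_{\|k\|_K\le1}\Delta(k)^2$ via Proposition~\ref{prop:testability}; and the bound $c^*_{1-\varsigma}\le\varsigma^{-1}E^*[LRK_n^*]=O_p(1)$.

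One small correction: the hypothesis $n^{-1}\sum_{\ell}\sum_{i\in I_\ell}\|\hat r_{i,\ell}\|_K^2=O_p(1)$ is needed only for that bootstrap-quantile bound, not for uniform integrability. Conditional Markov bounds pass to unconditional $o_p(1)$ statements without any uniform-integrability argument.
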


The fixed-alternative result requires only consistency and $o_p(1)$ product rates, rather than the $n^{-1/4}$ rates sufficient for local oracle equivalence. If $\Pi_{0,2}^\perp$ is an orthogonal projection, then the testability condition reduces to
$\Pi_{0,2}^\perp a\neq0$.
\section{Monte Carlo and Empirical Application}

We examine finite-sample size, power, feasible-oracle equivalence, and sensitivity to nuisance-space complexity. All tests have nominal level $5\%$ and use five-fold cross-fitting, $999$ Mammen multiplier-bootstrap draws, and $1{,}000$ Monte Carlo replications unless stated otherwise. Rejection frequencies are reported in percent.

\subsection{Design 1: High-Dimensional Nuisance Dictionaries}

The first experiment corresponds to Example~1. Let
\[
X_i=(X_{i1},X_{i2})'\sim N(0,I_2),
\qquad
n\in\{200,400\}.
\]
We use additive and tensor-product Fourier dictionaries containing $J\in\{50,100,150\}$ penalized terms. The dictionaries include the sine terms used in the null DGP, while the tensor-product dictionaries additionally contain interactions. The intercept and linear terms are unpenalized, and the remaining coefficients are estimated by Lasso with
\[
\lambda_n=0.5\sqrt{\frac{\log J}{n}}.
\]
The test uses the Gaussian kernel
\[
K(x,x')
=
\exp\{-\|x-x'\|^2/\sigma^2\},
\]
where $\sigma$ is the sample median of the nonzero pairwise distances. Cross-fitted residuals and, in the logistic model, derivative weights are combined in a pooled empirical projection.

\paragraph{Size under regularization bias.}
Under the null, the linear DGP is
\[
Y_i
=
X_{i1}+X_{i2}
+
0.5\{\sin(X_{i1})+\sin(X_{i2})\}
+
\varepsilon_i,
\qquad
\varepsilon_i\sim N(0,1),
\]
and the logistic DGP is
\[
Y_i\mid X_i
\sim
\operatorname{Bernoulli}
\left[
\Lambda\!\left(
0.5X_{i1}+0.5X_{i2}
+
0.5\{\sin(X_{i1})+\sin(X_{i2})\}
\right)
\right].
\]
The nonzero sine coefficients are penalized, so Lasso shrinkage generates regularization bias even though the null is correctly specified.

We compare the proposed cross-fitted LRK test (LRK-CF), which uses
five-fold cross-fitted residuals and pooled projection weights, with an infeasible oracle using the true residuals and derivative weights and with a non-orthogonal plug-in statistic whose active-set
influence-function bootstrap accounts for regular estimation effects
but not shrinkage bias. Table~\ref{tab:mc-design1-size} shows that LRK-CF remains close to the oracle: rejection frequencies range from $5.7\%$ to $6.6\%$ in the linear model and from $4.0\%$ to $5.8\%$ in the logistic model. The high-dimensional plug-in test over-rejects substantially in the linear model, reaching $17.7\%$. A correctly specified unpenalized low-dimensional plug-in test remains near $5\%$, confirming that the distortion is due to regularization rather than bootstrap failure. The low-dimensional and non-cross-fitted results are reported in the supplementary material.

\begin{table}[!htbp]
\centering
\small
\caption{Design 1: Empirical size under regularization bias.}
\label{tab:mc-design1-size}
\setlength{\tabcolsep}{5pt}
\begin{tabular}{|cc|ccc|ccc|}
\hline
&&
\multicolumn{3}{c|}{Linear}
&
\multicolumn{3}{c|}{Logistic}
\\
\cline{3-8}
$n$ & $J$
& LRK-CF & Oracle & Plug-in
& LRK-CF & Oracle & Plug-in
\\
\hline
200 &  50 & 6.4 & 5.3 & 11.4 & 4.0 & 4.1 & 6.9 \\
200 & 100 & 5.8 & 4.4 & 13.1 & 4.5 & 4.2 & 6.4 \\
200 & 150 & 6.6 & 5.0 & 15.6 & 4.6 & 3.9 & 6.5 \\
400 &  50 & 5.7 & 5.8 & 13.3 & 5.0 & 5.7 & 6.8 \\
400 & 100 & 6.5 & 5.0 & 17.2 & 5.6 & 5.4 & 6.7 \\
400 & 150 & 5.9 & 4.3 & 17.7 & 5.8 & 6.0 & 6.5 \\
\hline
\end{tabular}

\smallskip
\parbox{0.94\linewidth}{\footnotesize
\emph{Notes.} Oracle uses the true residuals and derivative weights. Plug-in is the non-orthogonal statistic with an active-set influence-function bootstrap. The intercept and linear terms are unpenalized; the $J$ Fourier terms are penalized.}
\end{table}

\paragraph{Power and the nuisance space.}
We perturb the baseline regressions along
\[
a(X_i)=\sin(2X_{i1}+2X_{i2}),
\qquad
\delta\in\{0,0.1,0.2,\ldots,1\}.
\]
The linear and logistic DGPs are, respectively,
\[
Y_i=X_{i1}+X_{i2}+\delta a(X_i)+\varepsilon_i
\]
and
\[
Y_i\mid X_i
\sim
\operatorname{Bernoulli}
\left[
\Lambda\!\left(
0.5X_{i1}+0.5X_{i2}+\delta a(X_i)
\right)
\right].
\]
By construction,
\[
a\notin\Gamma_J^{\mathrm{add}},
\qquad
a\in\Gamma_J^{\mathrm{tensor}}.
\]
Thus, the perturbation is a genuine alternative under the additive dictionary but a nuisance direction under the tensor-product dictionary.

Figure~\ref{fig:mc-example1-power}, for $J=100$, shows increasing rejection probabilities under the additive dictionary. At $\delta=1$, LRK-CF rejection in the linear model rises from $75.6\%$ for $n=200$ to $99.5\%$ for $n=400$; in the logistic model it rises from $15.0\%$ to $40.4\%$. Under the tensor-product dictionary, rejection remains near its null value because the perturbation belongs to the nuisance space. Cross-fitted and non-cross-fitted power curves are very similar. Results for $J=50$ and $J=150$ lead to the same conclusions and are reported in the supplementary material. These findings illustrate Proposition~\ref{prop:testability}: LRK detects departures outside the nuisance space and is insensitive to directions within it.

\begin{figure}[!htbp]
\centering
\includegraphics[width=0.88\textwidth]{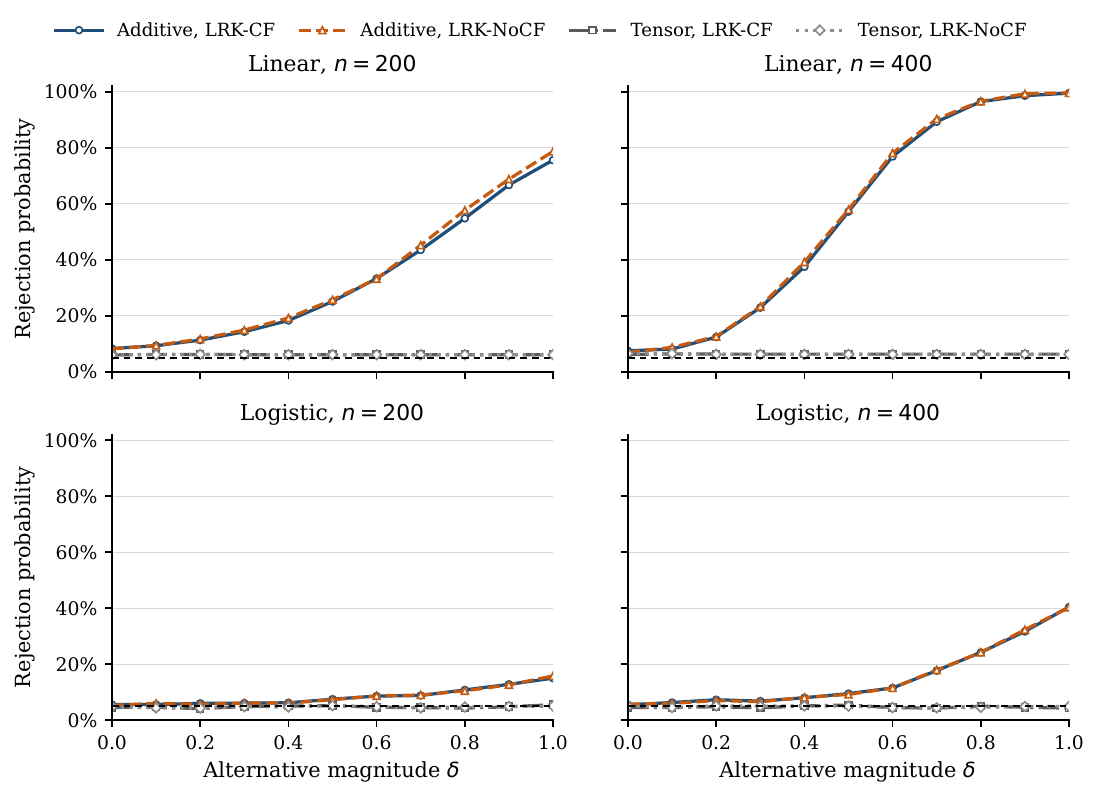}
\caption{Design 1: Rejection probabilities for $J=100$. The perturbation lies outside the additive nuisance space but inside the tensor-product nuisance space. LRK-CF and LRK-NoCF use cross-fitted and full-sample residuals, respectively; the dashed line is the nominal $5\%$ level.}
\label{fig:mc-example1-power}
\end{figure}

\subsection{Design 2: Significance Testing with Machine Learning}

The second experiment corresponds to Example~2. The DGP is
\[
Y_i
=
g_0(Z_i)
+
\frac{\delta}{\sqrt n}D_i
+
\varepsilon_i,
\qquad
g_0(Z_i)=\sin(Z_{i1})+Z_{i2}^2,
\]
where
\[
Z_{ij}\sim U[-1,1],
\qquad
D_i,\varepsilon_i\sim N(0,1),
\]
and all variables are mutually independent. The null is $\delta=0$, while
$\delta\in\{1,2,3,4\}$ gives local alternatives.

The nuisance quantities are estimated by either local-kernel regression or random forests. We consider distance and Gaussian kernels and the kernel induced by the integrated conditional moment statistic of Delgado and Gonz\'alez-Manteiga (2001), denoted D-GM.

Table~\ref{tab:design2_n400} reports results for $n=400$. Size ranges from $3.2\%$ to $5.1\%$ across all learners and kernels, and power increases monotonically with $\delta$. Distance and Gaussian kernels generally outperform D-GM, while results are stable across nuisance learners. Results for $n=200$, reported in the supplementary material, yield the same conclusions.

\begin{table}[!htbp]
\centering
\small
\caption{Design 2: Significance testing with flexible nuisance estimation.}
\label{tab:design2_n400}
\setlength{\tabcolsep}{6pt}
\begin{tabular}{|c|c|ccccc|}
\hline
Learner & Kernel
& $\delta=0$ & $\delta=1$ & $\delta=2$ & $\delta=3$ & $\delta=4$
\\
\hline
Local kernel & Distance & 5.1 & 13.6 & 43.1 & 79.1 & 96.8 \\
Local kernel & Gaussian & 4.7 & 12.0 & 36.3 & 69.7 & 91.7 \\
Local kernel & D-GM     & 3.8 &  9.6 & 29.0 & 57.4 & 84.2 \\
Random forest & Distance & 4.3 & 14.1 & 46.2 & 76.4 & 94.7 \\
Random forest & Gaussian & 4.1 & 12.0 & 37.8 & 68.9 & 90.0 \\
Random forest & D-GM     & 3.2 &  7.2 & 26.7 & 53.9 & 77.5 \\
\hline
\end{tabular}

\smallskip
\parbox{0.88\linewidth}{\footnotesize
\emph{Notes.} $n=400$. Nuisance quantities are estimated by five-fold cross-fitting.}
\end{table}

\subsection{Design 3: Constant CATT}

The third experiment corresponds to Example~3. Let
\[
X_i=(X_{i1},X_{i2})'\sim N(0,I_2),
\qquad
D_i=1\{X_{i1}+u_i\geq0\},
\qquad
u_i\sim\operatorname{Logistic}(0,1),
\]
so that $e_0(X_i)=\Lambda(X_{i1})$. Potential outcomes satisfy
\[
Y_i(0)=\mu_0(X_i)+\varepsilon_{0i},
\qquad
Y_i(1)=\mu_0(X_i)+\theta(X_i)+\varepsilon_{1i},
\]
where $\varepsilon_{0i},\varepsilon_{1i}\overset{iid}{\sim}N(0,1)$ are independent of $(X_i,D_i)$. We test
\[
H_0:\theta(X_i)=1
\quad\text{a.s. among the treated}
\]
against
\[
\theta(X_i)=1+\delta a(X_i),
\qquad
\delta\in\{0,0.1,0.2,0.3,0.5\}.
\]
The two designs are
\[
\begin{array}{lll}
H_{31}:&
\mu_0(X)=X_1,
&
a(X)=X_2^2-1,
\\[0.15cm]
H_{32}:&
\mu_0(X)=X_1+\frac12X_2^2,
&
a(X)=X_1X_2.
\end{array}
\]

The nuisance functions $\mu_0$ and $e_0$ are estimated by five-fold cross-fitted post-Lasso using an $80$-term dictionary of polynomial and trigonometric basis functions, with the penalty and active set recomputed in each training fold. Thus, estimation does not exploit the low-dimensional functional forms of the DGP. We use the doubly robust CATT score and the finite-dimensional treated-population correction from Example~3, together with a Gaussian kernel with median-distance bandwidth.

Table~\ref{tab:design3-catt} shows accurate size in both designs and sample sizes, with null rejection frequencies between $4.1\%$ and $6.1\%$. Power increases with $\delta$ and $n$: under $H_{31}$, rejection reaches $80.0\%$ for $n=500$ and $\delta=0.5$; power is lower but still increasing under $H_{32}$. Comparisons with the non-orthogonal plug-in test are reported in the supplementary material.

\begin{table}[!htbp]
\centering
\small
\caption{Design 3: Constant CATT with high-dimensional post-Lasso.}
\label{tab:design3-catt}
\setlength{\tabcolsep}{6pt}
\begin{tabular}{|cc|ccccc|}
\hline
$n$ & Design
& $\delta=0$ & $\delta=0.1$ & $\delta=0.2$ & $\delta=0.3$ & $\delta=0.5$
\\
\hline
250 & $H_{31}$ & 4.2 & 5.6 & 10.0 & 18.9 & 44.2 \\
250 & $H_{32}$ & 6.1 & 5.2 &  6.6 & 10.5 & 17.6 \\
500 & $H_{31}$ & 4.9 & 7.8 & 17.5 & 38.9 & 80.0 \\
500 & $H_{32}$ & 4.1 & 6.4 &  9.2 & 19.4 & 42.7 \\
\hline
\end{tabular}

\smallskip
\parbox{0.9\linewidth}{\footnotesize
\emph{Notes.} $\mu_0$ and $e_0$ are estimated by five-fold cross-fitted post-Lasso using an $80$-term polynomial and trigonometric dictionary.}
\end{table}
%
%
\subsection{Empirical Application: Constant Treatment Effects in the NSW Data}

We apply the LRK test to the National Supported Work (NSW) job-training data studied by LaLonde (1986) and Dehejia and Wahba (1999). Following these studies, we combine NSW treated participants with a nonexperimental comparison group from the Panel Study of Income Dynamics (PSID). Unlike the original randomized experiment, identification in this sample relies on unconfoundedness and overlap, making it a useful application of the doubly robust CATT score in Example~3.

We test
\[
H_0:
E[Y(1)-Y(0)\mid X,D=1]
=
\theta_0.
\]
The nuisance functions
\[
\mu_0(X)=E[Y\mid X,D=0],
\qquad
e_0(X)=P(D=1\mid X)
\]
are estimated by five-fold cross-fitted Lasso and logistic Lasso using age, education, race, marital status, high-school completion, and earnings in 1974 and 1975. We use a Gaussian kernel with median-distance bandwidth and $999$ multiplier-bootstrap replications.

The sample contains $614$ individuals: $185$ NSW participants and $429$ nonexperimental PSID comparison units. The estimated CATT is $\$1{,}106$, with cross-fitted influence-function standard error $\$798$ and $95\%$ confidence interval $[-\$458,\$2{,}670]$. The LRK statistic is $26.39$, compared with a $5\%$ bootstrap critical value of $33.90$, where both quantities are multiplied by $10^{-6}$. The bootstrap $p$-value is $0.124$, so the constant-CATT null is not rejected.

The conclusion is unchanged with random-forest or hybrid nuisance learners and alternative propensity-score trimming rules. The random-forest propensity estimates exhibit weaker overlap and require trimming for approximately one-quarter of the observations, but still yield no rejection. Full robustness results, overlap diagnostics, and descriptive subgroup estimates are reported in the supplementary material. Overall, the data provide no statistically significant evidence that the program effect varies systematically with the observed pre-treatment covariates.
\section{Conclusion}

This paper develops locally robust kernel tests for semiparametric conditional moment restrictions with high- or infinite-dimensional nuisance parameters estimated by machine learning. By combining Neyman-orthogonal RKHS-indexed empirical processes with cross-fitting, the proposed tests are first-order insensitive to nuisance estimation. We establish uniform oracle equivalence under local alternatives, characterize local power and consistency, and justify a multiplier bootstrap that avoids repeated nuisance estimation.

The framework covers high-dimensional regression, significance testing, and treatment-effect homogeneity. The simulations show satisfactory finite-sample size and power, while the NSW application illustrates its use with observational data and flexible nuisance adjustment. Extensions to dependent data, generated or recursively estimated nuisance functions, and partially identified models are left for future research.
\ifblind\else
\section*{Acknowledgments}
I thank Gabriele Gazzei for capable research assistance.
\section*{Funding}
This work was supported by MICIN/AEI/10.13039/501100011033, grant CEX2021-001181-M; Comunidad de Madrid, grants EPUC3M11 (V PRICIT) and H2019/HUM-5891; and grant PID2021-127794NB-I00 (MCI/AEI/FEDER, UE).

\medskip
\noindent\textbf{Disclosure Statement.}
The author reports there are no competing interests to declare.
\fi

\medskip
\noindent\textbf{Data Availability Statement.}
The NSW data are publicly available as the \texttt{lalonde} data set
in the \texttt{MatchIt} R package. Monte Carlo data are simulated, and replication code is available from the author upon request.

\smallskip
\noindent\textbf{Generative Artificial Intelligence Use.}
OpenAI ChatGPT (GPT-5, July 2026) assisted with editing and condensing
the exposition, \LaTeX{} and code, and consistency checks of notation
and derivations, to improve clarity and implementation. The author
reviewed and verified all output and assumes full responsibility.

\newpage
\begin{appendix}

\pagenumbering{arabic}
\renewcommand{\thepage}{S-\arabic{page}}

\begin{center}
{\Large\bfseries Supplementary Material for}\\[0.5em]
{\Large\bfseries Locally Robust Kernel Specification Tests\\
for Conditional Moment Restrictions}
\end{center}

\vspace{1em}

\noindent
This supplement contains the proofs of the main results, a discussion
and verification of the assumptions in the motivating examples,
additional Monte Carlo results, and robustness analyses for the
empirical application.

\section{Proofs of the Main Results}
\pagenumbering{arabic}
\renewcommand{\thepage}{S-\arabic{page}}

\noindent
\textsc{Proof of Proposition 1.}
By Assumption 1(c) and the definition of $\alpha_{0\gamma}(\cdot,k)$,
\begin{eqnarray}
\frac{d}{d\tau}
E[
k(X_i)\varepsilon(W_i,\gamma_{F_\tau},\theta_0)
]
&=&
\frac{d}{d\tau}
E[
V_\gamma(Z_i,k)\gamma_{F_\tau}(Z_i)
]
\notag \\
&=&
-
\frac{d}{d\tau}
E[
\alpha_{0\gamma}(Z_i,k)V_\rho(Z_i)\gamma_{F_\tau}(Z_i)
]
\notag \\
&=&
-
\frac{d}{d\tau}
E[
\alpha_{0\gamma}(Z_i,k)
\rho(W_i,\gamma_{F_\tau})
]
\notag \\
&=&
\int
\alpha_{0\gamma}(z,k)\rho(w,\gamma_0)
H(dw).
\label{1g}
\end{eqnarray}
The same argument, applied componentwise with $k$ replaced by
$A_{F_0}$, gives
\begin{equation}
\frac{d}{d\tau}
E[
A_{F_0}(X_i)\varepsilon(W_i,\gamma_{F_\tau},\theta_0)
]
=
\int
\alpha_{0\gamma}(z,A_{F_0})\rho(w,\gamma_0)
H(dw).
\label{1A}
\end{equation}

Next, implicit differentiation of eq. (3) along $F_\tau$ at
$\tau=0$ and Assumption 1(f) gives
\[
0
=
\int
A_{F_0}(x)\varepsilon(w,\eta_0)H(dw)
+
\int
\alpha_{0\gamma}(z,A_{F_0})\rho(w,\gamma_0)H(dw)
+
J_0
\left.
\frac{d\theta_{F_\tau}}{d\tau}
\right|_{\tau=0}.
\]
Since $J_0$ is nonsingular by
1(f),
\begin{equation}
\left.
\frac{d\theta_{F_\tau}}{d\tau}
\right|_{\tau=0}
=
-
J_0^{-1}
\int
\left[
A_{F_0}(x)\varepsilon(w,\eta_0)
+
\alpha_{0\gamma}(z,A_{F_0})\rho(w,\gamma_0)
\right]
H(dw).
\label{der_theta}
\end{equation}

Therefore,
\begin{equation}
\frac{d}{d\tau}
E[
k(X_i)\varepsilon(W_i,\gamma_0,\theta_{F_\tau})
]
=
G_0(k)
\left.
\frac{d\theta_{F_\tau}}{d\tau}
\right|_{\tau=0}
\label{1t}
\end{equation}

Combining (\ref{1g}), (\ref{der_theta}) and (\ref{1t}) yields
\[
\frac{d}{d\tau}
E[
k(X_i)\varepsilon(W_i,\gamma_{F_\tau},\theta_{F_\tau})
]
=
\int
\phi(w,\eta_0,\alpha_0,k)H(dw),
\]
where
\[
\phi(w,\eta_0,\alpha_0,k)
=
\alpha_{0\gamma}(z,k)\rho(w,\gamma_0)
-
G_0(k)J_0^{-1}
\left[
A_{F_0}(x)\varepsilon(w,\eta_0)
+
\alpha_{0\gamma}(z,A_{F_0})\rho(w,\gamma_0)
\right].
\]
Equivalently, with
\[
\alpha_{0\theta}(x,k)
=
G_0(k)J_0^{-1}A_{F_0}(x),
\qquad
\tilde k(x)
=
k(x)-\alpha_{0\theta}(x,k),
\]
and using linearity of $k\mapsto\alpha_{0\gamma}(\cdot,k)$,
\[
\phi(w,\eta_0,\alpha_0,k)
=
-\alpha_{0\theta}(x,k)\varepsilon(w,\eta_0)
+
\alpha_{0\gamma}(z,\tilde k)\rho(w,\gamma_0).
\]
Hence
\[
\psi(w,\eta_0,\alpha_0,k)
=
\tilde k(x)\varepsilon(w,\eta_0)
+
\alpha_{0\gamma}(z,\tilde k)\rho(w,\gamma_0).
\]

By construction,
\[
E_F[
\phi(W_i,\eta_F,\alpha_F,k)
]
=
0
\]
for all $F\in\mathcal F$ in a neighborhood of $F_0$. Hence, differentiating
this identity along $F_\tau$ at $\tau=0$ gives
\[
\frac{d}{d\tau}
E[
\phi(W_i,\eta_{F_\tau},\alpha_{F_\tau},k)
]
=
-
\int
\phi(w,\eta_0,\alpha_0,k)H(dw).
\]
Finally, since
\[
\psi(w,\eta,\alpha,k)
=
k(x)\varepsilon(w,\eta)
+
\phi(w,\eta,\alpha,k),
\]
the result follows. $\square$\bigskip

\noindent
\textsc{Proof of Theorem 1.}
For $i\in I_\ell$, write
\[
\Delta\varepsilon_{i,\ell}
=
\hat\varepsilon_{i,\ell}-\varepsilon_i,
\qquad
\Delta\rho_{i,\ell}
=
\hat\rho_{i,\ell}-\rho_i,
\]
and, when $\theta_0$ is present,
\[
\Delta A_\ell
=
\hat A_\ell-A_{F_0},
\qquad
\Delta C_n
=
\hat C_n-C_0.
\]
Let $r_{0i}=r_0(W_i,\cdot)=\varepsilon(W_i,\eta_0)B_0(X_i,\cdot)+\rho(W_i,\gamma_0)\widetilde r_{\gamma,0}(Z_i,\cdot)
$. By the Riesz representation theorem,
\[
\sup_{\|k\|_K\le1}
|\hat\nu_n(k)-\nu_n^0(k)|
=
\left\|
\frac1{\sqrt n}
\sum_{\ell=1}^L\sum_{i\in I_\ell}
(\hat r_{i,\ell}-r_{0i})
\right\|_K.
\]

When $\theta_0$ is present, define
\[
d_{i,\ell}
=
\varepsilon_i\hat A_\ell(X_i)
+
\rho_i\hat\zeta_\ell(Z_i),
\qquad
\Delta d_{i,\ell}
=
\Delta\varepsilon_{i,\ell}\hat A_\ell(X_i)
+
\Delta\rho_{i,\ell}\hat\zeta_\ell(Z_i).
\]
Using the definitions of $B_0$, $\hat B_\ell$,
$\widetilde r_{\gamma,0}$, and
$\widetilde{\hat r}_{\gamma,\ell}$ gives
\[
\hat r_{i,\ell}-r_{0i}
=
u_{i,\ell}+v_{i,\ell}+p_{i,\ell}
+q_{i,\ell}+t_{i,\ell}+s_{i,\ell},
\]
where
\[
\begin{aligned}
u_{i,\ell}
&=
\Delta\varepsilon_{i,\ell}B_0(X_i,\cdot)
+
\Delta\rho_{i,\ell}
\widetilde r_{\gamma,0}(Z_i,\cdot),\\
v_{i,\ell}
&=
\rho_i\Delta_{\gamma,\ell}(Z_i,\cdot),\\
p_{i,\ell}
&=
\Delta\rho_{i,\ell}
\Delta_{\gamma,\ell}(Z_i,\cdot),\\
q_{i,\ell}
&=
-\Delta C_n'd_{i,\ell},\\
t_{i,\ell}
&=
-\Delta C_n'\Delta d_{i,\ell},\\
s_{i,\ell}
&=
-\hat\varepsilon_{i,\ell}
C_0'\Delta A_\ell(X_i).
\end{aligned}
\]
The last three terms are absent when $\theta_0$ is not present.

Conditional on $I_\ell^c$, Assumptions~4(i) and 4(iv) imply
\[
E_{F_n}[\|u_{i,\ell}\|_K^2\mid I_\ell^c]
=
o_p(1).
\]
Hence the centered sum of $u_{i,\ell}$ over $I_\ell$, normalized by
$\sqrt n$, is $o_p(1)$. Moreover, Assumption~6 gives
\[
\|E_{F_n}[u_{i,\ell}\mid I_\ell^c]\|_K
=
o_p(n^{-1/2}).
\]
Since $L$ is fixed,
\[
\frac1{\sqrt n}
\sum_{\ell=1}^L\sum_{i\in I_\ell}u_{i,\ell}
=
o_p(1).
\]

Similarly, Assumption~4(i) implies that the centered sum of
$v_{i,\ell}$ is $o_p(1)$. For every $k\in\mathcal H_K$,
Assumption~2 gives
\[
\langle\Delta_{\gamma,\ell}(\cdot),k\rangle_K
\in\Gamma
\quad\text{w.p.a.1.}
\]
Hence the nuisance-identifying equation at $F_0$ and
$F_n=(1-n^{-1/2})F_0+n^{-1/2}H$ yield
\[
E_{F_n}[v_{i,\ell}\mid I_\ell^c]
=
n^{-1/2}
E_H[
\rho_i\Delta_{\gamma,\ell}(Z_i,\cdot)
\mid I_\ell^c].
\]
By Assumption~4(i) and conditional Cauchy--Schwarz, the last expectation
is $o_p(1)$ in $\mathcal H_K$. Therefore,
\[
\frac1{\sqrt n}
\sum_{\ell=1}^L\sum_{i\in I_\ell}v_{i,\ell}
=
o_p(1).
\]

Let
\[
E_{n,\ell}[f_i]
=
|I_\ell|^{-1}\sum_{i\in I_\ell}f_i.
\]
Then
\[
\left\|
\frac1{\sqrt n}\sum_{i\in I_\ell}p_{i,\ell}
\right\|_K
\le
\frac{|I_\ell|}{\sqrt n}
E_{n,\ell}[(\Delta\rho_{i,\ell})^2]^{1/2}
E_{n,\ell}[
\|\Delta_{\gamma,\ell}(Z_i,\cdot)\|_K^2
]^{1/2}.
\]
By cross-fitting, conditional Markov's inequality, and
Assumption~5(iii), summing over the fixed number of folds gives
\[
\frac1{\sqrt n}
\sum_{\ell=1}^L\sum_{i\in I_\ell}p_{i,\ell}
=
o_p(1).
\]

We next control the terms involving $\theta_0$. Conditional on
$I_\ell^c$, the null conditional moment and nuisance-identifying
restrictions imply
\[
E[d_{i,\ell}\mid I_\ell^c]=0.
\]
Under $F_n$, its conditional mean is $O_p(n^{-1/2})$: for the first
component this follows from eq. (10) and the common
$X$-marginal, and for the second from Assumption~4(ii) and
Cauchy--Schwarz under $H$. Its conditional second moment is $O_p(1)$
by Assumption~4(ii). Thus,
\[
\frac1{\sqrt n}
\sum_{\ell=1}^L\sum_{i\in I_\ell}d_{i,\ell}
=
O_p(1).
\]
Since $\|\Delta C_n\|_{K,p}=o_p(1)$,
\[
\frac1{\sqrt n}
\sum_{\ell=1}^L\sum_{i\in I_\ell}q_{i,\ell}
=
o_p(1).
\]

Moreover,
\[
\begin{aligned}
\left\|
\frac1{\sqrt n}\sum_{i\in I_\ell}t_{i,\ell}
\right\|_K
\lesssim
\frac{|I_\ell|}{\sqrt n}
\|\Delta C_n\|_{K,p}
\Bigg\{
&E_{n,\ell}[
(\Delta\varepsilon_{i,\ell})^2
|\hat A_\ell(X_i)|^2]
\\
&+
E_{n,\ell}[
(\Delta\rho_{i,\ell})^2
|\hat\zeta_\ell(Z_i)|^2]
\Bigg\}^{1/2}.
\end{aligned}
\]
Assumption~5(ii) and conditional Markov's inequality therefore imply
\[
\frac1{\sqrt n}
\sum_{\ell=1}^L\sum_{i\in I_\ell}t_{i,\ell}
=
o_p(1).
\]

Finally, decompose
\[
s_{i,\ell}
=
-\varepsilon_iC_0'\Delta A_\ell(X_i)
-
\Delta\varepsilon_{i,\ell}
C_0'\Delta A_\ell(X_i)
\equiv
s_{1i,\ell}+s_{2i,\ell}.
\]
Assumption~4(ii) implies that the centered sum of $s_{1i,\ell}$ is
$o_p(1)$. In addition, eq. (10) gives
\[
\|E_{F_n}[s_{1i,\ell}\mid I_\ell^c]\|_K
\le
n^{-1/2}\|C_0\|_{K,p}\|a\|_{L_2(F_X)}
E[
|\Delta A_\ell(X_i)|^2
\mid I_\ell^c]^{1/2}
=
o_p(n^{-1/2}).
\]
Thus the normalized sum of $s_{1i,\ell}$ is $o_p(1)$. For the product
term,
\[
\left\|
\frac1{\sqrt n}\sum_{i\in I_\ell}s_{2i,\ell}
\right\|_K
\le
\frac{|I_\ell|}{\sqrt n}\|C_0\|_{K,p}
E_{n,\ell}[(\Delta\varepsilon_{i,\ell})^2]^{1/2}
E_{n,\ell}[|\Delta A_\ell(X_i)|^2]^{1/2},
\]
which is $o_p(1)$ by Assumption~5(i). Consequently,
\[
\frac1{\sqrt n}
\sum_{\ell=1}^L\sum_{i\in I_\ell}s_{i,\ell}
=
o_p(1).
\]

Combining the preceding bounds,
\[
\left\|
\frac1{\sqrt n}
\sum_{\ell=1}^L\sum_{i\in I_\ell}
(\hat r_{i,\ell}-r_{0i})
\right\|_K
=
o_p(1),
\]
which proves the result.
\hfill$\square$

\noindent
\textsc{Proof of Theorem 2.}
Let
\[
\bar r_n
=
E_{F_n}[r_0(W_i,\cdot)],
\qquad
\xi_{ni}
=
r_0(W_i,\cdot)-\bar r_n.
\]
Then
\[
R_n^0
=
\frac1{\sqrt n}\sum_{i=1}^n\xi_{ni}
+
\sqrt n\,\bar r_n.
\]

By the identification of the first step, Assumption 3, and the null conditional moment restriction,
\[
E[r_0(W_i,\cdot)]
=0.
\]
Therefore,
\[
\bar r_n
=
(1-n^{-1/2})E[r_0(W_i,\cdot)]
+
n^{-1/2}E_H[r_0(W_i,\cdot)]
=
n^{-1/2}\mu_H.
\]

Assumption~7 and Jensen's inequality imply
\[
\sup_n
E_{F_n}\!\left[
\|\xi_{ni}\|_K^{2+\delta}
\right]
<\infty.
\]
Hence, for every $c>0$,
\[
E_{F_n}\!\left[
\|\xi_{ni}\|_K^2
1\{\|\xi_{ni}\|_K>c\sqrt n\}
\right]
\le
\frac{
E_{F_n}[\|\xi_{ni}\|_K^{2+\delta}]
}{
c^\delta n^{\delta/2}
}
\longrightarrow0,
\]
so the Hilbert-space Lindeberg condition holds.

Let
\[
Q_0
=
E[r_0(W_i,\cdot)\otimes r_0(W_i,\cdot)],
\qquad
Q_H
=
E_H[r_0(W_i,\cdot)\otimes r_0(W_i,\cdot)].
\]
Both operators are trace class because
$\|h\otimes h\|_1=\|h\|_K^2$, where $\|\cdot\|_1$ denotes the trace
norm. The covariance operator of $\xi_{ni}$ is
\[
\Sigma_n
=
(1-n^{-1/2})Q_0
+
n^{-1/2}Q_H
-
n^{-1}\mu_H\otimes\mu_H.
\]
Since $\Sigma_0=Q_0$,
\[
\|\Sigma_n-\Sigma_0\|_1
\le
n^{-1/2}\|Q_H-Q_0\|_1
+
n^{-1}\|\mu_H\|_K^2
\longrightarrow0.
\]

Assumption~1(b) implies that $\mathcal H_K$ is separable. The triangular-array CLT for Hilbert-valued random elements (see, e.g., Kundu et al. 2000) therefore gives
\[
\frac1{\sqrt n}\sum_{i=1}^n\xi_{ni}
\rightsquigarrow
\mathbb Z
\qquad\text{in }\mathcal H_K,
\]
where $\mathbb Z$ is centered Gaussian with covariance operator
$\Sigma_0$. Since $\sqrt n\,\bar r_n=\mu_H$,
\[
R_n^0
\rightsquigarrow
\mathbb Z+\mu_H.
\]

Finally, Theorem 1 and the Riesz representations
imply
\[
\|\hat R_n-R_n^0\|_K
=
\sup_{\|k\|_K\le1}
|\hat\nu_n(k)-\nu_n^0(k)|
=
o_p(1).
\]
Thus,
\[
\hat R_n
\rightsquigarrow
\mathbb Z+\mu_H,
\qquad
LRK_n
=
\|\hat R_n\|_K^2
\rightsquigarrow
\|\mathbb Z+\mu_H\|_K^2.
\]
Under the null $H=0$, and hence $\mu_H=0$.
\hfill$\square$

\noindent
\textsc{Proof of Proposition 2.}

\[
E[r_0(W_i,\cdot)]
=
E[E[\varepsilon_i\mid X_i]\Phi_0(X_i,\cdot)]
=
0.
\]
Since $F_0$ and $H$ have the same $X$-marginal,
\[
E_H[r_0(W_i,\cdot)]
=
E_H[E_H[\varepsilon_i\mid X_i]\Phi_0(X_i,\cdot)]
=
E[a(X_i)\Phi_0(X_i,\cdot)]
=
\mu_a.
\]
Therefore,
\[
\bar r_n
=
(1-n^{-1/2})E[r_0(W_i,\cdot)]
+
n^{-1/2}E_H[r_0(W_i,\cdot)]
=
n^{-1/2}\mu_a.
\]
Since $\Pi_{0,2}^\perp$ extends $\Pi_0^\perp$, for every
$k\in\mathcal H_K$,
\[
\Delta_a(k)
=
\langle a,\Pi_{0,2}^\perp k\rangle_{L_2(F_X)}
=
\langle(\Pi_{0,2}^\perp)'a,k\rangle_{L_2(F_X)}.
\]
Hence $(\Pi_{0,2}^\perp)'a=0$ implies
$\Delta_a(k)=0$ for every $k\in\mathcal H_K$. Conversely, if
$\Delta_a(k)=0$ for every $k\in\mathcal H_K$, then
$(\Pi_{0,2}^\perp)'a$ is orthogonal to $\mathcal H_K$ in $L_2(F_X)$.
Because integral strict positivity of $K$ is equivalent to density of
$\mathcal H_K$ in $L_2(F_X)$, it follows that
$(\Pi_{0,2}^\perp)'a=0$.
\hfill$\square$

\noindent
\textsc{Proof of Theorem \ref{thm:bootstrap}.}

Let
\[
\mathcal D_n=\sigma(W_1,\ldots,W_n),
\qquad
E^*(\cdot)=E(\cdot\mid\mathcal D_n),
\qquad
P^*(\cdot)=P(\cdot\mid\mathcal D_n),
\]
and define
\[
R_n^*
=
\frac{1}{\sqrt n}
\sum_{\ell=1}^L
\sum_{i\in I_\ell}
V_i\hat r_{i,\ell}.
\]
Then
\[
LRK_n^*
=
\|R_n^*\|_K^2
=
\frac{1}{n}
\mathbf V'\hat{\mathbf R}\mathbf V.
\]

For a metric space $\mathbb D$, let $\mathrm{BL}_1(\mathbb D)$ be the class of functions $f:\mathbb D\to\mathbb R$ satisfying
\[
\sup_{x\in\mathbb D}|f(x)|\leq1,
\qquad
|f(x)-f(y)|\leq d(x,y).
\]

Define the oracle multiplier process
\[
R_n^{0*}
=
\frac1{\sqrt n}\sum_{i=1}^nV_i r_0(W_i,\cdot).
\]
We first establish its conditional weak convergence. Its conditional
covariance operator is
\[
\Sigma_n^*
=
E^*[R_n^{0*}\otimes R_n^{0*}]
=
\frac1n\sum_{i=1}^n
r_0(W_i,\cdot)\otimes r_0(W_i,\cdot).
\]
Since
\[
\|h\otimes h\|_1=\|h\|_K^2,
\]
Assumption~7 and the law of large numbers in the space of trace-class
operators imply
\[
\left\|
\Sigma_n^*
-
E_{F_n}
[r_0(W_i,\cdot)\otimes r_0(W_i,\cdot)]
\right\|_1
=o_p(1).
\]
Moreover, writing
\[
Q_0
=
E[r_0(W_i,\cdot)\otimes r_0(W_i,\cdot)],
\qquad
Q_H
=
E_H[r_0(W_i,\cdot)\otimes r_0(W_i,\cdot)],
\]
we have
\[
E_{F_n}[r_0(W_i,\cdot)\otimes r_0(W_i,\cdot)]
=
(1-n^{-1/2})Q_0+n^{-1/2}Q_H.
\]
Since $E[r_0(W_i,\cdot)]=0$, $Q_0=\Sigma_0$, and hence
\[
\|\Sigma_n^*-\Sigma_0\|_1=o_p(1).
\]

The conditional Lindeberg condition also holds. Indeed, for every
$\epsilon>0$,
\[
\begin{aligned}
&\frac1n\sum_{i=1}^n
E^*\!\left[
V_i^2\|r_0(W_i,\cdot)\|_K^2
\mathbf 1
\left\{
|V_i|\|r_0(W_i,\cdot)\|_K>\epsilon\sqrt n
\right\}
\right]
\\
&\qquad\leq
\frac{E[|V_i|^{2+\delta}]}
     {\epsilon^\delta n^{1+\delta/2}}
\sum_{i=1}^n
\|r_0(W_i,\cdot)\|_K^{2+\delta}
=o_p(1),
\end{aligned}
\]
where the last equality follows from Assumption~7. The conditional
triangular-array CLT in a separable Hilbert space
therefore gives
\[
\sup_{f\in\mathrm{BL}_1(\mathcal H_K)}
\left|
E^*[f(R_n^{0*})]-E[f(\mathbb Z)]
\right|
\overset{p}{\longrightarrow}0.
\]

It remains to replace the oracle representers by their feasible
counterparts. Using the notation and decomposition from the proof of
Theorem 1, write
\[
\hat r_{i,\ell}-r_0(W_i,\cdot)
=
u_{i,\ell}+v_{i,\ell}+p_{i,\ell}
+q_{i,\ell}+t_{i,\ell}+s_{i,\ell},
\]
and decompose
\[
s_{i,\ell}
=
s_{1i,\ell}+s_{2i,\ell},
\]
where
\[
s_{1i,\ell}
=
-\varepsilon_iC_0'\Delta A_\ell(X_i),
\qquad
s_{2i,\ell}
=
-\Delta\varepsilon_{i,\ell}
C_0'\Delta A_\ell(X_i).
\]
The bounds established in that proof imply
\[
\frac1n
\sum_{\ell=1}^L\sum_{i\in I_\ell}
\left(
\|u_{i,\ell}\|_K^2
+\|v_{i,\ell}\|_K^2
+\|q_{i,\ell}\|_K^2
+\|s_{1i,\ell}\|_K^2
\right)
=o_p(1),
\]
and
\[
\frac1{\sqrt n}
\sum_{\ell=1}^L\sum_{i\in I_\ell}
\left(
\|p_{i,\ell}\|_K
+\|t_{i,\ell}\|_K
+\|s_{2i,\ell}\|_K
\right)
=o_p(1).
\]
Terms involving the finite-dimensional parameter are omitted when it is
absent.

For the first group, independence and normalization of the multipliers
give
\[
\begin{aligned}
&E^*\left\|
\frac1{\sqrt n}
\sum_{\ell=1}^L\sum_{i\in I_\ell}
V_i
(u_{i,\ell}+v_{i,\ell}+q_{i,\ell}+s_{1i,\ell})
\right\|_K^2
=o_p(1).
\end{aligned}
\]
For the product terms, the triangle inequality gives
\[
\begin{aligned}
&E^*\left\|
\frac1{\sqrt n}
\sum_{\ell=1}^L\sum_{i\in I_\ell}
V_i(p_{i,\ell}+t_{i,\ell}+s_{2i,\ell})
\right\|_K
\\
&\qquad\leq
\frac{E|V_i|}{\sqrt n}
\sum_{\ell=1}^L\sum_{i\in I_\ell}
\left(
\|p_{i,\ell}\|_K
+\|t_{i,\ell}\|_K
+\|s_{2i,\ell}\|_K
\right)
=o_p(1).
\end{aligned}
\]
Conditional Markov's inequality therefore yields
\[
\|R_n^*-R_n^{0*}\|_K
=o_{p^*}(1)
\qquad\text{in probability}.
\]
The conditional Slutsky theorem now implies
\[
\sup_{f\in\mathrm{BL}_1(\mathcal H_K)}
\left|
E^*[f(R_n^*)]-E[f(\mathbb Z)]
\right|
\overset{p}{\longrightarrow}0.
\]

Since $h\mapsto\|h\|_K^2$ is continuous,
\[
LRK_n^*=\|R_n^*\|_K^2
\rightsquigarrow^*
\|\mathbb Z\|_K^2
\qquad\text{in probability}.
\]
Because $\Sigma_0\neq0$, the distribution of
$\|\mathbb Z\|_K^2$ is continuous. It follows that
\[
\sup_{t\in\mathbb R}
\left|
P^*(LRK_n^*\leq t)
-
P(\|\mathbb Z\|_K^2\leq t)
\right|
\overset{p}{\longrightarrow}0
\]
and
\[
c_{1-\varsigma}^*
\overset{p}{\longrightarrow}
c_{1-\varsigma}.
\]

Under the null, Theorem 2 gives
\[
LRK_n\rightsquigarrow\|\mathbb Z\|_K^2,
\]
so Slutsky's theorem and continuity at $c_{1-\varsigma}$ yield
\[
P(LRK_n>c_{1-\varsigma}^*)\longrightarrow\varsigma.
\]
Under local alternatives,
\[
LRK_n\rightsquigarrow\|\mathbb Z+\mu_a\|_K^2,
\]
and the same argument gives
\[
P(LRK_n>c_{1-\varsigma}^*)
\longrightarrow
P(\|\mathbb Z+\mu_a\|_K^2>c_{1-\varsigma}).
\]
\hfill$\square$

\noindent
\textsc{Proof of Theorem \ref{thm:consistency}.}
For $i\in I_\ell$, let
\[
\Delta\varepsilon_{i,\ell}
=\hat\varepsilon_{i,\ell}-\varepsilon_i,\quad
\Delta\rho_{i,\ell}
=\hat\rho_{i,\ell}-\rho_i,\quad
\Delta A_\ell=\hat A_\ell-A_{F_0},\quad
\Delta C_n=\hat C_n-C_0,
\]
and define
\[
\zeta_0(z)=\alpha_{0\gamma}(z,A_{F_0}),\qquad
\hat\zeta_\ell(z)=\hat\alpha_{\gamma,\ell}(z,\hat A_\ell),
\]
\[
\Delta_{\gamma,\ell}(z,\cdot)
=
\hat r_{\gamma,\ell}(z,\cdot)-r_{\gamma,0}(z,\cdot)
-C_0'(\hat\zeta_\ell(z)-\zeta_0(z)).
\]
Quantities involving $A_{F_0}$, $C_0$, and $\zeta_0$ are omitted when no
finite-dimensional parameter is present. As in the proof of
Theorem 1,
\[
\hat r_{i,\ell}-r_0(W_i,\cdot)
=u_{i,\ell}+v_{i,\ell}+p_{i,\ell}
+q_{i,\ell}+t_{i,\ell}+s_{i,\ell},
\]
where
\[
\begin{aligned}
u_{i,\ell}
&=\Delta\varepsilon_{i,\ell}B_0(X_i,\cdot)
+\Delta\rho_{i,\ell}\widetilde r_{\gamma,0}(Z_i,\cdot),\\
v_{i,\ell}
&=\rho_i\Delta_{\gamma,\ell}(Z_i,\cdot),\qquad
p_{i,\ell}
=\Delta\rho_{i,\ell}\Delta_{\gamma,\ell}(Z_i,\cdot),\\
q_{i,\ell}
&=-\Delta C_n'd_{i,\ell},\qquad
t_{i,\ell}=-\Delta C_n'\Delta d_{i,\ell},\\
s_{i,\ell}
&=-\hat\varepsilon_{i,\ell}C_0'\Delta A_\ell(X_i),
\end{aligned}
\]
with
\[
d_{i,\ell}
=\varepsilon_i\hat A_\ell(X_i)+\rho_i\hat\zeta_\ell(Z_i),
\qquad
\Delta d_{i,\ell}
=\Delta\varepsilon_{i,\ell}\hat A_\ell(X_i)
+\Delta\rho_{i,\ell}\hat\zeta_\ell(Z_i).
\]

Let $E_{n,\ell}f_i=|I_\ell|^{-1}\sum_{i\in I_\ell}f_i$. Cross-fitting,
conditional Markov's inequality, and the fixed-alternative version of
Assumption~4 yield
\[
E_{n,\ell}\|u_{i,\ell}\|_K
+E_{n,\ell}\|v_{i,\ell}\|_K=o_p(1),
\qquad
E_{n,\ell}|d_{i,\ell}|^2=O_p(1).
\]
Moreover, conditional Cauchy--Schwarz and the fixed-alternative
product-rate conditions give
\[
\begin{aligned}
E_{n,\ell}\|p_{i,\ell}\|_K
&\leq
E_{n,\ell}\!\left[(\Delta\rho_{i,\ell})^2\right]^{1/2}
E_{n,\ell}\!\left[ \|\Delta_{\gamma,\ell}(Z_i,\cdot)\|_K^2 \right]^{1/2}
=o_p(1),\\
E_{n,\ell}\|q_{i,\ell}\|_K
&\leq
\|\Delta C_n\|_{K,p}E_{n,\ell}\!\left[|d_{i,\ell}|^2\right]^{1/2}
=o_p(1),\\
E_{n,\ell}\|t_{i,\ell}\|_K
&\leq
\|\Delta C_n\|_{K,p}
\left\{
E_{n,\ell}\!\left[
(\Delta\varepsilon_{i,\ell})^2|\hat A_\ell(X_i)|^2
\right]^{1/2}
+
E_{n,\ell}\!\left[
(\Delta\rho_{i,\ell})^2|\hat\zeta_\ell(Z_i)|^2
\right]^{1/2}
\right\}
=o_p(1),\\
E_{n,\ell}\|s_{i,\ell}\|_K
&\leq
\|C_0\|_{K,p}
\left\{
E_{n,\ell}\!\left[
\varepsilon_i^2|\Delta A_\ell(X_i)|^2
\right]^{1/2}
+
E_{n,\ell}\!\left[(\Delta\varepsilon_{i,\ell})^2\right]^{1/2}
E_{n,\ell}\!\left[|\Delta A_\ell(X_i)|^2\right]^{1/2}.
\right\}
=o_p(1).
\end{aligned}
\]
Since $L$ is fixed,
\[
\left\|
\frac1n\sum_{\ell=1}^L\sum_{i\in I_\ell}
\{\hat r_{i,\ell}-r_0(W_i,\cdot)\}
\right\|_K=o_p(1).
\]

Let $\mu=E[r_0(W_i,\cdot)]$. Since
$E\|r_0(W_i,\cdot)\|_K<\infty$, the Hilbert-space law of large numbers
and
\[
\hat R_n
=\frac1{\sqrt n}\sum_{\ell=1}^L\sum_{i\in I_\ell}\hat r_{i,\ell}
\]
imply
\[
\frac{\hat R_n}{\sqrt n}
=\frac1n\sum_{\ell=1}^L\sum_{i\in I_\ell}\hat r_{i,\ell}
\overset{p}{\longrightarrow}\mu
\qquad\text{in }\mathcal H_K.
\]
Furthermore,
\[
\Delta(k)
=E[\psi(W_i,\eta_0,\alpha_0,k)]
=\langle\mu,k\rangle_K,
\]
and hence
\[
\|\mu\|_K^2
=\sup_{\|k\|_K\leq1}\Delta(k)^2>0,
\]
where positivity follows from Proposition 2 and
$(\Pi_{0,2}^\perp)'a\neq0$. Therefore,
\[
\frac{LRK_n}{n}
=\left\|\frac{\hat R_n}{\sqrt n}\right\|_K^2
\overset{p}{\longrightarrow}
\|\mu\|_K^2
=\sup_{\|k\|_K\leq1}\Delta(k)^2>0,
\]
so $LRK_n\overset{p}{\longrightarrow}\infty$.

Finally, writing $E^*$ for expectation conditional on the sample, the
multiplier normalization and the assumed empirical second-moment bound
give
\[
E^*[LRK_n^*]
=
\frac1n\sum_{\ell=1}^L\sum_{i\in I_\ell}
\|\hat r_{i,\ell}\|_K^2
=O_p(1).
\]
Conditional Markov's inequality therefore implies
\[
c_{1-\varsigma}^*
\leq\varsigma^{-1}E^*[LRK_n^*]
=O_p(1).
\]
Consequently,
\[
P\!\left(LRK_n>c_{1-\varsigma}^*\right)\longrightarrow1.
\]
\hfill$\square$
\section{Discussion of the Assumptions and Verification in the Examples}
\label{Assumptions}
\subsection{Discussion of General Assumptions}
\label{sec:general-assumptions}

This section discusses primitive sufficient conditions for Assumptions~2--7, which
apply to any nuisance function and estimator satisfying the stated
high-level conditions. Assumption 1 is not discussed here because it is a standard smoothness condition. Verification of Assumption~8 and of Assumptions~2--7
in the three motivating examples of Section~2 is given below.

\paragraph{RKHS representation (Assumption~2).}
Assumption~2 holds whenever the population adjustment $\alpha_{0\gamma}$ is
generated by a bounded linear operator
\[
T_\gamma:\mathcal H_K\to\mathcal H_Z,
\qquad
\alpha_{0\gamma}(\cdot,k)=T_\gamma k,
\]
for some RKHS $(\mathcal H_Z,K_Z)$ on $\mathcal Z$, with
$E[\rho(W_i,\gamma_0)^2K_Z(Z_i,Z_i)]<\infty$, and analogously for the
estimated adjustment $\hat\alpha_{\gamma,\ell}$ with a (possibly
data-dependent) bounded operator $\hat T_\ell$. By the reproducing
property and the definition of the adjoint,
\[
r_{\gamma,0}(z,\cdot)
=
T_\gamma'K_Z(z,\cdot)\in\mathcal H_K,
\qquad
\alpha_{0\gamma}(z,k)
=
\langle r_{\gamma,0}(z,\cdot),k\rangle_K,
\]
and similarly $\hat r_{\gamma,\ell}(z,\cdot)=\hat T_\ell'K_Z(z,\cdot)$.
Membership of the population and estimated adjustments in $\Gamma$ (part of
Assumption~3) permits use of the identifying equation
\[
E[\rho(W_i,\gamma_0)\delta(Z_i)]=0,
\qquad \delta\in\Gamma,
\]
in the proof of Theorem 1; this holds automatically
whenever the adjustment is itself constructed as a projection onto (or
regression on) $\Gamma$, as in all three motivating examples below. Beyond
this membership requirement, Assumption~3 imposes no additional restriction
on the fold structure beyond standard $L$-fold cross-fitting with $L$ fixed.

\paragraph{Mean-square consistency, product rates, and orthogonality bias (Assumptions~4--6).}
When $\theta_0$ is present, recall
\[
\zeta_0(z)=\alpha_{0\gamma}(z,A_{F_0}),
\qquad
\hat\zeta_\ell(z)=\hat\alpha_{\gamma,\ell}(z,\hat A_\ell),
\]
and define the representer error, holding the population coefficient $C_0$
fixed,
\[
\Delta_{\gamma,\ell}(z,\cdot)
=
\hat r_{\gamma,\ell}(z,\cdot)-r_{\gamma,0}(z,\cdot)
-C_0(\cdot)'\{\hat\zeta_\ell(z)-\zeta_0(z)\};
\]
without $\theta_0$, $\Delta_{\gamma,\ell}=\hat r_{\gamma,\ell}-r_{\gamma,0}$.
Assumption~4 requires mean-square consistency of the residuals and of
$\Delta_{\gamma,\ell}$, appropriate moments and consistency of $\hat A_\ell$
and $\hat\zeta_\ell$, consistency of $\hat g_n$ and $\hat J_n$, and
boundedness of the oracle representers $B_0$ and $\widetilde r_{\gamma,0}$.
These conditions involve only the objects entering the orthogonal process
and therefore accommodate generic nuisance estimators.

A convenient sufficient condition for the representer part of
Assumption~4, when the operator representation above holds with
$r_{\gamma,0}(z,\cdot)=T_0'K_Z(z,\cdot)$ and
$\hat r_{\gamma,\ell}(z,\cdot)=\hat T_\ell'K_Z(z,\cdot)$, is
\[
E_{F_n}\!\left[
\rho_i^2\|(\hat T_\ell'-T_0')K_Z(Z_i,\cdot)\|_K^2
\,\middle|\,I_\ell^c
\right]
+
E_H\!\left[
\rho_i^2\|(\hat T_\ell'-T_0')K_Z(Z_i,\cdot)\|_K^2
\,\middle|\,I_\ell^c
\right]
=o_p(1),
\]
together with mean-square consistency of the residuals; with $\theta_0$
present, the same argument applies directly to $\Delta_{\gamma,\ell}$.

When $\Delta_{\gamma,\ell}$ arises from a finite dictionary
$\mathbf b_J=(b_1,\ldots,b_J)'$, $b_j\in\mathcal H_K$, so that
\[
\Delta_{\gamma,\ell}(Z_i,\cdot)
=
\{\hat\beta_\ell(Z_i)-\beta_0(Z_i)\}'\mathbf b_J(\cdot),
\]
then
\[
\|\Delta_{\gamma,\ell}(Z_i,\cdot)\|_K^2
\leq
\lambda_{\max}(\mathbf H_J)
\|\hat\beta_\ell(Z_i)-\beta_0(Z_i)\|_2^2,
\qquad
\mathbf H_J
=
(\langle b_j,b_m\rangle_K)_{j,m=1}^J,
\]
so a sparse estimator satisfying the usual Lasso-type rate
\[
E\!\left[
\|\hat\beta_\ell(Z_i)-\beta_0(Z_i)\|_2^2
\,\middle|\,I_\ell^c
\right]
=
O_p\!\left(\frac{s\log J}{n}\right)
\]
delivers the corresponding part of Assumption~4 whenever
$\lambda_{\max}(\mathbf H_J)s\log J/n=o(1)$; this device is used repeatedly
in the examples below.

Assumption~5 controls products generated by simultaneous estimation of the
residuals, representers, $A_{F_0}$, $\zeta_0$, and $C_0$; it is satisfied,
for example, whenever each factor in a product is $o_p(n^{-1/4})$, although
asymmetric rates are allowed and are used in Example~1 below.

Assumption~6 requires $\sqrt n\,\|b_{F_n}(\hat\eta_\ell)-b_{F_n}(\eta_0)\|_K=o_p(1)$.
A convenient sufficient condition is local Lipschitz continuity of
$\eta\mapsto b_{F_n}(\eta)$ plus a quadratic remainder,
\[
\|b_{F_n}(\eta)-b_{F_n}(\eta_0)\|_K
\leq
C\left\{
n^{-1/2}\|\eta-\eta_0\|_\eta
+\|\eta-\eta_0\|_\eta^2
\right\},
\]
locally uniformly in $\eta$, together with
$\|\hat\eta_\ell-\eta_0\|_\eta=o_p(n^{-1/4})$. The first term reflects that
orthogonality is imposed at $F_0$, whereas $F_n$ is an $n^{-1/2}$
perturbation of $F_0$; the second is the usual quadratic remainder from
Neyman orthogonality.

\paragraph{Moment condition (Assumption~7).}
Since $r_0(W_i,\cdot)=\varepsilon(W_i,\eta_0)B_0(X_i,\cdot)+\rho(W_i,\gamma_0)\widetilde r_{\gamma,0}(Z_i,\cdot)$,
Assumption~4(iv) already requires $\sup_x\|B_0(x,\cdot)\|_K$ and
$\sup_z\|\widetilde r_{\gamma,0}(z,\cdot)\|_K$ to be finite. Consequently,
Assumption~7 reduces to a moment condition on the scalar residuals alone:
\[
\sup_n E_{F_n}\!\left[|\varepsilon_i|^{2+\delta}+|\rho_i|^{2+\delta}\right]<\infty
\]
for some $\delta>0$ is sufficient, since
\[
\|r_0(W_i,\cdot)\|_K
\leq
|\varepsilon_i|\sup_x\|B_0(x,\cdot)\|_K
+
|\rho_i|\sup_z\|\widetilde r_{\gamma,0}(z,\cdot)\|_K.
\]
Thus Assumption~7 imposes no restriction beyond Assumption~4(iv) and a
standard $(2+\delta)$-moment condition on the moment and generalized-error
functions, already implicit in Assumption~1(d).

\subsection{Oracle Equivalence Under Assumption 8}
\label{sec:oracle-assumption8}

Assumption~8 gives the population moment the factorized form
$r_0(w,\cdot)=\varepsilon(w,\eta_0)\Phi_0(x,\cdot)$. In all three motivating
examples the \emph{feasible} cross-fitted moment has the same structure,
since either $\rho\equiv\varepsilon$ (Examples~1--2) or
$\alpha_{0\gamma}\equiv\hat\alpha_{\gamma,\ell}\equiv0$ (Example~3); see the
Implementation section. Under this additional structure, Assumptions~3--6
admit a single, directly verifiable specialization stated purely in terms
of the residual $\varepsilon_i$ and the feature maps $\Phi_0,\hat\Phi_\ell$,
avoiding separate tracking of $\hat A_\ell,\hat\zeta_\ell,\hat g_n,\hat J_n$.
This restated version is the one verified in the examples below.

\medskip
\noindent
\textbf{Assumption 8$^\ast$ (Feasible oracle feature representation).}
For each fold $\ell$, there exists a map
$\hat\Phi_\ell:\mathcal X\to\mathcal H_K$, constructed using only $I_\ell^c$,
such that $\hat r_{i,\ell}=\hat\varepsilon_{i,\ell}\hat\Phi_\ell(X_i,\cdot),$  $i\in I_\ell.$

\medskip
Assumption~8$^\ast$ holds automatically alongside Assumption~8 whenever
$\rho\equiv\varepsilon$ or $\alpha_{0\gamma}\equiv\hat\alpha_{\gamma,\ell}\equiv0$,
since the general FSIP construction then collapses to this factorized form
at both the population and the estimated level; Assumption~2 continues to
enter, through the construction of $\Phi_0$ and $\hat\Phi_\ell$, whenever
$\rho\equiv\varepsilon$, and is vacuous when $\alpha_{0\gamma}\equiv0$.

\medskip
\noindent
\textbf{Conditions (S1)--(S3).}
For each fold $\ell$:

\smallskip
\noindent(S1)
$
E_{F_n}\!\left[(\hat\varepsilon_{i,\ell}-\varepsilon_i)^2\mid I_\ell^c\right]=o_p(1)
$,
\[
E_{F_n}\!\left[\varepsilon_i^2\|\hat\Phi_\ell(X_i,\cdot)-\Phi_0(X_i,\cdot)\|_K^2\mid I_\ell^c\right]
+
E_H\!\left[\varepsilon_i^2\|\hat\Phi_\ell(X_i,\cdot)-\Phi_0(X_i,\cdot)\|_K^2\mid I_\ell^c\right]
=o_p(1),
\]
\[
E_{F_n}\!\left[
\|\hat\Phi_\ell(X_i,\cdot)-\Phi_0(X_i,\cdot)\|_K^2
\,\middle|\,I_\ell^c
\right]
=o_p(1).
\]
and $\sup_x\|\Phi_0(x,\cdot)\|_K<\infty$.

\smallskip
\noindent(S2)
\[
E_{F_n}\!\left[(\hat\varepsilon_{i,\ell}-\varepsilon_i)^2\mid I_\ell^c\right]^{1/2}
E_{F_n}\!\left[\|\hat\Phi_\ell(X_i,\cdot)-\Phi_0(X_i,\cdot)\|_K^2\mid I_\ell^c\right]^{1/2}
=o_p(n^{-1/2}).
\]

\smallskip
\noindent(S3)
\[
\left\|
E_{F_n}\!\left[(\hat\varepsilon_{i,\ell}-\varepsilon_i)\Phi_0(X_i,\cdot)\mid I_\ell^c\right]
\right\|_K
=o_p(n^{-1/2}).
\]

\begin{theorem}[Oracle equivalence under Assumption~8]
\label{thm:oracleequiv-simplified}
Suppose Assumptions~1, 2, 3, and~7 hold, together with Assumption~8,
Assumption~8$^\ast$, and Conditions~(S1)--(S3). Then, under $\{F_n\}$,
\[
\sup_{\|k\|_K\le1}|\hat\nu_n(k)-\nu_n^0(k)|=o_p(1).
\]
\end{theorem}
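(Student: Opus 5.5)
By the Riesz representation and Assumptions~8 and~8$^\ast$, the target equals
\[
\Big\|\frac1{\sqrt n}\sum_{\ell=1}^L\sum_{i\in I_\ell}\big(\hat\varepsilon_{i,\ell}\hat\Phi_\ell(X_i,\cdot)-\varepsilon_i\Phi_0(X_i,\cdot)\big)\Big\|_K .
\]
Since $L$ is fixed, it suffices to show that each fold's normalized sum is $o_p(1)$. For $i\in I_\ell$, write $\Delta\varepsilon_{i,\ell}=\hat\varepsilon_{i,\ell}-\varepsilon_i$ and $\Delta\Phi_\ell=\hat\Phi_\ell-\Phi_0$. This gives the three-term decomposition
\[
\hat r_{i,\ell}-r_{0i}=\Delta\varepsilon_{i,\ell}\Phi_0(X_i,\cdot)+\varepsilon_i\Delta\Phi_\ell(X_i,\cdot)+\Delta\varepsilon_{i,\ell}\Delta\Phi_\ell(X_i,\cdot)\equiv a_{i,\ell}+b_{i,\ell}+c_{i,\ell},
\]
which replaces the six-term decomposition in the proof of Theorem~1. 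Conditionally on $I_\ell^c$, the summands in each term are iid under $F_n$ by Assumption~3. The plan is to handle $a$ and $b$ by splitting each into a centered part and a conditional mean, and to handle $c$ by a direct Cauchy--Schwarz bound.

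\emph{Term $a$.} Its conditional variance satisfies
\[
E_{F_n}[\|a_{i,\ell}\|_K^2\mid I_\ell^c]\le\sup_x\|\Phi_0(x,\cdot)\|_K^2\,E_{F_n}[(\Delta\varepsilon_{i,\ell})^2\mid I_\ell^c]=o_p(1)
\]
by (S1). By conditional Chebyshev in the Hilbert space $\mathcal H_K$, the centered normalized sum is therefore $o_p(1)$. The conditional mean is $o_p(n^{-1/2})$ in $\mathcal H_K$ by (S3), so multiplying by $|I_\ell|/\sqrt n\lesssim\sqrt n$ leaves $o_p(1)$.

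\emph{Term $c$.} By the triangle inequality and Cauchy--Schwarz on empirical averages,
\[
\Big\|\frac1{\sqrt n}\sum_{i\in I_\ell}c_{i,\ell}\Big\|_K\le\frac{|I_\ell|}{\sqrt n}\,E_{n,\ell}[(\Delta\varepsilon_{i,\ell})^2]^{1/2}E_{n,\ell}[\|\Delta\Phi_\ell(X_i,\cdot)\|_K^2]^{1/2}.
\]
Conditional Markov bounds each empirical average by a constant multiple of its conditional expectation with probability close to one. (S2) then makes the right side $o_p(1)$.

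\emph{Term $b$.} I expect this term to be the main obstacle. The centered part is $o_p(1)$ by the (S1) bound on $E_{F_n}[\varepsilon_i^2\|\Delta\Phi_\ell\|_K^2\mid I_\ell^c]$, together with the same Hilbert-space Chebyshev argument. For the conditional mean I would write
\[
E_{F_n}[b_{i,\ell}\mid I_\ell^c]=E_{F_n}\big[E_{F_n}[\varepsilon_i\mid X_i]\,\Delta\Phi_\ell(X_i,\cdot)\mid I_\ell^c\big],
\]
which is valid because $\Delta\Phi_\ell$ depends only on $X_i$ and the training sample. By \eqref{local_alter} and the common $X$-marginal, $E_{F_n}[\varepsilon_i\mid X_i]=n^{-1/2}a(X_i)$, so
\[
E_{F_n}[b_{i,\ell}\mid I_\ell^c]=n^{-1/2}E[a(X_i)\Delta\Phi_\ell(X_i,\cdot)\mid I_\ell^c].
\]
This needs $E_{F_n}[\varepsilon_i\mid X_i]$ to be computed under $F_n$ with $\eta_0$ fixed, which \eqref{local_alter} provides. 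It remains to show the last expectation is $o_p(1)$ in $\mathcal H_K$. Cauchy--Schwarz bounds it by $\|a\|_{L_2(F_X)}E[\|\Delta\Phi_\ell\|_K^2\mid I_\ell^c]^{1/2}$. The $F_n$-version of the last part of (S1) controls this expectation under $F_0$, because $F_n$ and $F_0$ share the $X$-marginal. Alternatively, I would bound it by $(E_H[\varepsilon_i^2\|\Delta\Phi_\ell\|_K^2\mid I_\ell^c])^{1/2}$ via Jensen under $H$, using $a(X)=E_H[\varepsilon_i\mid X_i]$; this mirrors the treatment of $v_{i,\ell}$ in Theorem~1 and is why (S1) includes the $E_H$ term. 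Either route makes the conditional mean $o_p(n^{-1/2})$, so $b$ contributes $o_p(1)$.

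Summing the three terms over the finitely many folds proves the claim. The remaining care is that the conditional-expectation bounds hold with probability approaching one, so they should be converted to unconditional $o_p$ statements via the standard lemma: if $E[|U_n|\mid\mathcal G_n]=o_p(1)$, then $U_n=o_p(1)$.
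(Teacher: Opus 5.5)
Your proposal is correct and follows the paper's proof essentially step for step. It uses the same three-term split into $\Delta\varepsilon\,\Phi_0$, $\varepsilon\,\Delta\Phi$, and the product term, handled respectively by conditional centering with (S1) and (S3); by the identity $E_{F_n}[\varepsilon_i\mid X_i]=n^{-1/2}a(X_i)$ with Cauchy--Schwarz and (S1) for the bias; and by Cauchy--Schwarz with (S2). Your treatment of the product term via empirical averages and conditional Markov is slightly more careful than the paper's direct bound, and your alternative $E_H$ route for the bias of term $b$ is also valid.
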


\begin{proof}
Write $\varepsilon_i=\varepsilon(W_i,\eta_0)$ and, for $i\in I_\ell$,
$\hat\varepsilon_{i,\ell}=\varepsilon(W_i,\hat\eta_\ell)$. By
Assumption~8 and~8$^\ast$,
\[
\hat\nu_n(k)-\nu_n^0(k)
=
\frac1{\sqrt n}\sum_{i=1}^n
\left[
\hat\varepsilon_{i,\ell(i)}\langle\hat\Phi_{\ell(i)}(X_i,\cdot),k\rangle_K
-
\varepsilon_i\langle\Phi_0(X_i,\cdot),k\rangle_K
\right]
=
\mathrm{I}_n(k)+\mathrm{II}_n(k)+\mathrm{III}_n(k),
\]
where
\[
\mathrm{I}_n(k)
=
\frac1{\sqrt n}\sum_{i=1}^n
(\hat\varepsilon_{i,\ell(i)}-\varepsilon_i)\langle\Phi_0(X_i,\cdot),k\rangle_K,
\qquad
\mathrm{II}_n(k)
=
\frac1{\sqrt n}\sum_{i=1}^n
\varepsilon_i\langle\hat\Phi_{\ell(i)}(X_i,\cdot)-\Phi_0(X_i,\cdot),k\rangle_K,
\]
\[
\mathrm{III}_n(k)
=
\frac1{\sqrt n}\sum_{i=1}^n
(\hat\varepsilon_{i,\ell(i)}-\varepsilon_i)
\langle\hat\Phi_{\ell(i)}(X_i,\cdot)-\Phi_0(X_i,\cdot),k\rangle_K.
\]

\smallskip
\noindent\emph{Term $\mathrm{I}_n$.}
For each fold $\ell$, center on $I_\ell^c$:
\[
\mathrm{I}_n(k)
=
\frac1{\sqrt n}\sum_\ell\sum_{i\in I_\ell}
\left\{
(\hat\varepsilon_{i,\ell}-\varepsilon_i)\langle\Phi_0(X_i,\cdot),k\rangle_K
-
\langle m_\ell,k\rangle_K
\right\}
+
\frac1{\sqrt n}\sum_\ell\sum_{i\in I_\ell}\langle m_\ell,k\rangle_K,
\]
where $m_\ell=E_{F_n}[(\hat\varepsilon_{i,\ell}-\varepsilon_i)\Phi_0(X_i,\cdot)\mid I_\ell^c]$.
Since $\sup_{\|k\|_K\le1}|\langle h,k\rangle_K|=\|h\|_K$ for $h\in\mathcal H_K$,
and, conditionally on $I_\ell^c$, the summands
\[
\hat\Delta_i
=
(\hat\varepsilon_{i,\ell}-\varepsilon_i)\Phi_0(X_i,\cdot)-m_\ell,
\qquad
i\in I_\ell,
\]
are i.i.d.\ and mean zero (hence uncorrelated), the left side of the
display equals
\[
E_{F_n}\!\left[
\left\|
\frac1{\sqrt n}\sum_{i\in I_\ell}\hat\Delta_i
\right\|_K^2
\;\middle|\;I_\ell^c
\right]
=
\frac1n\sum_{i\in I_\ell}
E_{F_n}\!\left[\|\hat\Delta_i\|_K^2\mid I_\ell^c\right]
=
\frac{n_\ell}{n}
E_{F_n}\!\left[\|\hat\Delta_i\|_K^2\mid I_\ell^c\right].
\]
Since $\mathrm{Var}(Y)\le E[Y^2]$,
\[
E_{F_n}\!\left[\|\hat\Delta_i\|_K^2\mid I_\ell^c\right]
\leq
E_{F_n}\!\left[(\hat\varepsilon_{i,\ell}-\varepsilon_i)^2\|\Phi_0(X_i,\cdot)\|_K^2\mid I_\ell^c\right]
\leq
\left(\sup_x\|\Phi_0(x,\cdot)\|_K\right)^2
E_{F_n}\!\left[(\hat\varepsilon_{i,\ell}-\varepsilon_i)^2\mid I_\ell^c\right],
\]
which is $o_p(1)$ by (S1), since $n_\ell/n=O_p(1)$. Hence the first sum
is $o_p(1)$ uniformly in $k$. For the second
sum, since observations in $I_\ell$ are i.i.d.\ given $I_\ell^c$ with
common conditional mean, $\frac1{\sqrt n}\sum_{i\in I_\ell}\langle
m_\ell,k\rangle_K = \frac{n_\ell}{\sqrt n}\langle m_\ell,k\rangle_K$ with
$n_\ell=O_p(n)$, so its supremum over $\|k\|_K\le1$ is
$O_p(\sqrt n)\|m_\ell\|_K=o_p(1)$ by (S3). Hence $\sup_k|\mathrm I_n(k)|=o_p(1)$.

\smallskip
\noindent\emph{Term $\mathrm{II}_n$.}
Write $\Delta\hat\Phi_\ell(x)=\hat\Phi_\ell(x,\cdot)-\Phi_0(x,\cdot)$ and
decompose, for each fold $\ell$,
\[
\mathrm{II}_n(k)
=
\frac1{\sqrt n}\sum_\ell\sum_{i\in I_\ell}
\left\{
\varepsilon_i\langle\Delta\hat\Phi_\ell(X_i),k\rangle_K
-
\langle b_\ell,k\rangle_K
\right\}
+
\frac1{\sqrt n}\sum_\ell\sum_{i\in I_\ell}\langle b_\ell,k\rangle_K,
\]
where $b_\ell=E_{F_n}[\varepsilon_i\Delta\hat\Phi_\ell(X_i)\mid I_\ell^c]$.

\smallskip
\noindent\emph{Bias part.}
Under $\{F_n\}$, $E_{F_n}[\varepsilon_i\mid X_i]=n^{-1/2}a(X_i)$ with
$a\in L_2(F_X)$ (eq. (10)), so
\[
b_\ell
=
n^{-1/2}
E_{F_n}\!\left[a(X_i)\Delta\hat\Phi_\ell(X_i)\mid I_\ell^c\right].
\]
By Cauchy--Schwarz,
\[
\|b_\ell\|_K
\leq
n^{-1/2}\|a\|_{L_2(F_X)}
E_{F_n}\!\left[\|\Delta\hat\Phi_\ell(X_i)\|_K^2\mid I_\ell^c\right]^{1/2}
=o_p(n^{-1/2})
\]
by (S1). Summing over the $n_\ell=O_p(n)$ observations in fold $\ell$ and
dividing by $\sqrt n$, this bias contributes $o_p(1)$.

\smallskip
\noindent\emph{Centered part.}
As for $\mathrm I_n$, $\sup_{\|k\|_K\le1}|\langle h,k\rangle_K|=\|h\|_K$,
and the summands
$\varepsilon_i\Delta\hat\Phi_\ell(X_i)-b_\ell$, $i\in I_\ell$, are i.i.d.\
and mean zero given $I_\ell^c$, so
\[
E_{F_n}\!\left[
\sup_{\|k\|_K\le1}
\left|
\frac1{\sqrt n}\sum_{i\in I_\ell}
\left\{
\varepsilon_i\langle\Delta\hat\Phi_\ell(X_i),k\rangle_K
-
\langle b_\ell,k\rangle_K
\right\}
\right|^2
\;\middle|\;I_\ell^c
\right]
=
\frac{n_\ell}{n}
E_{F_n}\!\left[\|\varepsilon_i\Delta\hat\Phi_\ell(X_i)-b_\ell\|_K^2\mid I_\ell^c\right].
\]
Since $\mathrm{Var}(Y)\le E[Y^2]$,
\[
E_{F_n}\!\left[\|\varepsilon_i\Delta\hat\Phi_\ell(X_i)-b_\ell\|_K^2\mid I_\ell^c\right]
\leq
E_{F_n}\!\left[\varepsilon_i^2\|\Delta\hat\Phi_\ell(X_i)\|_K^2\mid I_\ell^c\right]
+
E_H\!\left[\varepsilon_i^2\|\Delta\hat\Phi_\ell(X_i)\|_K^2\mid I_\ell^c\right]
=o_p(1)
\]
by (S1) (the $E_H$ term accounts for the local-alternative contribution to
the conditional second moment of $\varepsilon_i$). Hence
$\sup_k|\mathrm{II}_n(k)|=o_p(1)$.

\smallskip
\noindent\emph{Term $\mathrm{III}_n$.}
By Cauchy--Schwarz within each fold,
\[
\sup_{\|k\|_K\le1}|\mathrm{III}_n(k)|
\leq
\sqrt n\,
E_{F_n}\!\left[(\hat\varepsilon_{i,\ell}-\varepsilon_i)^2\mid I_\ell^c\right]^{1/2}
E_{F_n}\!\left[\|\hat\Phi_\ell(X_i,\cdot)-\Phi_0(X_i,\cdot)\|_K^2\mid I_\ell^c\right]^{1/2}
=o_p(1)
\]
by (S2). Combining the three terms and summing over the fixed number $L$
of folds gives $\sup_{\|k\|_K\le1}|\hat\nu_n(k)-\nu_n^0(k)|=o_p(1)$.
\end{proof}

\begin{remark}[Mapping to the general assumptions]
Conditions (S1)--(S3) are the specialization of Assumptions~4--6 to the
factorized representation of Assumption~8. (S1) plays the role of
Assumption~4(i); Assumption~4(ii)--(iii), which separately track
$\hat A_\ell-A_{F_0}$, $\hat\zeta_\ell-\zeta_0$, $\hat g_n-g_0$, and
$\hat J_n-J_0$, are absorbed into the single feature-map error
$\hat\Phi_\ell-\Phi_0$: in examples where $\Phi_0$ depends on
$\theta_0$ through $B_0(x,\cdot)=K(x,\cdot)-C_0(\cdot)'A_{F_0}(x)$ (e.g.\
Example~3), $\|\hat\Phi_\ell-\Phi_0\|_K$ decomposes by the triangle
inequality into exactly these terms, but need not be bounded separately.
(S2) is Assumption~5 with the same simplification. (S3) replaces the
abstract bias condition on $b_{F_n}(\eta)$ in Assumption~6 with an
explicit linear bias in the residual estimation error, weighted only by
the fixed, known population feature map $\Phi_0$ --- directly checkable
from the estimator's own bias properties (e.g., Lasso regularization
bias, series/kernel smoothing bias) without reference to the abstract
functional $b_{F_n}$. Assumption~3's $\Gamma$-membership clause is not
needed under Assumption~8$^\ast$.
\end{remark}

\subsection{Verification for Example 1: High-Dimensional Mean and GLM}

Recall that in Example~1 there is no finite-dimensional parameter
($\alpha_{0\theta}\equiv0$, $\tilde k=k$),
$\varepsilon\equiv\rho=Y_i-\Lambda(\gamma_F(X_i))$, and $\Gamma$ is the
mean-square closure of $\bigcup_J\Gamma_J$, where
\[
\Gamma_J=\operatorname{span}\{b_1,\ldots,b_J\}\subset\Gamma
\]
is a possibly growing dictionary, $J=J_n$, and
$\mathbf b_J(x)=(b_1(x),\ldots,b_J(x))'$ is the corresponding dictionary
vector. The nuisance estimator on fold $\ell$ is
$\hat\gamma_\ell(x)=\hat\beta_\ell'\mathbf b_J(x)\in\Gamma_J$, computed by
Lasso or post-Lasso on $I_\ell^c$. Throughout, assume that the derivative
$\lambda=\Lambda'$ satisfies
\[
0<\underline\lambda
\leq
\lambda(u)
\leq
\bar\lambda
<\infty,
\]
and write
$\lambda_0(x)=\lambda(\gamma_0(x))$ and
$\hat\lambda_\ell(x)=\lambda(\hat\gamma_\ell(x))$.

\paragraph{Assumption~2.}
For the population adjustment, suppose that, for almost every $x$, the
linear functional
\[
k
\longmapsto
-\bigl(\Pi_{\Gamma,\lambda_0}k\bigr)(x)
\]
is continuous on $\mathcal H_K$. Let its Riesz representer
$r_{\gamma,0}(x,\cdot)\in\mathcal H_K$ satisfy
\[
\left\langle
r_{\gamma,0}(x,\cdot),k
\right\rangle_K
=
-\bigl(\Pi_{\Gamma,\lambda_0}k\bigr)(x),
\qquad
k\in\mathcal H_K,
\]
and assume that
\[
\sup_x
\|r_{\gamma,0}(x,\cdot)\|_K
\leq
\bar r
<
\infty.
\]
This verifies the population part of Assumption~2.

For the corresponding finite-sieve projection, let
\[
\mathbf G_J
=
E\!\left[
\lambda_0(X_i)
\mathbf b_J(X_i)\mathbf b_J(X_i)'
\right],
\]
and suppose that its eigenvalues are bounded away from zero uniformly in
$J$ and that
\[
\sup_x
\mathbf b_J(x)'\mathbf G_J^{-1}\mathbf b_J(x)
\]
is uniformly bounded. The Riesz representer of
$k\mapsto-(\Pi_{\Gamma_J,\lambda_0}k)(x)$ is then
\[
r_{\gamma,0,J}(x,\cdot)
=
-\mathbf b_J(x)'\mathbf G_J^{-1}
E\!\left[
\lambda_0(X_i)
\mathbf b_J(X_i)K(X_i,\cdot)
\right],
\]
and satisfies
\[
\left\langle
r_{\gamma,0,J}(x,\cdot),k
\right\rangle_K
=
-\bigl(\Pi_{\Gamma_J,\lambda_0}k\bigr)(x),
\qquad
k\in\mathcal H_K.
\]
The bounded-kernel and leverage conditions imply
\[
\sup_x
\|r_{\gamma,0,J}(x,\cdot)\|_K
\leq
\bar r_J,
\qquad
\sup_J\bar r_J<\infty.
\]
The analogous condition on
\[
\hat{\mathbf G}_{J,\ell}
=
\frac{1}{n_\ell^c}
\sum_{i\in I_\ell^c}
\hat\lambda_\ell(X_i)
\mathbf b_J(X_i)\mathbf b_J(X_i)'
\]
gives the estimated representer $\hat r_{\gamma,\ell}$.

\paragraph{Assumption~8 (population) and its $L_2(F_X)$ constant.}
Define the population feature map by
\[
\Phi_0(x,\cdot)
=
K(x,\cdot)+r_{\gamma,0}(x,\cdot).
\]
The reproducing property and the definition of $r_{\gamma,0}$ give
\[
\left\langle
\Phi_0(x,\cdot),k
\right\rangle_K
=
k(x)-\bigl(\Pi_{\Gamma,\lambda_0}k\bigr)(x),
\qquad
k\in\mathcal H_K.
\]
Since $K$ is bounded, with $\sup_xK(x,x)\leq\kappa$,
\[
\sup_x
\|\Phi_0(x,\cdot)\|_K
\leq
\sqrt{\kappa}+\bar r
<
\infty,
\]
and hence
\[
E\!\left[
\|\Phi_0(X_i,\cdot)\|_K^2
\right]
<
\infty.
\]

For the $L_2(F_X)$-boundedness clause, write
\[
\|k\|_{L_2(F_X,\lambda_0)}^2
=
E\!\left[
\lambda_0(X_i)k(X_i)^2
\right].
\]
The operator $\Pi_{\Gamma,\lambda_0}$ is an orthogonal projection under
the $\lambda_0$-weighted inner product and therefore is a contraction:
\[
\|\Pi_{\Gamma,\lambda_0}k\|_{L_2(F_X,\lambda_0)}
\leq
\|k\|_{L_2(F_X,\lambda_0)}.
\]
Since
$\underline\lambda\leq\lambda_0\leq\bar\lambda$, the weighted and
unweighted $L_2(F_X)$ norms are equivalent. Consequently,
\[
\|(\Pi_0^\perp)k\|_{L_2(F_X)}
=
\|k-\Pi_{\Gamma,\lambda_0}k\|_{L_2(F_X)}
\leq
\left(
1+\sqrt{\bar\lambda/\underline\lambda}
\right)
\|k\|_{L_2(F_X)}.
\]
Thus, Assumption~8 holds with
\[
C
=
1+\sqrt{\bar\lambda/\underline\lambda}.
\]

\paragraph{Assumption~8$^\ast$ (feasible).}
The analogous finite-sieve population feature map is
\[
\Phi_{0,J}(x,\cdot)
=
K(x,\cdot)+r_{\gamma,0,J}(x,\cdot),
\]
which satisfies
\[
\left\langle
\Phi_{0,J}(x,\cdot),k
\right\rangle_K
=
k(x)-\bigl(\Pi_{\Gamma_J,\lambda_0}k\bigr)(x).
\]
The same construction using $\hat\lambda_\ell$ and $\Gamma_J$ on
$I_\ell^c$ gives
\[
\hat\Phi_\ell(x,\cdot)
=
K(x,\cdot)+\hat r_{\gamma,\ell}(x,\cdot).
\]
Since $\rho\equiv\varepsilon$,
\[
\hat r_{i,\ell}
=
\hat\varepsilon_{i,\ell}
\hat\Phi_\ell(X_i,\cdot).
\]

\paragraph{Primitive rate condition (E1).}
Let
\[
\gamma_{0,J}
=
\Pi_{\Gamma_J,\lambda_0}\gamma_0
\in\Gamma_J,
\qquad
a_J
=
\|\gamma_{0,J}-\gamma_0\|_{L_2(F_X)},
\]
and suppose that $\hat\gamma_\ell$ satisfies the standard
approximate-sparse GLM prediction rate under restricted-eigenvalue and
compatibility conditions (see, e.g., B\"uhlmann and van de Geer, 2011):
\[
E_{F_n}\!\left[
\{\hat\gamma_\ell(X_i)-\gamma_0(X_i)\}^2
\,\middle|\,
I_\ell^c
\right]
=
O_p(r_n),
\qquad
r_n
=
\frac{s\log J}{n}+a_J^2,
\]
where $s$ is the sparsity of $\gamma_{0,J}$. We require
\[
r_n=o(n^{-1/2}).
\]
When $\gamma_0\in\Gamma_J$, $a_J=0$, and this condition reduces to
\[
s\log J=o(\sqrt n).
\]

To control approximation of the population feature map, define
\[
d_J^2
=
E\!\left[
\|\Phi_{0,J}(X_i,\cdot)-\Phi_0(X_i,\cdot)\|_K^2
\right]
\]
and suppose that
\[
d_J^2=O(r_n).
\]
For example, if
\[
d_J=O(J^{-\beta_K/d_x}),
\]
this condition holds whenever
\[
J^{-2\beta_K/d_x}=O(r_n).
\]
Suppose also that
\[
\left\|
\hat{\mathbf G}_{J,\ell}^{-1}
-
\mathbf G_J^{-1}
\right\|
+
\|\hat\lambda_\ell-\lambda_0\|_{L_2(F_X)}
=
O_p(\sqrt{r_n}),
\]
and that the conditional second moments of $\varepsilon_i$ given $X_i$
under $F_n$ and $H$ are uniformly bounded. Under the Gram-matrix
conditions above, these requirements imply
\[
E_{F_n}\!\left[
\|\hat\Phi_\ell(X_i,\cdot)-\Phi_{0,J}(X_i,\cdot)\|_K^2
\,\middle|\,
I_\ell^c
\right]
=
O_p(r_n),
\]
together with the analogous bounds weighted by $\varepsilon_i^2$ under
$F_n$ and $H$.

\smallskip
\noindent
\textit{Verification of (S1).}
Since $\Lambda$ is Lipschitz with constant $\bar\lambda$,
\[
(\hat\varepsilon_{i,\ell}-\varepsilon_i)^2
\leq
\bar\lambda^2
\{\hat\gamma_\ell(X_i)-\gamma_0(X_i)\}^2,
\]
so the residual part of (S1) is $O_p(r_n)=o_p(1)$. Moreover,
\[
\hat\Phi_\ell-\Phi_0
=
(\hat\Phi_\ell-\Phi_{0,J})
+
(\Phi_{0,J}-\Phi_0).
\]
The empirical-projection and sieve-approximation conditions therefore
give
\[
E_{F_n}\!\left[
\|\hat\Phi_\ell(X_i,\cdot)-\Phi_0(X_i,\cdot)\|_K^2
\,\middle|\,
I_\ell^c
\right]
=
O_p(r_n),
\]
as well as
\[
\begin{split}
&
E_{F_n}\!\left[
\varepsilon_i^2
\|\hat\Phi_\ell(X_i,\cdot)-\Phi_0(X_i,\cdot)\|_K^2
\,\middle|\,
I_\ell^c
\right]
\\
&\qquad
+
E_H\!\left[
\varepsilon_i^2
\|\hat\Phi_\ell(X_i,\cdot)-\Phi_0(X_i,\cdot)\|_K^2
\,\middle|\,
I_\ell^c
\right]
=
O_p(r_n).
\end{split}
\]
Thus, (S1) follows from $r_n=o(1)$.

\smallskip
\noindent
\textit{Verification of (S2).}
The residual-estimation error and the feature-map error are both
$O_p(\sqrt{r_n})$ in their respective mean-square norms. Their product is
therefore
\[
O_p(r_n)
=
o_p(n^{-1/2}),
\]
which verifies (S2).

\smallskip
\noindent
\textit{Verification of (S3).}
Suppose, in addition, that $\lambda=\Lambda'$ is Lipschitz. A
first-order Taylor expansion gives
\[
\hat\varepsilon_{i,\ell}-\varepsilon_i
=
-\lambda_0(X_i)
\{\hat\gamma_\ell(X_i)-\gamma_0(X_i)\}
+
R_{i,\ell},
\]
where
\[
|R_{i,\ell}|
\leq
C
\{\hat\gamma_\ell(X_i)-\gamma_0(X_i)\}^2.
\]
Because
$\sup_x\|\Phi_0(x,\cdot)\|_K<\infty$, the contribution of the
remainder satisfies
\[
\left\|
E_{F_n}\!\left[
R_{i,\ell}\Phi_0(X_i,\cdot)
\,\middle|\,
I_\ell^c
\right]
\right\|_K
=
O_p(r_n)
=
o_p(n^{-1/2}).
\]

For the leading linear term, the first-order condition of the
$\lambda_0$-weighted projection onto the full nuisance space $\Gamma$
gives, for every $g\in\Gamma$ and $k\in\mathcal H_K$,
\[
E\!\left[
\lambda_0(X_i)g(X_i)
\{k(X_i)-\Pi_{\Gamma,\lambda_0}k(X_i)\}
\right]
=
0.
\]
Since
\[
\left\langle
\Phi_0(X_i,\cdot),k
\right\rangle_K
=
k(X_i)-\Pi_{\Gamma,\lambda_0}k(X_i),
\]
it follows that
\[
E\!\left[
\lambda_0(X_i)g(X_i)
\Phi_0(X_i,\cdot)
\right]
=
0
\qquad
\text{in }\mathcal H_K
\]
for every $g\in\Gamma$. Conditional on $I_\ell^c$,
\[
\hat\gamma_\ell-\gamma_0
=
(\hat\gamma_\ell-\gamma_{0,J})
+
(\gamma_{0,J}-\gamma_0)
\in\Gamma,
\]
because
$\hat\gamma_\ell,\gamma_{0,J}\in\Gamma_J\subset\Gamma$ and
$\gamma_0\in\Gamma$. Since $F_n$ and $F_0$ have the same $X$-marginal,
\[
E_{F_n}\!\left[
\lambda_0(X_i)
\{\hat\gamma_\ell(X_i)-\gamma_0(X_i)\}
\Phi_0(X_i,\cdot)
\,\middle|\,
I_\ell^c
\right]
=
0.
\]
Thus, orthogonality with respect to the full nuisance space eliminates
both the within-sieve regularization bias and the sieve-approximation
bias from the leading term in (S3). Only the quadratic Taylor remainder
remains, and it is $o_p(n^{-1/2})$ under
$r_n=o(n^{-1/2})$. Therefore, (S3) holds.

\paragraph{Summary of rate conditions.}
In the general sieve case, sufficient conditions for this verification
are
\[
r_n
=
\frac{s\log J}{n}+a_J^2
=
o(n^{-1/2}),
\qquad
d_J^2
=
O(r_n),
\]
together with the Gram-matrix concentration and moment conditions stated
above. Here $a_J$ is the approximation error of $\gamma_0$ by
$\Gamma_J$, whereas $d_J$ is the approximation error of the finite-sieve
feature map $\Phi_{0,J}$ to the population feature map $\Phi_0$. These
conditions imply (S1) and (S2). Condition (S3) requires no additional
sieve-approximation restriction: orthogonality with respect to the full
nuisance space $\Gamma$ annihilates both
$\hat\gamma_\ell-\gamma_{0,J}$ and
$\gamma_{0,J}-\gamma_0$ in the leading linear bias. The remaining
quadratic Taylor term is controlled by $r_n=o(n^{-1/2})$.

In Design~1, the nuisance space used for each specification is the
finite dictionary itself, $\Gamma=\Gamma_J$. Therefore,
$\Phi_{0,J}=\Phi_0$ and $d_J=0$. Since the null regression also belongs
to $\Gamma_J$, $a_J=0$, and the rate requirement reduces to
\[
s\log J=o(\sqrt n),
\]
together with the stated Gram-matrix regularity conditions. This
explains why orthogonalization protects the LRK test against
within-dictionary Lasso shrinkage bias in Design~1, unlike the
non-orthogonal plug-in procedure.
\subsection{Verification for Example 2: Significance Testing}
In Example~2,
\[
\varepsilon(W_i,\gamma)
=
\rho(W_i,\gamma)
=
Y_i-\gamma(Z_i),
\qquad
X_i=(Z_i,D_i),
\]
and no finite-dimensional parameter is present. Hence
\[
B_0(X_i,\cdot)=K(X_i,\cdot),
\qquad
\widetilde r_{\gamma,0}=r_{\gamma,0},
\]
and all conditions involving $A_{F_0}$, $C_0$, $g_0$, and $J_0$ are
vacuous. Defining $\mu_0(z,\cdot)=E[K(X_i,\cdot)\mid Z_i=z]$, the
reproducing property gives $r_{\gamma,0}(z,\cdot)=-\mu_0(z,\cdot)$. A
linear cross-fitted smoother takes the form
\[
\hat\alpha_{\gamma,\ell}(z,k)
=
-\sum_{m\in I_\ell^c}\omega_{\ell m}(z)k(X_m),
\qquad
\hat r_{\gamma,\ell}(z,\cdot)
=
-\hat\mu_\ell(z,\cdot),
\qquad
\hat\mu_\ell(z,\cdot)
=
\sum_{m\in I_\ell^c}\omega_{\ell m}(z)K(X_m,\cdot),
\]
where the weights $\omega_{\ell m}(z)$ (series, kernel, or forest-based)
must not depend on $k$ given $I_\ell^c$, ensuring linearity of the
estimated adjustment. Suppose $\sup_{x\in\mathcal X}K(x,x)\leq\kappa$.
By Jensen's inequality,
\[
\sup_z\|r_{\gamma,0}(z,\cdot)\|_K
=
\sup_z\|\mu_0(z,\cdot)\|_K
\leq\sqrt{\kappa},
\qquad
\sup_x\|B_0(x,\cdot)\|_K\leq\sqrt{\kappa},
\]
verifying Assumption~2 and the boundedness in (S1). If
$\sup_z\sum_{m\in I_\ell^c}|\omega_{\ell m}(z)|=O_p(1)$, then similarly
$\sup_z\|\hat r_{\gamma,\ell}(z,\cdot)\|_K=O_p(1)$.
\paragraph{Assumption~8 (population) and its $L_2(F_X)$ constant.}
With $x=(z,d)$,
\[
\Phi_0(x,\cdot)=K(x,\cdot)-\mu_0(z,\cdot),
\qquad
\sup_x\|\Phi_0(x,\cdot)\|_K\leq2\sqrt\kappa<\infty,
\]
so $E[\|\Phi_0(X_i,\cdot)\|_K^2]<\infty$. Since
$(\Pi_0^\perp k)(x)=k(x)-E[k(X_i)\mid Z_i=z]$ is precisely
$k\mapsto k-E[k\mid Z_i]$, the residual from conditional-mean projection, hence
\[
\|\Pi_0^\perp k\|_{L_2(F_X)}
=
\|k-E[k(X_i)\mid Z_i]\|_{L_2(F_X)}
\leq
\|k\|_{L_2(F_X)},
\]
verifying Assumption~8's $L_2(F_X)$-boundedness clause with the sharp
constant $C=1$.
\paragraph{Assumption~8$^\ast$ (feasible).}
$\hat\Phi_\ell(x,\cdot)=K(x,\cdot)-\hat\mu_\ell(z,\cdot)$ and
$\hat r_{i,\ell}=\hat\varepsilon_{i,\ell}\hat\Phi_\ell(X_i,\cdot)$, since
$\rho\equiv\varepsilon$.
\paragraph{(S1)--(S2).}
Define
\[
p_{n,\ell}^2
=
E_{F_n}\!\left[
\{\hat\gamma_\ell(Z_i)-\gamma_0(Z_i)\}^2
\,\middle|\,I_\ell^c
\right],
\qquad
q_{n,\ell}^2
=
E_{F_n}\!\left[
\|\hat\Phi_\ell(X_i,\cdot)-\Phi_0(X_i,\cdot)\|_K^2
\,\middle|\,I_\ell^c
\right].
\]
Since $\hat\varepsilon_{i,\ell}-\varepsilon_i=-\{\hat\gamma_\ell(Z_i)-\gamma_0(Z_i)\}$
exactly, residual consistency in (S1) is $p_{n,\ell}=o_p(1)$. Because
$\hat\Phi_\ell(x,\cdot)-\Phi_0(x,\cdot)=-\{\hat\mu_\ell(z,\cdot)-\mu_0(z,\cdot)\}$,
\[
q_{n,\ell}^2=E_{F_n}\!\left[\|\hat\mu_\ell(Z_i,\cdot)-\mu_0(Z_i,\cdot)\|_K^2\,\middle|\,I_\ell^c\right],
\]
the cross-fitted mean-square error of $\hat\mu_\ell$ as an estimator of
the conditional kernel-mean embedding $\mu_0$. Since $\|K(x,\cdot)\|_K
\leq\sqrt\kappa$ is uniformly bounded, estimating $\mu_0$ by applying the
weights $\omega_{\ell m}(z)$ to $K(X_m,\cdot)$ is exactly the same
nonparametric regression problem as estimating $\gamma_0(z)$ by applying
the same weights to $Y_m$, now with a bounded $\mathcal H_K$-valued
response in place of a bounded scalar response. Consequently, under the
same smoothness and bandwidth/tree-depth conditions delivering
$p_{n,\ell}=o_p(n^{-1/4})$ for a local-kernel, series, or random-forest
smoother, standard nonparametric regression theory gives
$q_{n,\ell}=O_p(p_{n,\ell})$ at the same rate, so $q_{n,\ell}=o_p(n^{-1/4})$
holds under the identical primitive condition, with no separate
assumption required. Because
$F_0$, $H$, and $F_n$ share the same $X$-marginal (Section~5.1), they
share the same conditional second moment structure in $Z_i$; if
$\max\{\sup_nE_{F_n}[\varepsilon_i^2\mid Z_i],E_H[\varepsilon_i^2\mid Z_i]\}\leq C$
a.s., then
\[
E_{F_n}\!\left[\varepsilon_i^2\|\hat\Phi_\ell(X_i,\cdot)-\Phi_0(X_i,\cdot)\|_K^2\mid I_\ell^c\right]
+
E_H[\cdots]
\leq
2Cq_{n,\ell}^2,
\]
so (S1) holds whenever $p_{n,\ell}=o_p(1)$ and $q_{n,\ell}=o_p(1)$. The
only nonvacuous condition in (S2) reduces to
$p_{n,\ell}q_{n,\ell}=o_p(n^{-1/2})$, satisfied under
$p_{n,\ell}=q_{n,\ell}=o_p(n^{-1/4})$.
\paragraph{(S3).}
Since $F_n$ and $F_0$ share the same $X$-marginal,
\[
E_{F_n}[\Phi_0(X_i,\cdot)\mid Z_i]
=
E_{F_n}[K(X_i,\cdot)\mid Z_i]-\mu_0(Z_i,\cdot)
=
\mu_0(Z_i,\cdot)-\mu_0(Z_i,\cdot)
=
0
\]
exactly. As $\hat\gamma_\ell-\gamma_0$ is a function of $Z_i$ alone given
$I_\ell^c$, the tower property gives
\[
E_{F_n}\!\left[(\hat\varepsilon_{i,\ell}-\varepsilon_i)\Phi_0(X_i,\cdot)\mid I_\ell^c\right]
=
-E_{F_n}\!\left[
\{\hat\gamma_\ell(Z_i)-\gamma_0(Z_i)\}
\underbrace{E_{F_n}[\Phi_0(X_i,\cdot)\mid Z_i]}_{=0}
\,\middle|\,I_\ell^c
\right]
=
0
\]
identically, for \emph{any} nuisance estimator $\hat\gamma_\ell$,
regardless of its convergence rate --- no further condition is needed
for (S3). Equivalently, in terms of the general bias functional,
$b_{F_n}(\gamma)-b_{F_n}(\gamma_0)=E_{F_n}[\{\gamma_0(Z_i)-\gamma(Z_i)\}\Phi_0(X_i,\cdot)]=0$
for every $\gamma\in\Gamma$, since $\gamma(Z_i)-\gamma_0(Z_i)$ is
$Z_i$-measurable and $E_{F_n}[\Phi_0(X_i,\cdot)\mid Z_i]=0$. Assumption~6
(equivalently (S3)) therefore holds exactly, with no nuisance-rate
restriction at all.
\paragraph{Summary of rate conditions.}
Every condition in this verification reduces to the single standard
nonparametric rate $p_{n,\ell}=o_p(n^{-1/4})$, achievable by any of the
local-kernel, series, or random-forest estimators of
$\gamma_0(z)=E[Y_i\mid Z_i=z]$ under standard smoothness (or, for random
forests, sparsity/smoothness) conditions on $\gamma_0$. Because
$q_{n,\ell}$ governs a bounded, $\mathcal H_K$-valued version of the very
same regression problem, it obeys the same rate,
$q_{n,\ell}=O_p(p_{n,\ell})=o_p(n^{-1/4})$, with no separate condition
required. (S1) and (S2) hold under $p_{n,\ell}=q_{n,\ell}=o_p(n^{-1/4})$;
(S3) holds \emph{exactly}, for any estimator regardless of its
convergence rate, since $F_n$ and $F_0$ share the same $X$-marginal.
This is the cleanest of the three examples: a single, standard
nonparametric rate condition suffices for (S1)-(S2), and (S3) is
rate-agnostic.
\subsection{Verification for Example 3: Constant CATE/CATT}
In Example~3, the doubly robust scores are already Neyman orthogonal in
$\gamma_0$, so
\[
\alpha_{0\gamma}(\cdot,k)=\hat\alpha_{\gamma,\ell}(\cdot,k)=0,
\qquad
r_{\gamma,0}=\hat r_{\gamma,\ell}=0,
\]
and Assumption~2 is immediate; all conditions involving $\rho$ and its
representers are vacuous. Let
\[
L_i=
\begin{cases}
1,&\text{CATE},\\
D_i,&\text{CATT},
\end{cases}
\qquad
Q_i(\gamma)=
\begin{cases}
U_i(\gamma),&\text{CATE},\\
V_i(\gamma),&\text{CATT},
\end{cases}
\qquad
\varepsilon(W_i,\eta)=Q_i(\gamma)-\theta L_i.
\]
Since $\dot\varepsilon_\theta=-L_i$ and $A_{F_0}\equiv\hat A_\ell\equiv1$,
\[
C_0(\cdot)
=
\frac{E[L_iK(X_i,\cdot)]}{E[L_i]},
\qquad
B_0(X_i,\cdot)=K(X_i,\cdot)-C_0(\cdot),
\]
with sample analogue $\hat C_n(\cdot)=\sum_iL_iK(X_i,\cdot)/\sum_iL_i$,
$\hat B_\ell(X_i,\cdot)=K(X_i,\cdot)-\hat C_n(\cdot)$, which does not
depend on $\ell$ since $\dot\varepsilon_\theta$ does not depend on
$\gamma$. Suppose
\[
\sup_xK(x,x)\leq\kappa,
\qquad
\inf_nE_{F_n}[L_i]>0.
\]
Since $0\leq L_i\leq1$, $\|C_0\|_K\leq\sqrt\kappa$ and
$\sup_x\|B_0(x,\cdot)\|_K\leq2\sqrt\kappa$, verifying the boundedness
underlying (S1). The Hilbert-space law of large numbers gives
$\|\hat C_n-C_0\|_K=O_p(n^{-1/2})$ (the $O(n^{-1/2})$ gap between
expectations under $F_n$ and $F_0$ is included in this bound), hence
$\|\hat B_\ell-B_0\|_K=O_p(n^{-1/2})$, with $\hat A_\ell-A_{F_0}=0$
identically.
\paragraph{Assumption~8 (population) and its $L_2(F_X)$ constant.}
Here $\Phi_0(x,\cdot)=B_0(x,\cdot)=K(x,\cdot)-C_0(\cdot)$, and
$\sup_x\|\Phi_0(x,\cdot)\|_K\leq2\sqrt\kappa<\infty$ gives
$E[\|\Phi_0(X_i,\cdot)\|_K^2]<\infty$. For the $L_2(F_X)$-boundedness
clause, $(\Pi_0^\perp k)(x)=\tilde k(x)=k(x)-E[k(X_i)L_i]/E[L_i]$ merely
centers $k$ by an $L$-weighted mean, a real number independent of $x$.
By Cauchy--Schwarz, $|E[k(X_i)L_i]|\leq\|k\|_{L_2(F_X)}\sqrt{E[L_i]}$
(using $L_i^2=L_i$), so the centering constant satisfies
$|E[k(X_i)L_i]/E[L_i]|\leq\|k\|_{L_2(F_X)}/\sqrt{E[L_i]}$. Since
$\|1\|_{L_2(F_X)}=1$, the triangle inequality gives
\[
\|\Pi_0^\perp k\|_{L_2(F_X)}
\leq
\left(1+\frac1{\sqrt{E[L_i]}}\right)\|k\|_{L_2(F_X)},
\]
verifying Assumption~8 with explicit constant
$C=1+E[L_i]^{-1/2}$, finite under the same overlap condition
$\inf_nE_{F_n}[L_i]>0$ already used above --- no separate condition is
needed.
\paragraph{Assumption~8$^\ast$ (feasible).}
$\hat\Phi_\ell(x,\cdot)=\hat B_\ell(x,\cdot)=K(x,\cdot)-\hat C_n(\cdot)$ and
$\hat r_{i,\ell}=\hat\varepsilon_{i,\ell}\hat\Phi_\ell(X_i,\cdot)$, since
$\alpha_{0\gamma}\equiv\hat\alpha_{\gamma,\ell}\equiv0$. Crucially,
$\Delta\hat\Phi_\ell(x):=\hat\Phi_\ell(x,\cdot)-\Phi_0(x,\cdot)=-(\hat C_n-C_0)$
does not depend on $x$ or $\ell$, so $\|\Delta\hat\Phi_\ell(X_i)\|_K=\|\hat C_n-C_0\|_K=O_p(n^{-1/2})$
uniformly over $i$.
\paragraph{(S1)--(S2).}
Define
\[
p_{e,n,\ell}^2=E_{F_n}\!\left[\{\hat e_\ell(X_i)-e_0(X_i)\}^2\mid I_\ell^c\right],
\qquad
p_{d,n,\ell}^2=E_{F_n}\!\left[\{\hat\mu_{d,\ell}(X_i)-\mu_d(X_i)\}^2\mid I_\ell^c\right],\ d=0,1,
\]
and $p_{\theta,n,\ell}=|\hat\theta_\ell-\theta_0|$. Under overlap,
uniformly bounded conditional second moments of $Y_i$, and bounded or
suitably clipped nuisance estimates,
\[
E_{F_n}\!\left[(\hat U_{i,\ell}-U_i)^2\mid I_\ell^c\right]^{1/2}
\leq
C(p_{e,n,\ell}+p_{0,n,\ell}+p_{1,n,\ell}),
\qquad
E_{F_n}\!\left[(\hat V_{i,\ell}-V_i)^2\mid I_\ell^c\right]^{1/2}
\leq
C(p_{e,n,\ell}+p_{0,n,\ell}).
\]
Since $\hat\varepsilon_{i,\ell}-\varepsilon_i=(\hat Q_{i,\ell}-Q_i)-(\hat\theta_\ell-\theta_0)L_i$,
the residual part of (S1) follows from
$p_{\theta,n,\ell}+p_{e,n,\ell}+p_{0,n,\ell}+p_{1,n,\ell}=o_p(1)$ (CATE;
drop $p_{1,n,\ell}$ for CATT). The feature-map part of (S1) is immediate:
$E_{F_n}[\varepsilon_i^2\|\Delta\hat\Phi_\ell(X_i)\|_K^2\mid I_\ell^c]
=\|\hat C_n-C_0\|_K^2E_{F_n}[\varepsilon_i^2\mid I_\ell^c]=O_p(n^{-1})=o_p(1)$
given bounded second moments of $\varepsilon_i$. For (S2), the only
nonvacuous product is
\[
E_{F_n}\!\left[(\hat\varepsilon_{i,\ell}-\varepsilon_i)^2\mid I_\ell^c\right]^{1/2}
\|\hat C_n-C_0\|_K
=
o_p(1)\cdot O_p(n^{-1/2})
=
o_p(n^{-1/2}),
\]
so ordinary mean-square consistency of the doubly robust score suffices;
no separate rate condition on $\hat C_n-C_0$ is required beyond the
$\sqrt n$-consistency already established above.
\paragraph{(S3).}
Since $C_0(\cdot)=E[L_iK(X_i,\cdot)]/E[L_i]$,
\[
E[L_i\Phi_0(X_i,\cdot)]
=
E[L_iK(X_i,\cdot)]-E[L_i]C_0(\cdot)
=
0
\]
exactly. Decomposing
$\hat\varepsilon_{i,\ell}-\varepsilon_i=(\hat Q_{i,\ell}-Q_i)-(\hat\theta_\ell-\theta_0)L_i$,
the $\theta$-term contributes
\[
(\hat\theta_\ell-\theta_0)\,
E_{F_n}[L_i\Phi_0(X_i,\cdot)\mid I_\ell^c]
=
(\hat\theta_\ell-\theta_0)\,O(n^{-1/2})
=
o_p(n^{-1/2}),
\]
using that $E_{F_n}[L_i\Phi_0(X_i,\cdot)]$ differs from its (exactly
zero) value at $F_0$ by $O(n^{-1/2})$, and $p_{\theta,n,\ell}=o_p(1)$.
The $Q$-term is the standard mixed-bias term of the doubly robust score,
\[
\left\|E_{F_n}\!\left[(\hat Q_{i,\ell}-Q_i)\Phi_0(X_i,\cdot)\mid I_\ell^c\right]\right\|_K
=
O_p\!\left(p_{e,n,\ell}(p_{0,n,\ell}+p_{1,n,\ell})\right)
\quad\text{(CATE)},
\]
or $O_p(p_{e,n,\ell}p_{0,n,\ell})$ for CATT. Hence (S3) holds under
ordinary consistency together with
\[
\sqrt n\,p_{e,n,\ell}(p_{0,n,\ell}+p_{1,n,\ell})=o_p(1)
\quad\text{(CATE)},
\qquad
\sqrt n\,p_{e,n,\ell}p_{0,n,\ell}=o_p(1)
\quad\text{(CATT)},
\]
for which $o_p(n^{-1/4})$ rates on the relevant propensity-score and
outcome-regression estimators suffice.
\paragraph{Summary of rate conditions.}
(S1) and (S2) require only ordinary consistency,
$p_{\theta,n,\ell}+p_{e,n,\ell}+p_{0,n,\ell}+p_{1,n,\ell}=o_p(1)$,
together with the parametric rate $\|\hat C_n-C_0\|_K=O_p(n^{-1/2})$
established above, which holds automatically and imposes no separate
condition on the nuisance estimators. (S3) is the only condition
genuinely requiring a rate, and it is the standard double-robustness
product condition, satisfied whenever both the propensity score and the
outcome regression(s) converge at rate $o_p(n^{-1/4})$. For example, if
$e_0$ and $\mu_d$ ($d=0,1$) are estimated by series regression on a
smooth basis (power series, splines, or Fourier terms) of dimension
$J_n$, and $e_0,\mu_d$ have $s>0$ bounded derivatives on a compact
domain in $\mathbb R^{d_x}$, standard series-estimation theory gives the
mean-squared-error-optimal rate $J_n\sim n^{d_x/(2s+d_x)}$, achieving
$\|\hat e_\ell-e_0\|_{L_2(F_X)}=O_p(n^{-s/(2s+d_x)})$ and analogously for
$\hat\mu_{d,\ell}$. This is faster than the required $o_p(n^{-1/4})$
whenever $s>d_x/2$; with $d_x=2$ as in the Design~3 simulation, any
$s>1$ (e.g., twice continuously differentiable propensity and outcome
regressions, $s=2$, giving rate $n^{-1/3}$) comfortably suffices, with
no undersmoothing needed. The same conclusion holds for kernel or
random-forest estimators achieving the analogous nonparametric rate
under standard smoothness or sparsity conditions.
\section{Further results on simulations and the empirical application}
\label{app:mc}

These are the weights for the examples without cross-fitting which are used in the simulations to save space. In this case, all nuisance functions are estimated once from the full sample in the weights (though residuals are cross-fitted),
\[
(\hat{\mathbf P}\mathbf v)_i
=
\begin{cases}
\hat b_i'(\hat{\mathbf B}'\hat{\boldsymbol\Lambda}\hat{\mathbf B})^{-}
\hat{\mathbf B}'\hat{\boldsymbol\Lambda}\mathbf v, & \text{Example 1},\\[0.4em]
\displaystyle\sum_{j=1}^n\hat\omega_j(Z_i)v_j, & \text{Example 2},\\[0.6em]
\displaystyle\frac{\mathbf L'\mathbf v}{\mathbf L'\mathbf1_n}, & \text{Example 3},
\end{cases}
\]
where $\hat{\mathbf B}$, $\hat{\boldsymbol\Lambda}=\operatorname{diag}\{\lambda(\hat\gamma(X_j)):j=1,\ldots,n\}$,
and $\hat\omega_j(z)$ are now built from all $n$ observations, with $\hat\gamma$ estimated once
from the full sample. Consequently, observation $i$ itself may contribute to its own fitted
nuisance value, and $\hat P_{ii}\neq0$ in general (for Example 3, $\hat P_{ii}=L_i/(\mathbf
L'\mathbf1_n)$).

\subsection{Additional results for Design 1}
\label{app:mc-design1}

This subsection reports additional results for the high-dimensional regression experiment in Section~6.1. We first examine size in the baseline specification in which all nonzero regression coefficients are left unpenalized. We then report the non-cross-fitted LRK and low-dimensional plug-in results omitted from Table~\ref{tab:mc-design1-size}, followed by additional power results for \(J=50\) and \(J=150\).

\paragraph{Size in the baseline specification.}
In the baseline size experiment, the linear and logistic models are
\[
Y_i
=
X_{i1}+X_{i2}+\varepsilon_i,
\qquad
\varepsilon_i\sim N(0,1),
\]
and
\[
Y_i\mid X_i
\sim
\mathrm{Bernoulli}
\left[
\Lambda(0.5X_{i1}+0.5X_{i2})
\right],
\]
respectively. Because the intercept, \(X_{i1}\), and \(X_{i2}\) are unpenalized, all nonzero coefficients in this design are exempt from regularization. The experiment therefore provides a useful benchmark in which Lasso is applied to a potentially large dictionary, but does not shrink any nonzero population coefficient.

Table~\ref{tab:mc-design1-baseline-size} reports the results. The feasible LRK procedures provide satisfactory size control across models, sample sizes, and dictionary dimensions. The cross-fitted and non-cross-fitted versions are very similar and generally track the oracle procedure. The high-dimensional plug-in test is also much better behaved than in the regularization-bias design of Table~\ref{tab:mc-design1-size}. This comparison confirms that the substantial plug-in distortions in the latter design arise when nonzero coefficients are subject to penalization.

\begin{table}[!htbp]
\centering
\caption{Design 1: Empirical size in the baseline specification. Empirical rejection probabilities are reported in percent.}
\label{tab:mc-design1-baseline-size}
\small
\renewcommand{\arraystretch}{1.08}
\setlength{\tabcolsep}{6pt}
\begin{tabular}{|cc|ccccc|}
\hline
\(n\) & \(J\)
& LRK-CF
& LRK-NoCF
& Oracle
& Plug-in
& Low-dimensional
\\
\hline
\multicolumn{7}{|c|}{\textit{Panel A: Linear model}}\\
\hline
200 &  50 & 6.5 & 6.4 & 5.3 & 5.5 & 5.4 \\
200 & 100 & 6.0 & 6.0 & 4.4 & 6.7 & 5.4 \\
200 & 150 & 6.5 & 6.4 & 5.0 & 7.3 & 5.4 \\
400 &  50 & 6.2 & 6.8 & 5.8 & 3.3 & 4.0 \\
400 & 100 & 6.2 & 6.5 & 5.0 & 3.8 & 4.0 \\
400 & 150 & 6.0 & 5.8 & 4.3 & 3.8 & 4.0 \\
\hline
\multicolumn{7}{|c|}{\textit{Panel B: Logistic model}}\\
\hline
200 &  50 & 5.1 & 5.0 & 4.2 & 5.1 & 5.3 \\
200 & 100 & 4.6 & 4.8 & 4.2 & 5.5 & 5.3 \\
200 & 150 & 4.2 & 4.5 & 4.5 & 5.6 & 5.3 \\
400 &  50 & 5.2 & 5.5 & 5.6 & 5.8 & 5.4 \\
400 & 100 & 4.5 & 5.0 & 4.7 & 6.6 & 5.4 \\
400 & 150 & 5.0 & 4.7 & 4.6 & 6.4 & 5.4 \\
\hline
\end{tabular}
\vspace{1ex}
\parbox{0.96\linewidth}{\footnotesize
\emph{Notes.} LRK-CF uses five-fold cross-fitted residuals and pooled projection weights. LRK-NoCF uses full-sample residuals and derivative weights. Oracle uses the true regression residuals and derivative weights. Plug-in is the high-dimensional non-orthogonal statistic with an active-set influence-function multiplier bootstrap. Low-dimensional is the correctly specified plug-in procedure based on the intercept, \(X_1\), and \(X_2\). The nominal level is \(5\%\). Results are based on \(1{,}000\) Monte Carlo replications and \(999\) multiplier-bootstrap replications.}
\end{table}

\paragraph{Additional results under regularization bias.}
Table~\ref{tab:mc-design1-additional-size} reports the non-cross-fitted LRK results and the correctly specified low-dimensional plug-in calibration for the regularization-bias design of Table~\ref{tab:mc-design1-size}. The low-dimensional regression includes
\[
1,\quad X_{i1},\quad X_{i2},\quad
\sin(X_{i1}),\quad \sin(X_{i2})
\]
as unpenalized regressors. It is therefore correctly specified and estimated without regularization bias.

The low-dimensional plug-in procedure remains reasonably close to the nominal level. The non-cross-fitted LRK test also exhibits satisfactory size and performs similarly to LRK-CF. Thus, the main conclusions are not driven by cross-fitting. Cross-fitting is nevertheless retained in the proposed procedure because it provides theoretical protection against overfitting under more general machine-learning estimators.

\begin{table}[!htbp]
\centering
\caption{Design 1: Additional size results under regularization bias. Empirical rejection probabilities are reported in percent.}
\label{tab:mc-design1-additional-size}
\small
\renewcommand{\arraystretch}{1.08}
\setlength{\tabcolsep}{7pt}
\begin{tabular}{|cc|cc|cc|}
\hline
&&
\multicolumn{2}{c|}{Linear}
&
\multicolumn{2}{c|}{Logistic}
\\
\cline{3-6}
\(n\) & \(J\)
& LRK-NoCF & Low-dimensional
& LRK-NoCF & Low-dimensional
\\
\hline
200 &  50 & 6.5 & 5.2 & 4.3 & 6.3 \\
200 & 100 & 6.2 & 5.2 & 4.4 & 6.3 \\
200 & 150 & 6.8 & 5.2 & 4.0 & 6.3 \\
\hline
400 &  50 & 5.7 & 5.0 & 5.3 & 5.1 \\
400 & 100 & 6.0 & 5.0 & 5.4 & 5.1 \\
400 & 150 & 5.7 & 5.0 & 6.1 & 5.1 \\
\hline
\end{tabular}
\vspace{1ex}
\parbox{0.96\linewidth}{\footnotesize
\emph{Notes.} LRK-NoCF uses full-sample residuals and derivative weights. Low-dimensional denotes the correctly specified unpenalized plug-in test with an influence-function multiplier bootstrap. The null regression contains the penalized component \(0.5\{\sin(X_1)+\sin(X_2)\}\). The nominal level is \(5\%\). Results are based on \(1{,}000\) Monte Carlo replications and \(999\) multiplier-bootstrap replications.}
\end{table}

\paragraph{Additional power results.}
Table~\ref{tab:mc-design1-additional-power} reports selected rejection probabilities for \(J=50\) and \(J=150\). The results confirm the conclusions obtained from the power curves for \(J=100\) in Figure~\ref{fig:mc-example1-power}. Under the additive dictionary, the interaction
\[
a(X)=\sin(2X_1+2X_2)
\]
lies outside the nuisance space, and rejection probabilities increase with \(\delta\) and \(n\). Under the tensor-product dictionary, \(a\) belongs to the nuisance space for every dictionary dimension considered. Accordingly, rejection frequencies remain close to their null values even for \(\delta=1\).

Increasing the dimension of the additive dictionary from \(J=50\) to \(J=150\) reduces power in some configurations, particularly when \(n=200\). This reflects the cost of projecting onto a larger nuisance space in a relatively small sample. Nevertheless, the test retains substantial power in the linear model, and power increases markedly when the sample size rises to \(n=400\).

\begin{table}[!htbp]
\centering
\caption{Design 1: Additional power results for the cross-fitted LRK test. Empirical rejection probabilities are reported in percent.}
\label{tab:mc-design1-additional-power}
\small
\renewcommand{\arraystretch}{1.08}
\setlength{\tabcolsep}{6pt}
\begin{tabular}{|ccc|cc|cc|}
\hline
&&&
\multicolumn{2}{c|}{Additive}
&
\multicolumn{2}{c|}{Tensor product}
\\
\cline{4-7}
Model & \(n\) & \(J\)
& \(\delta=0.5\) & \(\delta=1\)
& \(\delta=0.5\) & \(\delta=1\)
\\
\hline
Linear
& 200 &  50 & 28.3 & 82.6 & 6.3 & 6.3 \\
Linear
& 200 & 150 & 20.1 & 57.4 & 6.6 & 6.6 \\
Linear
& 400 &  50 & 54.1 & 98.7 & 6.5 & 6.5 \\
Linear
& 400 & 150 & 57.1 & 99.5 & 6.0 & 6.0 \\
\hline
Logistic
& 200 &  50 & 7.2 & 17.3 & 4.5 & 4.9 \\
Logistic
& 200 & 150 & 6.8 & 11.4 & 4.0 & 4.9 \\
Logistic
& 400 &  50 & 9.3 & 42.0 & 6.6 & 5.3 \\
Logistic
& 400 & 150 & 8.6 & 37.6 & 4.4 & 4.7 \\
\hline
\end{tabular}
\vspace{1ex}
\parbox{0.96\linewidth}{\footnotesize
\emph{Notes.} The table reports rejection frequencies for LRK-CF. Under the additive dictionary, the interaction direction lies outside the nuisance space. Under the tensor-product dictionary, it belongs to the nuisance space and hence does not constitute a testable direction. Results are based on \(1{,}000\) Monte Carlo replications and \(999\) multiplier-bootstrap replications.}
\end{table}

Across the complete grid
\[
\delta\in\{0,0.1,\ldots,1\},
\]
the rejection frequencies of LRK-CF and LRK-NoCF are very similar. The largest difference is approximately three percentage points and occurs in a configuration with relatively high power. The differences are smaller and change sign in most other configurations. Hence, there is no evidence of a systematic power loss from cross-fitting.
\subsection{Additional results for Design 2}
\label{app:mc-design2}

Table~\ref{tab:design2_n200} reports the results for Design~2 with \(n=200\). The conclusions are similar to those obtained with \(n=400\) in Table~\ref{tab:design2_n400}. Across all kernels and nuisance learners, empirical size ranges from \(3.5\%\) to \(5.8\%\). Power increases steadily with \(\delta\), with the distance and Gaussian kernels generally outperforming the D-GM kernel. The results are also reasonably stable across the local-kernel and random-forest nuisance estimators. Since the coefficient of \(D_i\) is \(\delta/\sqrt n\), the broadly similar rejection frequencies for \(n=200\) and \(n=400\) are consistent with the local-alternative design.

\begin{table}[!htbp]
\centering
\caption{Design 2: Additional results for significance testing with flexible nuisance estimation. Empirical rejection probabilities are reported in percent.}
\label{tab:design2_n200}

\small
\renewcommand{\arraystretch}{1.10}
\setlength{\tabcolsep}{8pt}

\begin{tabular}{|c|c|c|c|c|c|c|}
\hline
Nuisance learner
& Kernel
& \(\delta=0\)
& \(\delta=1\)
& \(\delta=2\)
& \(\delta=3\)
& \(\delta=4\)
\\
\hline
Local kernel
& Distance
& 4.1 & 14.7 & 44.5 & 76.3 & 95.3
\\
Local kernel
& Gaussian
& 4.7 & 12.1 & 37.2 & 68.5 & 90.1
\\
Local kernel
& D-GM
& 3.6 & 10.4 & 29.0 & 57.8 & 82.0
\\
\hline
Random forest
& Distance
& 5.5 & 15.3 & 42.4 & 78.6 & 94.5
\\
Random forest
& Gaussian
& 5.8 & 13.1 & 34.5 & 68.8 & 88.4
\\
Random forest
& D-GM
& 3.5 & 9.5 & 23.1 & 53.8 & 78.1
\\
\hline
\end{tabular}

\vspace{1ex}

\parbox{0.95\linewidth}{\footnotesize
\emph{Notes.} The sample size is \(n=200\), and the nominal level is \(5\%\). Nuisance quantities are estimated using five-fold cross-fitting. Results are based on \(1{,}000\) Monte Carlo replications and \(999\) multiplier-bootstrap replications.}
\end{table}

%
%
\subsection{Additional results for Design 3}
\label{app:mc-design3}
This subsection reports an additional result for the constant-CATT
experiment of Section~6.3.
\paragraph{Comparison with the non-orthogonal plug-in test.}
Table~\ref{tab:mc-design3-plugin} compares the LRK test with the
corresponding non-orthogonal plug-in test, which uses the raw kernel $K$ in
place of the orthogonalized kernel induced by $\tilde k$, under the same
post-Lasso dictionary nuisance specification as Table 4.
Since the doubly robust CATT score is already orthogonal to the
infinite-dimensional nuisance functions $(\mu_0,e_0)$, this comparison
isolates the effect of orthogonalizing with respect to the
finite-dimensional parameter $\theta_0$. The plug-in test is undersized under
the null in every configuration considered and recovers only a fraction of
the power of the LRK test under the alternative. For example, at $n=500$,
$H_{31}$, $\delta=0.5$, the LRK test rejects $80.0\%$ of the time versus
$41.8\%$ for the plug-in test; at $\delta=0.3$ in the same design, the
plug-in test rejects only $6.0\%$ of the time, compared with $38.9\%$ for
LRK. Failing to orthogonalize with respect to $\theta_0$ therefore produces
both size distortion and a substantial loss of power, even under a
nonparametric first stage that already handles $(\mu_0,e_0)$ correctly.
\begin{table}[!htbp]
\centering
\caption{Design 3: Comparison of the LRK test with the non-orthogonal
plug-in test, under the post-Lasso dictionary nuisance specification of
Table 4. Empirical rejection probabilities are reported
in percent.}
\label{tab:mc-design3-plugin}
\scriptsize
\renewcommand{\arraystretch}{1.10}
\setlength{\tabcolsep}{3pt}
\begin{tabular}{|cc|ccccc|ccccc|}
\hline
&&
\multicolumn{5}{c|}{LRK}
&
\multicolumn{5}{c|}{Plug-in}
\\
\cline{3-12}
$n$ & Design
& $\delta=0$ & $\delta=0.1$ & $\delta=0.2$ & $\delta=0.3$ & $\delta=0.5$
& $\delta=0$ & $\delta=0.1$ & $\delta=0.2$ & $\delta=0.3$ & $\delta=0.5$
\\
\hline
250 & $H_{31}$ & 4.2 & 5.6 & 10.0 & 18.9 & 44.2 & 0.4 & 0.9 & 0.3 & 1.3 & 9.3 \\
250 & $H_{32}$ & 6.1 & 5.2 & 6.6 & 10.5 & 17.6 & 0.3 & 0.2 & 0.7 & 1.1 & 2.0 \\
\hline
500 & $H_{31}$ & 4.9 & 7.8 & 17.5 & 38.9 & 80.0 & 0.3 & 0.4 & 1.6 & 6.0 & 41.8 \\
500 & $H_{32}$ & 4.1 & 6.4 & 9.2 & 19.4 & 42.7 & 0.0 & 0.4 & 0.9 & 1.7 & 11.2 \\
\hline
\end{tabular}
\vspace{1ex}
\parbox{0.96\linewidth}{\footnotesize
\emph{Notes.} LRK is the proposed test based on the orthogonalized kernel
$\tilde k$. Plug-in uses the raw kernel $K$ without the finite-dimensional
correction for $\theta_0$. Both tests use the same cross-fitted post-Lasso
dictionary estimates of $(\mu_0,e_0)$. The nominal level is $5\%$. Results
are based on $1{,}000$ Monte Carlo replications and $999$
multiplier-bootstrap replications.}
\end{table}
%
\subsection{Additional results for the empirical application}
\label{app:mc-empirical}
This subsection reports robustness results for the NSW constant-CATT
application of Section~6.4: alternative nuisance learners and
propensity-score trimming rules, overlap diagnostics, and descriptive
subgroup CATT estimates.
\paragraph{Robustness across nuisance learners and trimming rules.}
Table~\ref{tab:empirical-catt-robustness} reports the LRK test across five
specifications: the Lasso baseline reported in Section~6.4; a
random-forest specification, in which both $\mu_0$ and $e_0$ are estimated
by cross-fitted random forests; a hybrid specification, which pairs the
Lasso propensity score with a random-forest outcome regression; and two
sensitivity checks that vary the propensity-score trimming threshold. In
every specification the null hypothesis of a constant CATT is not rejected
at the $5\%$ level, and the point estimate of the CATT remains economically
similar under the two alternative trimming rules. The random-forest
specification yields a much smaller and less precisely estimated CATT
($\$201$, $p=0.696$); as discussed below, this specification also exhibits
substantially weaker propensity-score overlap, so it is best interpreted
with caution rather than as evidence against the Lasso-based baseline.
\begin{table}[!htbp]
\centering
\caption{NSW application: robustness of the constant CATT test across
nuisance learners and propensity-score trimming rules.}
\label{tab:empirical-catt-robustness}
\small
\renewcommand{\arraystretch}{1.12}
\setlength{\tabcolsep}{3.5pt}
\begin{tabular}{|l|c|c|c|c|c|c|}
\hline
Specification & $\widehat\theta_{\mathrm{CATT}}$ & 95\% CI & Stat.\ $\times10^{-6}$
& Crit.\ $\times10^{-6}$ & $p$-value & ESS \\
\hline
Baseline (Lasso) & 1,106 & [$-$458, 2,670] & 26.39 & 33.90 & 0.124 & 99.0 \\
Random forest & 201 & [$-$1,921, 2,322] & 21.10 & 65.97 & 0.696 & 59.4 \\
Hybrid & 581 & [$-$1,000, 2,161] & 23.95 & 36.57 & 0.197 & 99.0 \\
Trim $[0.01,0.99]$ & 1,108 & [$-$457, 2,672] & 26.39 & 33.90 & 0.124 & 99.0 \\
Trim $[0.05,0.95]$ & 1,098 & [$-$466, 2,662] & 26.44 & 33.89 & 0.124 & 99.6 \\
\hline
\end{tabular}
\vspace{1ex}
\parbox{0.96\linewidth}{\footnotesize
\emph{Notes.} Hybrid uses the Lasso propensity score with a random-forest
outcome regression. Trim $[a,1-a]$ specifications use the Lasso nuisance fits
with propensity scores truncated to $[a,1-a]$; the baseline uses $[0.03,0.97]$.
Confidence intervals for $\widehat\theta_{\mathrm{CATT}}$ use the cross-fitted doubly
robust score standard error. All specifications use the same sample, five
cross-fitting folds, Gaussian kernel with median-distance bandwidth, and 999
multiplier-bootstrap replications. The statistic and critical value are
multiplied by $10^{-6}$.}
\end{table}
\paragraph{Overlap diagnostics.}
Table~\ref{tab:empirical-catt-overlap} reports the distribution of the
estimated propensity score and the extent of trimming under each
specification. Under the Lasso baseline, overlap is reasonable: only
$1.1\%$ of observations are affected by the $[0.03,0.97]$ trimming rule, and
the effective sample size of control units, weighted by the odds
$\hat e(X)/\{1-\hat e(X)\}$, is $99.0$ out of $429$ controls. Overlap is
markedly worse under the random-forest propensity score, which produces
estimated propensities as low as $0.0002$ and as high as $0.93$, trims
$24.3\%$ of the sample, and reduces the control-weight effective sample size
to $59.4$. This is consistent with the well-documented limited overlap
between NSW participants and the PSID comparison group, and it is the
reason we treat the random-forest result in
Table~\ref{tab:empirical-catt-robustness} as less reliable than the Lasso
baseline.
\begin{table}[!htbp]
\centering
\caption{NSW application: propensity-score overlap diagnostics.}
\label{tab:empirical-catt-overlap}
\small
\renewcommand{\arraystretch}{1.12}
\setlength{\tabcolsep}{5pt}
\begin{tabular}{|l|c|c|c|c|c|c|}
\hline
Specification & Min & 5th pct.\ & 95th pct.\ & Max & Trimmed $n$ & Trimmed \% \\
\hline
Baseline (Lasso) & 0.025 & 0.045 & 0.767 & 0.969 & 7 & 1.1 \\
Random forest & 0.000 & 0.007 & 0.838 & 0.926 & 149 & 24.3 \\
Hybrid & 0.025 & 0.045 & 0.767 & 0.969 & 7 & 1.1 \\
Trim $[0.01,0.99]$ & 0.025 & 0.045 & 0.767 & 0.969 & 0 & 0.0 \\
Trim $[0.05,0.95]$ & 0.025 & 0.045 & 0.767 & 0.969 & 45 & 7.3 \\
\hline
\end{tabular}
\vspace{1ex}
\parbox{0.96\linewidth}{\footnotesize
\emph{Notes.} Quantiles are computed from the untrimmed cross-fitted
propensity scores. Hybrid and the two trimming sensitivity checks share the
Lasso propensity score with the baseline and differ only in the truncation
rule or the outcome-regression learner.}
\end{table}

\paragraph{Descriptive subgroup estimates.}
Table~\ref{tab:empirical-catt-subgroup} reports descriptive CATT estimates
for eight covariate subgroups, obtained from the cross-fitted doubly robust
pseudo-outcome underlying the baseline test. These estimates are intended
only to aid interpretation and do not constitute formal tests of
treatment-effect heterogeneity. Seven of the eight confidence intervals
include zero. The exception is the subgroup at or above the treated
median age, with an estimated CATT of $\$2,691$ and a $95\%$ confidence
interval of $[\$351,\$5,030]$. This interval tests whether the effect in
that subgroup is zero, not whether it differs from the effect in another
subgroup or from the overall CATT. It therefore does not contradict the
LRK test's failure to reject the constant-CATT null.

\begin{table}[!htbp]
\centering
\caption{NSW application: descriptive subgroup CATT estimates. These
intervals are purely descriptive and do not correct for testing multiple
subgroups; see the discussion in the text.}
\label{tab:empirical-catt-subgroup}
\small
\renewcommand{\arraystretch}{1.12}
\setlength{\tabcolsep}{6pt}
\begin{tabular}{|l|c|c|c|c|}
\hline
Subgroup & Treated $n$ & CATT & SE & 95\% descriptive CI \\
\hline
Age $<$ treated median & 88 & $-$641 & 1,030 & [$-$2,659, 1,378] \\
Age $\geq$ treated median & 97 & 2,691 & 1,194 & [351, 5,030] \\
Education $<12$ & 131 & 1,109 & 922 & [$-$699, 2,916] \\
Education $\geq12$ & 54 & 1,099 & 1,572 & [$-$1,982, 4,181] \\
Zero 1975 earnings & 111 & 1,553 & 1,035 & [$-$474, 3,581] \\
Positive 1975 earnings & 74 & 436 & 1,257 & [$-$2,028, 2,899] \\
Black & 156 & 1,043 & 913 & [$-$747, 2,833] \\
Non-Black & 29 & 1,447 & 1,326 & [$-$1,151, 4,046] \\
\hline
\end{tabular}
\vspace{1ex}
\parbox{0.96\linewidth}{\footnotesize
\emph{Notes.} CATT estimates use the baseline (Lasso) cross-fitted doubly
robust pseudo-outcome, restricted to each subgroup; standard errors use the
corresponding subgroup-restricted influence function. Confidence intervals are descriptive, use a pointwise (not
simultaneous) $95\%$ level, and are not adjusted for testing eight
subgroups.}
\end{table}

\end{appendix}

\end{document}